\newtheorem{definition}{Definition}
\newtheorem{theorem}{Theorem}
\newtheorem{corollary}{Corollary}
\newtheorem{lemma}{Lemma}
\newtheorem{proposition}{Proposition}
\newtheorem{remark}{Remark}
\newtheorem{conjecture}{Conjecture}
\DeclareMathOperator{\cov}{cov}
\DeclareMathOperator{\vol}{vol}
\DeclareMathOperator{\var}{var}
\DeclareMathOperator{\tr}{tr}
\DeclareMathOperator{\grad}{grad}
\DeclareMathOperator{\hess}{Hess}
\DeclareMathOperator*{\argmax}{arg\,max}
\title{Stability of the Gaussian Stationary Point in  the  Han-Kobayashi Region for Z-Interference Channels}
\author{%
  \IEEEauthorblockN{Jingbo Liu}\\
  \IEEEauthorblockA{Department of Statistics,
   University of Illinois, Urbana-Champaign\\
                                      Email: jingbol@illinois.edu}
}
\date{December 2021}
\begin{document}

\maketitle

\begin{abstract}
The Gaussian stationary point in an inequality motivated by the Z-interference channel was recently conjectured by Costa, Nair, Ng, and Wang to be the global optimizer, which, if true, would imply the optimality of the Han-Kobayashi region for the Gaussian Z-interference channel.
This conjecture was known to be true for some parameter regimes, but the validity for all parameters, although suggested by Gaussian tensorization, was previously open.
In this paper we construct several counterexamples showing that this conjecture may fail in certain regimes:
A simple construction without Hermite polynomial perturbation is proposed,
where distributions far from Gaussian are analytically shown to be better than the Gaussian stationary point.
As alternatives,
we consider perturbation along geodesics under either the $L^2$ or Wasserstein-2 metric,
showing that the Gaussian stationary point is unstable in a certain regime.
Similarity to stability of the Levy-Cramer theorem is discussed.
The stability phase transition point admits a simple characterization in terms of the maximum eigenvalue of the Gaussian maximizer. 
Similar to the Holley-Stroock principle,
we can show that in the stable regime the Gaussian stationary point is optimal in a neighborhood under the $L^{\infty}$-norm with respect to the Gaussian measure.
For protocols with constant power control,
our counterexamples imply Gaussian suboptimality for the Han-Kobayashi region.
Allowing variable power control, we show that the Gaussian optimizers for the Han-Kobayashi region always lie in the stable regime.
We propose an amended conjecture, whose validity would imply Gaussian optimality of the Han-Kobayashi bound in a certain regime.
\end{abstract}
\begin{IEEEkeywords}
Interference channels, 
Gaussian optimality, 
Han-Kobayashi bound, 
optimal transport,
Levy-Cramer theorem,
stability problems.
\end{IEEEkeywords}

\section{Introduction}
Interference channel is a fundamental problem in multiuser information theory, whose single-letter rate region has remained open after decades of efforts.
A two-user interference channel consists of two inputs $X_1$ and $X_2$ and two outputs $Y_1$ and $Y_2$.
In general, both $Y_1$ and $Y_2$ are noisy versions of functions of $(X_1,X_2)$, and the goal is to find the maximum transmission rates $(R_1,R_2)$ achievable by the two users.
For a more precise formulation of the problem, see e.g.
\cite{costa1985gaussian}\cite{elgamal_kim_2011}.
In the special case of Gaussian Z-interference channel, $Y_1$ is a noisy version only of $X_1$ rather than of $(X_1,X_2)$. 
With some transformations, the channel model can be expressed as (see \cite{costa2020structure})
\begin{align}
Y_1&=X_1+Z_1;\label{e_y1}
\\
Y_2&=X_2+X_1+Z_1+Z_2,\label{e_y2}
\end{align}
where $Z_i\sim \mathcal{N}(0,N_iI)$, $i=1,2$, and $X_1$, $X_2$, $Z_1$, $Z_2$ are independent.
The Gaussian Z-interference channel is significant, partly because it is equivalent to several versions of  degraded Gaussian interference channels \cite{costa1985gaussian}.

Due to the sheer volume, we do not attempt to provide here a comprehensive review of the all the important work on the interference channel since the 70's.
Interested readers may refer to the survey \cite{shang2013two}
or the more recent papers \cite{beigi2016some}\cite{costa2020structure}\cite{gohari2021information}.
Below, we cite a few properties of the interference channel relevant to our purpose:
\begin{itemize}
\item The multi-letter Han-Kobayashi (HK) inner bound \cite{han1981new} is tight, whereas the single letter Han-Kobayashi bound is known to be suboptimal for some discrete channels where tensorization fails \cite{nair2015sub}.
\item When restricted to Gaussian inputs, the HK-region is known to satisfy the tensorization property \cite{nair2018invariance}.
\item It is unknown whether Gaussian inputs are optimal for the HK bound  for Gaussian Z-interference channels.
If optimal, this would resolve the longstanding open problem about the rate region of the Gaussian Z-interference channel, because of the two itemized properties above. 
More or less motivated by this, the Gaussian optimality problem spurred a lot of research interests recently, e.g.\
\cite{beigi2016some}\cite{courtade2017strong}\cite{liu2018forward}\cite{costa2020structure}\cite{ng2021gaussian}\cite{gohari2021information}\cite{anantharam2022unifying}.
It was known that Gaussian inputs are optimal for computing the corners of the region \cite{sato1981capacity}\cite{costa1985gaussian}\cite{polyanskiy2016wasserstein}\cite{beigi2016some}\cite{costa2020structure}\cite{gohari2020new}\cite{gohari2021information}\cite{gohari2021outer},
but the full region or even the precise slope at Costa's corner point remains open \cite{gohari2021information}.
\item Gaussian inputs are known to be suboptimal for the symmetric Gaussian interference channel under constant power control \cite{abbe2012coordinate} for some range of parameters, where the argument was based on perturbations using Hermite polynomials. 
Some care needs to be taken in choosing how to perturb and ensuring that the perturbed density is still a probability measure.
\item Constant power control is  suboptimal for the class of schemes using Gaussian inputs, known as ``Gaussian noisebergs'' \cite{costa2011noisebergs} (see also  \cite{shang2013two}).
\end{itemize}
The second itemized above (Gaussian tensorization) seems to suggest that Gaussian inputs are optimal for the Han-Kobayashi region. 
Indeed, a neat rotation-invariance argument usually succeeds in showing Gaussian optimality in entropic inequalities whenever there is a tensorization property \cite{geng2014capacity} (see also similar arguments in the context of functional inequalities \cite{lieb1990gaussian}\cite{liu2018forward}).
However, the tensorization property in \cite{nair2018invariance} is for Gaussian inputs rather than general inputs, therefore Gaussian optimality cannot be settled in this manner.

Since the HK bound is notoriously complicated, it is useful to formulate simpler necessary and sufficient conditions for checking its tightness.
Recently, Costa, Nair, Ng and Wang \cite{costa2020structure} 
(see also the related \cite{beigi2016some}\cite{ng2021gaussian}\cite{gohari2021information}) proposed the following conjecture about linear combination of differential entropies which, if true, would imply Gaussian optimality in the Han-Kobayashi region.

\begin{conjecture}[\cite{costa2020structure}]\label{conj1}
For $u\ge 0$, $N_1,N_2\ge 0$, and $\Sigma_1,A_2\succeq0$, the maximum 
\begin{align}
\max_{
\substack{P_{X_1}P_{X_2}\\ \mathbb{E}[X_2X_2^{\top}]\preceq A_2}
}
\left\{
u h(X_1+X_2+Z_1+Z_2)
+h(X_1+Z_1)\right.
\nonumber\\
-\left.(1+u) h(X_1+Z_1+Z_2)-\tr(\Sigma_1\mathbb{E}[X_1X_1^{\top}])
\right\}\label{e_1}
\end{align}
where 
$Z_1\sim\mathcal{N}(0,N_1I)$, $Z_2\sim \mathcal{N}(0,N_2I)$ and $X_i$, $Z_i$ ($i=1,2$) are random variables in $\mathbb{R}^d$ ($d\ge 1$), is attained by Gaussian $X_1$ and $X_2$.
\end{conjecture}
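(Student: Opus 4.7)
The plan is to attempt the now-classical doubling-and-rotation argument that settles many Gaussian extremality problems. First, I would establish existence of a maximizer: the penalty $-\tr(\Sigma_1 \mathbb{E}[X_1 X_1^{\top}])$ together with the covariance constraint $\mathbb{E}[X_2 X_2^{\top}] \preceq A_2$ should force tightness of any optimizing sequence, while the difference of entropies is upper semicontinuous in a suitable sense once second moments are bounded. Then, given an optimal product $P_{X_1^\star}P_{X_2^\star}$, I would take two iid copies $(X_1, X_2)$ and $(X_1', X_2')$ and pass to the rotated variables $U_i = (X_i + X_i')/\sqrt{2}$, $V_i = (X_i - X_i')/\sqrt{2}$. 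Since $Z_1, Z_2$ are rotation-invariant and the trace-penalty and covariance constraint are quadratic, the functional evaluated on $(U_1,U_2,V_1,V_2)$ should equal twice the original value. If one could then argue subadditively that each of $(U_1, U_2)$ and $(V_1, V_2)$ is individually optimal, iterating the rotation and invoking a CLT would force the optimizer to be Gaussian.

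The first obstruction is exactly the one flagged in the excerpt: the needed subadditivity requires the functional itself to tensorize across the two coordinates of $(U_i, V_i)$, which are \emph{not} independent of each other. For the HK region, this tensorization is known \emph{only} for Gaussian inputs \cite{nair2018invariance}, so it cannot directly be invoked to lift optimality from one dimension to higher dimensions. One could hope to sidestep this via the entropy power inequality applied to the $(1+u)h(X_1 + Z_1 + Z_2)$ term, but the mixed sign structure ($+h, +uh, -(1+u)h$) makes the EPI cut the wrong way along at least one axis.

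As a fallback I would do a direct variational analysis at the Gaussian stationary point $X_i^{\star} \sim \mathcal{N}(0, K_i^\star)$ identified by the KKT conditions. Perturb $P_{X_i^\star} \to (1 + \varepsilon \phi_i) P_{X_i^\star}$ with mean-zero $\phi_i$ in the span of Hermite polynomials, and expand each entropy in $\varepsilon$ using de Bruijn / I-MMSE identities so that $h(X+Z)$ becomes an integral of Fisher information along Gaussian smoothing. Because Hermite polynomials diagonalize the Ornstein--Uhlenbeck semigroup, the second variation should decompose into a sum of scalar quadratic forms indexed by Hermite degree $k$, each depending on $u$, $N_1$, $N_2$, and the eigenvalues of $K_1^\star, K_2^\star$. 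Proving the conjecture would then reduce to checking nonpositivity of these quadratic forms for every $k$.

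The hard part, and where I expect the attempt to break down, is exactly here: for large $k$ the ratios of Gaussian convolution factors like $N_2/(N_1+N_2+\lambda)$ (with $\lambda$ an eigenvalue of $K_1^\star$) enter with competing signs, and for $\lambda$ large enough relative to $N_1, N_2, u$ the scalar form should turn indefinite along some Hermite mode. The corresponding perturbation (suitably exponentiated to remain a probability density) would then constitute a genuine counterexample rather than a quantity to be bounded, which aligns with the paper's thesis that stability at the Gaussian stationary point depends on a phase transition governed by the maximum eigenvalue of the Gaussian candidate; the ``proof plan'' would therefore naturally turn into the stability analysis that occupies the rest of the paper.
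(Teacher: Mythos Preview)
Your instinct is right that the statement, as written, is false, and that the second variation at the Gaussian stationary point in the Hermite basis is the natural diagnostic. This is essentially the paper's approach in Sections~\ref{sec_vertical} and~\ref{sec_horizontal}. But two points in your plan are imprecise in ways that matter.

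First, and most importantly, you do not identify the mechanism that makes the quadratic form indefinite. If you perturb only $P_{X_1}$ and leave $P_{X_2}$ Gaussian, the Gaussian stationary point \emph{is} a maximizer (this follows from Costa's EPI or the doubling trick, as the paper notes after Lemma~\ref{lem2}); so a one-variable Hermite perturbation cannot work. The paper's key observation is that one must perturb $P_{X_1}$ and $P_{X_2}$ \emph{jointly} so that the perturbations cancel in the convolution $P_{X_1}*P_{X_2}$, killing the second-order contribution to $h(X_1+X_2+Z_1+Z_2)$ while leaving the gain in $h(X_1+Z_1)-(1+u)h(X_1+Z_1+Z_2)$ intact. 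Concretely, in the Hessian calculation (Theorem~\ref{thm3}) the optimal choice is $B_\alpha=-A_\alpha$; in the vertical picture (Theorem~\ref{thm_2}) one sets $P_{X_1}^\epsilon=\gamma_K-\epsilon D^3\gamma_{K-\delta}$ and then designs $P_{X_2}^\epsilon$ as a geometric series so that $P_{X_1}^\epsilon*P_{X_2}^\epsilon$ is Gaussian up to order $\epsilon^{J+1}$. Your proposal mentions perturbing both densities but treats the resulting quadratic form as a black box; without the cancellation idea you would not find the instability.

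Second, the instability does not come from ``large $k$'': the first nontrivial Hermite mode, degree three in the density (equivalently $\alpha=2$ in the Wasserstein tangent expansion), is already where the sign flips. The precise threshold is $K>\tfrac{u}{(1+u)^{1/3}-1}$, where $K$ is the variance of the Gaussian optimizer for $X_1$; higher modes only make the inequality easier. Your qualitative picture ``$\lambda$ large enough'' is correct, but the phase transition is governed by the cubic mode, not by asymptotics in the Hermite degree.

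Finally, the paper also gives a quite different, non-perturbative counterexample (Section~\ref{sec2}): take $X_1$ with large variance, expand $h(X_1+\cdot)$ to order $t^{3/2}$ via Lemma~\ref{lem1}, and choose $X_2$ with nonzero third moment so that the $t^{3/2}$ term survives while the $t$ terms cancel. This avoids both the Hermite machinery and the density-nonnegativity issue, at the cost of producing a counterexample far from Gaussian rather than a local instability statement.
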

It is easy to see that if $(P_{X_1},P_{X_2})$ is the maximizer among all Gaussian distributions, then it is also a stationary point (i.e.\ vanishing gradient) in the space of all probability distributions (under suitable metric and regularity condition). 
Indeed, with a perturbation of a Gaussian density, the first-order change of the differential entropy depends only on the change of the covariance (see e.g.~\eqref{e_ent}),
so the stationarity condition is the same as $(P_{X_1},P_{X_2})$ having the optimal covariance.

Indications that the conjecture might be true include the aforementioned Gaussian tensorization property, and the fact that if either $X_1$ or $X_2$ is Gaussian then the best choice of the other random variable is also Gaussian \cite{beigi2016some}\cite{costa2020structure}.
Previously, the conjecture has been proved for large enough $u$, which is enough to evaluate the corner points of the rate region, even though these corner points were previously established by other methods; see \cite{gohari2021information}.
However, it was not known whether it is true for all $u$, so that one can deduce Gaussian optimality for the capacity \emph{region}, which is the central open problem on Z-interference channels.
As a partial converse, \cite{beigi2016some} observed that disproving Conjecture~\ref{conj1} around Costa's corner point would imply suboptimality of the HK region.

In this paper, we show that Conjecture~\ref{conj1} may fail for some range of parameters, by constructing counterexamples with several methods:
\begin{itemize}
\item We choose a set of parameters for which \eqref{e_1} is nonpositive with Gaussian inputs, yet strictly positive for some non-Gaussian inputs (Section~\ref{sec2}).
This simple proof is very  different from the existing method of Hermite polynomial perturbation \cite{abbe2012coordinate}.
In fact, our non-Gaussian inputs have non-vanishing non-Gaussianness in the proof. 
We discuss an interesting analogous phenomenon in a geometric inequality (deferred to Section~\ref{sec_geometric}).
\item Alternatively, a counterexample can be constructed by perturbing the Gaussian  ``vertically'', i.e.\ along a geodesic under the $L^2(\mathbb{R})$ metric (Section~\ref{sec_vertical}).
This amounts to computing the Hessian under $L^2(\mathbb{R})$,
revealing that the Gaussian stationary point is unstable in a certain parameter regime.
Since differentiation commutes with convolution, 
derivatives of the Gaussian density are useful in constructing the direction of perturbation.
which is closely related the Hermite perturbation approach \cite{abbe2012coordinate}.
Previously, \cite{nair2018scalar} commented that attempts of disproving Conjecture~\ref{conj1} using the method of \cite{abbe2012coordinate} were unsuccessful.
We note that in order to ensure nonnegativity of the density after perturbation, 
\cite{abbe2012coordinate} introduced a device of adding an additional even-order Hermite polynomial (because odd-order Hermite polynomials are unbounded in both directions, whereas even-order Hermite polynomials are bounded below); 
here we handle the nonnegativity issue in another way which is simpler for the present problem (Remark~\ref{rem_nonnegative}).
To explain the intuition we observe the similarity to the stability problem in the Levy-Cramer theorem (Remark~\ref{rem_lc}).
\item A counterexample can also be constructed by computing the Hessian under the Wasserstein-2 distance, based on the Otto-Villani framework of viewing the space of probability measures as an infinite dimensional Riemannian manifold with the Wasserstein-2 metric (Section~\ref{sec_horizontal}).
In standard Riemannian geometry, the formula of the Hessian in a coordinate system depends on the second derivative as well as the product of the connection and the first derivative (see e.g.\ \cite[(2.6)]{gao2018semi}),
from which one can read off how the Hessian changes under a change of metric.
At a general point on the manifold, 
semidefiniteness of the Hessian depends on the choice of metric, because of the term of the product of the connection and the first derivative.
However at a stationary point where the first derivative vanishes,
semidefiniteness of the Hessian does not depend on the choice of metric.
In Section~\ref{sec_horizontal} we provide self-consistent calculations confirming that the stability phase transition point of the Hessian is consistent under either the $L^2$ or Wasserstein-2 distance.
In contrast to the vertical perturbation method, with the Wasserstein flow approach we never need to worry about the nonnegativity constraint of the density.
Moreover, recently, first and second order gradient descent methods under the Wasserstein metric have been used in numerically solving optimization problems, 
with convenient particle flow interpretation and computational advantages over the traditional descent methods under the $L^2$ metric \cite{ambrosio2008gradient}\cite{craig2016exponential}\cite{wang2020information}\cite{chewi2020gradient}.
\end{itemize}
For parameters of Conjecture~\ref{conj1} giving rise to negative-definite Hessian at the Gaussian stationary point,
we show that the Gaussian stationary point is stable in the sense that it is a local maximizer under the $L^{\infty}$-norm with respect to the Gaussian reference measure (Section~\ref{sec_local}).
The choice of $L^{\infty}$-norm is in the same spirit as the celebrated Holley-Stroock principle \cite[p1184]{holley}\cite{ledoux2001logarithmic}, which is a method of proving functional inequalities (such as log-Sobolev or Poincare) under a perturbation.
For such a local optimality result, it suffices to bound $h(X_1+X_2+Z_1+Z_2)$ in \eqref{e_1} by the surrogate $\frac1{2}\ln[(2\pi e)^d\det(\cov(X_1+X_2+Z_1+Z_2))]$.
It is well-known in variational calculus that the choice of topology is critical for local optimality, and in fact this is precisely the case here: 
local optimality (with the surrogate objective function) is not true under the $L^{\infty}$-norm with respect to the \emph{Lebesgue measure}; 
a similar phenomenon was previous observed concerning the stability of the log-Sobolev inequality \cite[Section~6]{eldan2020stability}.

From the Hessian computations we see that the stability phase
transition point admits a simple characterization in terms of the maximum eigenvalue of the covariance matrix of the Gaussian maximizer in \eqref{e_1}.
On the other hand, in Section~\ref{sec_structure} we establish a bound on the maximum eigenvalue of the covariance matrix of the  Gaussian maximizer in the HK region with (time-varying) power control.
Comparing the two results,
we see that our counterexamples are not sufficient for disproving Gaussian optimality for the HK region with power control, nor do they prove \cite[Hypothesis~1]{beigi2016some}.
Nevertheless, in Section~\ref{sec_application}, we show that the counterexamples are sufficient for showing Gaussian suboptimality among protocols with constant power control, 
which is a new result.
Previously, Gaussian with constant power control was shown to be suboptimal among protocols allowing power control \cite{costa1985gaussian}.

While stability of stationary points is not sufficient for global optimality, 
it is often the first step towards understanding many phase transition problems (e.g. \cite{de1978stability}).
Our results suggest that towards the grand goal of settling Gaussian optimality in the HK region with power control,
we need to modify Conjecture~\ref{conj1}, or find counterexamples not based on Gaussian perturbation (under the $L^{\infty}$-norm).
For a compact version that seems to represent much of the challenge,  
we may consider the following limiting special case of the optimization problem:
(see Remark~\ref{rem9}, Theorem~\ref{thm_2}, and Lemma~\ref{lem5} for explanations of the connection):
\begin{align}
\sup\left\{h(X+Y)-h(X)-\frac1{2}J(X)\right\}
\label{e2}
\end{align}
where $X$ and $Y$ are independent one-dimensional random variables, $\mathbb{E}[Y]\le L$, and $J(\cdot)$ denotes the Fisher information \cite{cover1999elements}.
The best Gaussian $X$ (stationary point) has variance $\frac{L}{L-1}$ when $L>1$.
Results of this paper imply that the Gaussian stationary point is not optimal for $1<L<1.5$, but local optimal (stable) for $L>1.5$. 
The true Gaussian maximizer in the HK bound with power control concerns the region of $L>2$.
If the Gaussian stationary point in \eqref{e2} is not global optimal for some value of $L>2$, we can show that the HK inner bound is not tight.

Finally, in Section~\ref{sec_discussion} we propose an amended conjecture that takes into account of the stability regime of parameters (more precisely, it concerns the regime where power control is not necessary for the Gaussian HK bound).
The new conjecture is nontrivial, since either proving or disproving it would imply the solution to some other open questions in the literature.

{\bf Notation.} We denote the differential entropy \cite{cover1999elements} of a random variable $X\sim P_X$ by $h(X)$ or $h(P_X)$.
The unit of entropy is nat.
$\gamma_K$ denotes the centered Gaussian measure with covariance matrix $K$.
When used as a density function, $\gamma_K$ denotes the density with respect to the Lebesgue measure, unless otherwise stated.
The norms with respect to a reference measure are computed using the density with respect to the reference measure; for example, $\|\gamma_K\|_{L^2(P)}^2=\int (\frac{d\gamma_K}{dP})^2 dP$.
We use standard Landau notations such as $\Theta()$, $\omega()$, and $O()$.

\section{Gaussian Suboptimality without Gaussian Perturbation}
\label{sec2}
In this section we construct a counterexample for Conjecture~\ref{conj1}.
We first note the following  about the asymptotic expansion of the differential entropy of convolution, which is similar to the calculations in the de Bruijn's inequality (see e.g.\ \cite{cover1999elements}).
\begin{lemma}\label{lem1}
Let $p$ and $q$ be smooth density functions on $\mathbb{R}$. 
Denote the moments of $q$ by
$m_i:=\int x^iq(x)dx=0$, $i=1,2,\dots$, and suppose that $m_1=0$.
Suppose that there are positive constant $c$ and $C$ such that for any $x\in\mathbb{R}$,
\begin{align}
q(x)&\le Ce^{-cx^2},
\\
p(x)&>c\,e^{-C|x|},
\\
\max\{|p'(x)|,|p'''(x)|,|p''''(x)|\}&<Cp(x).\label{e4}
\end{align}
Let $p_t$ be the convolution of $p$ and $t^{-1/2}q(\frac{\cdot}{\sqrt{t}})$.
Then for $t>0$ small, we have the following estimate for the differential entropy:
\begin{align}
&\quad h(p_t)-h(p)
\nonumber\\
&=m_2(-\frac1{2}\int p''\ln p)t
+m_3(-\frac1{6}\int p'''\ln p)t^{3/2}+O(t^2)
\label{e6}
\end{align}
where $O()$ hides constants that may depend on $p$ and $q$.
\end{lemma}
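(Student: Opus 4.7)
The plan is to carry out two Taylor expansions in succession---first of the perturbed density $p_t$ in powers of $\sqrt{t}$, then of the entropy integrand $p_t\ln p_t$ about $p\ln p$---and to verify that beyond the two announced terms every contribution is $O(t^2)$.

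First I would handle the density. Substitute $z=y/\sqrt{t}$ to write $p_t(x)=\int p(x-\sqrt{t}z)\,q(z)\,dz$, Taylor-expand $p(x-\sqrt{t}z)$ about $x$ through fourth order with Lagrange remainder in terms of $p''''$, and integrate against $q(z)\,dz$. The linear term in $\sqrt{t}$ vanishes by the hypothesis $m_1=0$, giving
\begin{align*}
\epsilon(x) := p_t(x) - p(x) = \frac{m_2 t}{2}\,p''(x) - \frac{m_3 t^{3/2}}{6}\,p'''(x) + R(x),
\end{align*}
where $R(x)$ involves $\int q(z)\, z^4 p''''(\xi(x,z))\,dz$. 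Using that $|p'|\le Cp$ makes $\ln p$ Lipschitz, so $p(\xi)\le p(x)e^{C\sqrt{t}|z|}$, together with $|p''''|\le Cp$ and the Gaussian tail of $q$, I would deduce $|R(x)|\le C_1 t^2 p(x)$ uniformly in $x$ for small $t$.

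Next I would expand the entropy. Writing $\ln p_t=\ln p+\ln(1+\epsilon/p)$ and using $(1+u)\ln(1+u)=u+u^2/2-u^3/6+O(u^4)$ (legitimate because $\epsilon/p$ is small in the relevant sense), multiplying by $p$ yields
\begin{align*}
p_t\ln p_t - p\ln p = \epsilon\ln p + \epsilon + \frac{\epsilon^2}{2p} + O\!\left(\frac{|\epsilon|^3}{p^2}\right).
\end{align*}
Integration kills $\int\epsilon\,dx$ by mass conservation. Substituting the density expansion into $-\int\epsilon\ln p\,dx$ and collecting powers of $t$ produces precisely the two coefficients displayed in~\eqref{e6}, contributed by $\int p''\ln p\,dx$ and $\int p'''\ln p\,dx$ (both finite by integration by parts and the bound $|p'|\le Cp$: e.g.\ $\int p''\ln p = -\int (p')^2/p$), while $\int R\ln p\,dx$, $\int\epsilon^2/p\,dx$, and the higher-order remainder are each $O(t^2)$ or smaller.

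The main obstacle is not the formal expansion but the uniform control of error terms, which is where each of the three conditions in \eqref{e4} plays a distinct role: the log-Lipschitz bound $|p'|\le Cp$ transfers derivative estimates at the Lagrange point $\xi(x,z)$ back to $x$ and makes Fisher-type integrals $\int (p')^2/p$ finite; the exponential lower bound $p>c\,e^{-C|x|}$ yields $|\ln p|\le C|x|+C'$, so that integrals like $\int R\cdot\ln p\,dx$ are absolutely convergent; and the Gaussian tail of $q$ supplies the moment bounds $\int |z|^k e^{C\sqrt{t}|z|}q(z)\,dz<\infty$ needed after the derivative transfer. With these integrability facts in hand, collecting terms of like order in $t$ produces \eqref{e6}.
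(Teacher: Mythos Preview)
Your proposal is correct and follows essentially the same route as the paper: a fourth-order Taylor expansion of $p(x-\sqrt{t}z)$ with Lagrange remainder, the uniform bound $|R(x)|\le C_1t^2p(x)$ obtained via $|p'|\le Cp$ (log-Lipschitz transfer of the evaluation point) and the Gaussian tail of $q$, then a second Taylor expansion of $x\ln x$ about $p$, integration using $\int\epsilon=0$, and term-by-term verification that all residuals are $O(t^2)$. The only cosmetic difference is that the paper stops the $x\ln x$ expansion at second order with the Lagrange remainder $\Delta_t^2/(2\xi)$ and bounds $\xi\ge p_t\wedge p\ge\tfrac12 p$, whereas you push to third order; since $|\epsilon|/p=O(t)$ uniformly, both choices give the same $O(t^2)$ control. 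One caution: the paper writes $p_t(y)=\int p(\sqrt{t}u+y)q(u)\,du$ rather than $\int p(y-\sqrt{t}z)q(z)\,dz$, so when you ``produce precisely the two coefficients displayed in~\eqref{e6}'' make sure your convolution convention matches the paper's, or the sign on the $m_3$ term will flip.
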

\begin{proof}
Using the Taylor expansion, we have
\begin{align}
p_t(y)
&=\int p(\sqrt{t}u+y)q(u)du
\\
&=\int\left[p(y)+\sqrt{t}up'(y)+\frac1{2}tu^2p''(y)\right.
\nonumber\\
&\quad \left.+\frac1{6}t^{3/2}u^3p'''(y)+\frac1{24}t^2u^4p''''(\xi_{y,u})\right]q(u)du
\end{align}
where $\xi_{y,u}$ is a value between $y$ and $y+\sqrt{t}u$.
Therefore for any $t<1$,
\begin{align}
&\quad \left|p_t(y)-p(y)-\frac1{2}tm_2p''(y)-\frac1{6}t^{3/2}m_3p'''(y)\right|
\nonumber\\
&=\frac1{24}t^2\left|\int u^4p''''(\xi_{y,u})q(u)du
\right|.
\label{e8}
\\
&\le \frac{t^2}{24}p(y)\int u^4
\left(\sup_{v\colon |v-y|\le \sqrt{t}|u|}\frac{|p''''(v)|}{p(y)}\right)q(u)du
\\
&\le Ct^2p(y)
\int u^4e^{C\sqrt{t}|u|}q(u)du
\label{e7}
\\
&\le C^2t^2p(y)\int 
u^4e^{C|u|-cu^2}du
\label{e9}
\\
&\le c_1t^2p(y)
\label{e_8}
\end{align}
where here and below $c_1,c_2,\dots$ are positive constants depending on $p$ and $q$ but not on $t$.
To see \eqref{e7},
note that \eqref{e4} implies for $|(\ln p(x))'|<C$ and hence
\begin{align}
\sup_{v\colon |v-y|
\le \sqrt{t}|u|}\frac{|p''''(v)|}{p(y)}
&\le
\sup_{v\in\mathbb{R}}\frac{|p''''(v)|}{p(v)}\cdot
\sup_{v\colon |v-y|
\le \sqrt{t}|u|}\frac{p(v)}{p(y)}
\\
&\le Ce^{C\sqrt{t}|u|}.
\end{align} 
Next define $\Delta_t(y):=p_t(y)-p(y)$.
By the Taylor expansion of the function $x\mapsto x\ln x$ around $p(y)$, we have
\begin{align}
&\quad\left|p_t(y)\ln p_t(y)-p(y)\ln p(y)
-\Delta_t(y)-\Delta_t(y)\ln p(y)
\right|
\nonumber\\
&\le \frac{\Delta_t^2(t)}{2\,\xi}
\\
&\le 
\frac1{2}\cdot\frac{\Delta_t^2(y)}{p_t(y)\wedge p(y)}
\label{e16}
\end{align}
where $\xi$ denotes a number between $p(y)$ and $p_t(y)$ (from the Lagrange remainder) and $\wedge$ denotes the minimum value.
Noting $\int \Delta_t=0$, we have
\begin{align}
\left|\int p_t\ln p_t
-\int p\ln p-\int \Delta_t\ln p\right|
\le \int\frac{\Delta_t^2}{p_t\wedge p}.
\label{e10}
\end{align}
By \eqref{e_8}, for $t<1$ we have
\begin{align}
|\Delta_t(t)|&\le 
\frac1{2}tm_2|p''(y)|+\frac1{6}t^{3/2}m_3|p'''(y)|+c_1t^2p(y)
\nonumber\\
&\le c_2tp(y).
\end{align}
Thus for $t<\frac1{2c_2}\wedge1$ we have $\frac{t^{-2}\Delta_t^2(y)}{p_t(y)\wedge p(y)}\le \frac{t^{-2}\Delta_t^2(y)}{\frac1{2}p(y)}\le  2c_2^2p(y)$ and
\begin{align}
\int\frac{\Delta_t^2}{p_t\wedge p}
\le 2c_2^2t^2.
\label{e18}
\end{align}
Moreover, by multiplying $\ln p(y)$ to \eqref{e_8} and integrating, we find that for $t< \frac1{2c_2}\wedge1$,
\begin{align}
\left|\int \Delta_t\ln p-\frac1{2}tm_2\int p''\ln p
-\frac1{6}t^{3/2}m_3\int p'''\ln p\right|
\le 
c_3t^2.
\label{e19}
\end{align}
Comparing \eqref{e10}, \eqref{e18} and \eqref{e19} we establish \eqref{e6}.

\end{proof}

\begin{lemma}\label{lem2}
Suppose that $Z_2\sim \mathcal{N}_2(0,N_2)$, $N_2>0$. Then the supremum of 
\begin{align}
h(X_1+Z_2+X_2)+h(X_1)-2h(X_1+Z_2)
\label{e17}
\end{align}
over the distribution of $(X_1,X_2)$, where $X_1,X_2,Z_2$ are independent and $\mathbb{E}[X_2^2]=N_2$, 
is strictly positive.
Moreover if $X_2$ is restricted to be Gaussian then the supremum equals 0.
\end{lemma}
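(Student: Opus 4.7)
The plan is to treat the two claims separately, using concavity of differential entropy under Gaussian smoothing for the upper bound in the Gaussian-$X_2$ case, and invoking Lemma~\ref{lem1} to extract an asymptotic expansion for the strict positivity.

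For the Gaussian-$X_2$ assertion, I will set $\phi(s):=h(X_1+G_s)$ with $G_s\sim\mathcal{N}(0,s)$ independent of $X_1$. By de Bruijn's identity $\phi'(s)=\tfrac{1}{2}J(X_1+G_s)$, and since Fisher information is nonincreasing under convolution with an independent Gaussian, $\phi$ is concave. When $X_2\sim\mathcal{N}(0,N_2)$, one has $Z_2+X_2\sim\mathcal{N}(0,2N_2)$, and midpoint concavity $\phi(0)+\phi(2N_2)\le 2\phi(N_2)$ is exactly the nonpositivity of \eqref{e17}. To show the supremum equals $0$, I will take $X_1\sim\mathcal{N}(0,v)$; then \eqref{e17} equals $\tfrac{1}{2}\ln\tfrac{v(v+2N_2)}{(v+N_2)^2}$, which tends to $0$ as $v\to\infty$.

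For the strict-positivity claim, I will use scaling invariance of differential entropy. Replacing $X_1$ by $\lambda X_1$ and setting $t:=1/\lambda^2$, the identity $h(\lambda Y)=h(Y)+\ln\lambda$ causes the three $\ln\lambda$ contributions to cancel, so \eqref{e17} rewrites as
\[
h\bigl(X_1+\sqrt{t}(Z_2+X_2)\bigr)+h(X_1)-2h\bigl(X_1+\sqrt{t}Z_2\bigr).
\]
Both noise terms are now small perturbations of $X_1$, so Lemma~\ref{lem1} applies. I will apply it with $p$ the density of $X_1$ (WLOG $X_2$ is centered, by translation invariance of $h$) and with $q$ taken successively as the densities of $Z_2+X_2$ and of $Z_2$. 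The first-order contributions $m_2(q)\cdot\tfrac{1}{2}J(X_1)\cdot t$ cancel exactly because $m_2(Z_2+X_2)=2N_2=2\,m_2(Z_2)$. The surviving leading term is
\[
-\tfrac{1}{6}\Bigl(\int p'''\ln p\Bigr)\mathbb{E}[X_2^3]\cdot t^{3/2}+O(t^2),
\]
using $m_3(Z_2)=0$ and $m_3(Z_2+X_2)=\mathbb{E}[X_2^3]$ (since $Z_2$ is a centered Gaussian independent of the centered $X_2$).

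To finish, I will choose any smooth asymmetric $p$ meeting the regularity hypotheses of Lemma~\ref{lem1} with $\int p'''\ln p\neq 0$ (a skewed Gaussian mixture is a natural candidate), and any centered $X_2$ with $\mathbb{E}[X_2^2]=N_2$ whose third moment carries the sign that makes the displayed coefficient strictly positive (e.g., a recentered and rescaled exponential). For $t$ small enough the $t^{3/2}$ term dominates the $O(t^2)$ error and \eqref{e17} becomes strictly positive. The main obstacle I foresee is clerical: verifying the regularity hypotheses of Lemma~\ref{lem1} and nonvanishing of $\int p'''\ln p$ for an explicit $p$. The structural heart of the argument is that the first-order cancellation forces the leading sign to be carried by the skewness $\mathbb{E}[X_2^3]$, which vanishes precisely when $X_2$ is Gaussian.
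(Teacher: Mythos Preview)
Your overall strategy matches the paper's. For strict positivity, both you and the paper rescale $X_1$ and apply Lemma~\ref{lem1} so that the first-order $t$ terms cancel (because $\mathbb{E}[(Z_2+X_2)^2]=2N_2$) and the surviving $t^{3/2}$ coefficient is carried by $\mathbb{E}[X_2^3]$; one then chooses the signs so this coefficient is positive. For the Gaussian-$X_2$ upper bound you invoke concavity of $s\mapsto h(X_1+G_s)$ via de~Bruijn and Fisher-information monotonicity, while the paper instead uses the identity $I(X_1+Z_2+X_2;X_2)=I(X_1+Z_2+X_2;Z_2)$ (valid because $X_2$ and $Z_2$ are i.i.d.\ Gaussian) followed by data processing. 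Both routes are valid and of comparable length; yours is perhaps more transparent, while the paper's avoids any smoothness issue at $s=0$.

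One genuine caution on the ``clerical'' step: your specific candidates violate the hypotheses of Lemma~\ref{lem1}. The bound $|p'(x)|<Cp(x)$ forces $|(\ln p)'|\le C$, so $p$ can decay at most exponentially; a Gaussian mixture decays like $e^{-cx^2}$ and therefore fails both this and the lower bound $p(x)>ce^{-C|x|}$. Dually, the requirement $q(x)\le Ce^{-cx^2}$ demands sub-Gaussian tails for the noise density, but if $X_2$ is a recentered exponential then the density of $Z_2+X_2$ inherits exponential (not Gaussian) tails. You need to swap in a $p$ with genuinely exponential tails and bounded log-derivatives (for instance an asymmetric density built from hyperbolic secants, or a smoothed asymmetric Laplace) and a sub-Gaussian $X_2$ with nonzero third moment (a two-point mass or any bounded skewed law suffices, since convolution with $Z_2$ then yields a finite Gaussian mixture). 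With these substitutions the structural argument goes through unchanged.
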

\begin{proof}
First, if $X_2$ is Gaussian, then
\begin{align}
&\quad h(X_1+Z_2+X_2)-h(X_1+Z_2)
\nonumber\\
&=I(X_1+Z_2+X_2;X_2)
\\
&= I(X_1+Z_2+X_2;Z_2)
\\
&\le I(X_1+Z_2;Z_2)
\\
&=h(X_1+Z_2)-h(X_1),
\end{align}
implying that the supremum is non-positive. Further, taking $X_1$ to be Gaussian with a large variance we see that the supremum equals 0.

Now consider the case where $X_2$ is not restricted to be Gaussian.
Choose $p$ and $q$ satisfying the conditions in Lemma~\ref{lem1} and such that
\begin{align}
\frac1{6}\int
p'''\ln p&>0;
\\
m_1&=0;
\\
m_2&>0;
\\
m_3&<0.
\end{align}
It should be clear that the supremum of \eqref{e17} is independent of the choice of $N_2$ (by considering a scaling of the random variables).
For simplicity let us assume that $N_2=m_2$ below.
Let $X_2\sim q$ and let $X_1$ be such that $\sqrt{t}X_1\sim p$ (in particular, observe that the distribution of $X_1$ depends on $t$, but the distribution of $\sqrt{t}X_1$ does not).
Note that 
\begin{align}
\mathbb{E}[(Z_2+X_2)^2]&=N_2+m_2=2m_2;
\\
\mathbb{E}[(Z_2+X_2)^3]&=\mathbb{E}[X_2^3]=m_3.
\end{align}
Then
\begin{align}
&\quad h(X_1+Z_2+X_2)-h(X_1)
\nonumber\\
&=
h(\sqrt{t}X_1+\sqrt{t}Z_2+\sqrt{t}X_2)-h(\sqrt{t}X_1)
\\
&=2m_2(-\frac1{2}\int p''\ln p)t
+m_3(-\frac1{6}\int p'''\ln p)t^{3/2}
\nonumber\\
&\quad+O(t^2);
\label{e22}
\end{align}
Similarly, since $Z_2$ is symmetric whose third moment equals zero, we have
\begin{align}
&\quad h(X_1+Z_2)-h(X_1)
\nonumber\\
&=h(\sqrt{t}X_1+\sqrt{t}Z_2)-h(\sqrt{t}X_1)
\\
&=
m_2(-\frac1{2}\int p''\ln p)t
+O(t^2).
\label{e23}
\end{align}
Then we note that \eqref{e17} equals \eqref{e22} minus twice of \eqref{e23}, which is $m_3(-\frac1{6}\int p'''\ln p)t^{3/2}+O(t^2)$.
Thus the claim about positivity of the supremum follows when $t$ is sufficiently small.
\end{proof}
\begin{remark}
Although not necessary for establishing a counterexample to Conjecture~\ref{conj1}, let us comment that $X_2$ being Gaussian implies that Gaussian $X_1$ is optimal.
Indeed, this has been shown using the doubling trick \cite{costa2020structure}, or, even simpler, using Costa's entropy power inequality \cite{beigi2016some}.
\end{remark}

Clearly Lemma~\ref{lem2} provides a counterexample for Conjecture~\ref{conj1} with
$u=1$, $N_1=0$, $N_2=A_2$, and $\Sigma_1=0$. 
While strictly positive $N_1$ sometimes plays a role in analysis related to interference channels, 
for example in the proof of Wasserstein continuity of smoothed entropy, 
it is not essential in the proof of Gaussian suboptimality.
Below, we show that a counterexample exists for $N_1$ and $\Sigma_1$ strictly positive as well:
\begin{theorem}\label{thm_continuity}
Let $k=u=N_1=1$. There exists some $N_2=A_2>0$ and $\Sigma_1>0$ for which \eqref{e_1} is strictly positive yet the supremum restricted to Gaussian $X_2$ is not positive.
\end{theorem}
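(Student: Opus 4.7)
The plan is to lift the counterexample from Lemma~\ref{lem2} (which addresses the degenerate case $N_1=0$, $\Sigma_1=0$) to the present setting $N_1=1$, $\Sigma_1>0$ by a scaling plus continuity argument: inflating the inputs $(X_1,X_2,Z_2)$ by a large factor $\alpha$ makes the fixed unit-variance $Z_1$ negligible, while the penalty $\Sigma_1\mathbb{E}[X_1^2]$ is absorbed by taking $\Sigma_1$ small of order $\alpha^{-2}$.

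Concretely, apply Lemma~\ref{lem2} to obtain independent $(X_1^*,X_2^*)$ with $\mathbb{E}[(X_2^*)^2]=N_2^*$ and
\begin{align*}
\delta_0:=h(X_1^*+X_2^*+Z_2^*)+h(X_1^*)-2h(X_1^*+Z_2^*)>0,
\end{align*}
where $Z_2^*\sim\mathcal{N}(0,N_2^*)$. Since the $X_1^*$ arising in the proof of Lemma~\ref{lem2} is a rescaling of the smooth density $p$ from Lemma~\ref{lem1}, it has a smooth everywhere-positive density with finite Fisher information. For $\alpha>1$, set $\tilde X_i:=\alpha X_i^*$, $\tilde Z_2:=\alpha Z_2^*\sim\mathcal{N}(0,\alpha^2 N_2^*)$, and declare $N_2=A_2:=\alpha^2 N_2^*$, so that $\mathbb{E}[\tilde X_2^2]=A_2$ saturates the constraint. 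Because the entropy coefficients $(u,1,-(1+u))=(1,1,-2)$ sum to zero, the additive $\ln\alpha$ shifts from scaling cancel in the linear combination, yielding
\begin{align*}
h(\tilde X_1+\tilde X_2+\tilde Z_2)+h(\tilde X_1)-2h(\tilde X_1+\tilde Z_2)=\delta_0.
\end{align*}

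To reinstate $Z_1\sim\mathcal{N}(0,1)$, note that for any $Y$ independent of $Z_1$, $h(\alpha Y+Z_1)-h(\alpha Y)=h(Y+Z_1/\alpha)-h(Y)$, and the de Bruijn bound $0\le h(Y+\sqrt{s}Z)-h(Y)\le \tfrac{s}{2}J(Y)$ (valid because Fisher information is monotone under Gaussian convolution) contributes $O(1/\alpha^2)$ in each of the three entropy differences that involve $Z_1$. Hence the first three terms of \eqref{e_1} converge to $\delta_0$ as $\alpha\to\infty$, and exceed $\delta_0/2$ for all $\alpha$ large enough. Choosing $\Sigma_1:=\delta_0/(4\alpha^2\mathbb{E}[(X_1^*)^2])>0$ makes $\Sigma_1\mathbb{E}[\tilde X_1^2]=\delta_0/4$, so \eqref{e_1} is at least $\delta_0/4>0$, giving strict positivity.

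For the Gaussian restricted supremum, take $X_1\sim\mathcal{N}(0,K_1)$, $X_2\sim\mathcal{N}(0,K_2)$ with $K_1\ge 0$ and $K_2\le A_2=N_2$. The objective equals
\begin{align*}
\tfrac12\ln\frac{(K_1+K_2+N_1+N_2)(K_1+N_1)}{(K_1+N_1+N_2)^2}-\Sigma_1 K_1,
\end{align*}
and AM--GM combined with $K_2\le N_2$ yields $(K_1+N_1)(K_1+K_2+N_1+N_2)\le(K_1+N_1+N_2)^2$, so the logarithm is $\le 0$ and the whole expression is non-positive. The only delicate ingredient in the plan is the entropy-continuity bound used to remove $Z_1$; this is standard once finite Fisher information of $X_1^*$ is noted, a property automatic from the Lemma~\ref{lem2} construction.
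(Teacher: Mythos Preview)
Your argument is correct and is essentially the paper's own proof: both lift Lemma~\ref{lem2} to the case $N_1=1,\ \Sigma_1>0$ by a scaling/continuity argument (the paper first takes $N_1,\Sigma_1$ small and then rescales to $N_1'=1$, which is exactly your inflation by $\alpha$ viewed from the other end), and both handle the Gaussian-restricted supremum via the elementary inequality $(K_1+N_1)(K_1+K_2+N_1+N_2)\le (K_1+N_1+N_2)^2$ when $K_2\le N_2$. Your use of the de~Bruijn/Fisher bound to control the effect of inserting $Z_1$ is a clean, self-contained replacement for the entropy-continuity citation the paper invokes, but the overall route is the same.
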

\begin{proof}
We first provide a proof using a continuity argument. By replacing $X_1$ in Lemma~\ref{lem2} with $X_1+Z_1$, $Z_1\sim \mathcal{N}(0,N_1)$, and by continuity of the differential entropy in $N_1$ (see e.g.\ \cite[Lemma A.3.]{anantharam2022unifying}),
we see that the supremum of
\begin{align}
h(X_1+Z_1+Z_2+X_2)+h(X_1+Z_1)-2h(X_1+Z_1+Z_2)-\Sigma_1\mathbb{E}[X_1^2]
\label{e25}
\end{align}
over the distribution of $(X_1,X_2)$, where $X_1,X_2,Z_2$ are independent and $\mathbb{E}[X_2^2]\le N_2=\var(Z_2)$, 
is strictly positive provided that $N_1$ and $\Sigma_1$ are sufficiently small.
Also, it is clear that the supremum in \eqref{e25} does not change under the transformations $N_1'=1$, $A_2'=N_2'=\frac{N_2}{N_1}$, and $\Sigma_1'=\Sigma_1N_1$ (to see this consider $X_1'=\frac1{\sqrt{N_1}}X_1$ and $X_2'=\frac1{\sqrt{N_1}}X_2$).
Therefore, equivalently, fixing $N_1=1$, we can always find $N_2=A_2>0$ and $\Sigma_1>0$ so that the supremum is positive.
In the meantime, if $X_2$ is restricted to be Gaussian then the supremum does not exceed 0 as in Lemma~\ref{lem2}.

An alternative argument for positivity of \eqref{e25} was suggested by Chandra Nair:
for $X_1\sim p$ as in Lemma~\ref{lem1}, we can pick a finite $N_1>0$ and verify that $X_1+Z_1$ still satisfies the assumptions on $p$ in Lemma~\ref{lem1}.
Then treating $X_1+Z_1$ as the new $X_1$, the proof of Lemma~\ref{lem2} still shows that \eqref{e25} is positive for some $\Sigma_1>0$, $N_2=A_2>0$ and $X_2$.  
\end{proof}

\section{Gaussian Suboptimality via Vertical Perturbation}\label{sec_vertical}
In this section we provide an alternative construction of counterexample to Conjecture~\ref{conj1} via ``vertical perturbation''\footnote{In the literature, the name ``vertical perturbation'' usually refers to a perturbation of the probability density, instead of perturbing along a Wasserstein geodesic (c.f.\ Section~\ref{sec_horizontal}).} of Gaussian. 
First, let us observe the following about the differential entropy under the vertical perturbation.

\begin{proposition}
Assume that $p$ is probability density function on $\mathbb{R}^d$, and $r$ is a measurable function on $\mathbb{R}^d$ satisfying $\sup_{x\in\mathbb{R}^d}|\frac{r(x)}{p(x)}|<\infty$ and $\int r=0$.
Then as $\epsilon\to0$, we have
\begin{align}
\int (p+\epsilon r)
\ln(p+\epsilon r)
=\int p\ln p
+\epsilon\int r\ln p
+\frac{\epsilon^2}{2}\int\frac{r^2}{p}+O(\epsilon^3).
\label{e_ent}
\end{align}
\end{proposition}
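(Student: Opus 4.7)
The plan is to Taylor-expand the integrand pointwise in $\epsilon$ and integrate. Let $\phi(x) := x\ln x$, so $\phi'(x) = \ln x + 1$, $\phi''(x) = 1/x$, and $\phi'''(x) = -1/x^2$. Set $M := \sup_{x\in\mathbb{R}^d}|r(x)/p(x)|$, which is finite by hypothesis. The pointwise bound $|r|\le Mp$ immediately gives $\int|r| \le M$ and $\int r^2/p \le M^2$, so the $O(\epsilon^2)$ term in \eqref{e_ent} is finite. For $|\epsilon|<1/(2M)$ we have $p+\epsilon r \in [p/2,\,3p/2]$ pointwise, so $\phi(p+\epsilon r)$ is well-defined, and Taylor's theorem with Lagrange remainder around $p(y)$ yields
\[
\phi(p+\epsilon r) \;=\; \phi(p) + \epsilon r(\ln p + 1) + \frac{\epsilon^2 r^2}{2p} + \frac{\epsilon^3 r^3}{6}\,\phi'''(\xi)
\]
at each $y$, with $\xi=\xi(y)\in[p(y)/2,\,3p(y)/2]$.

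For the cubic remainder, $|\phi'''(\xi)| = \xi^{-2} \le 4/p^2$, so the pointwise remainder is bounded by $\tfrac{2}{3}|\epsilon|^3 |r|^3 / p^2 \le \tfrac{2}{3}|\epsilon|^3 M^2 |r|$, using $|r|\le Mp$ twice. Integrating over $\mathbb{R}^d$ gives a total remainder of absolute value at most $\tfrac{2}{3}|\epsilon|^3 M^3$, which is the desired $O(\epsilon^3)$ bound. Then I would integrate the leading terms term by term: $\int \phi(p) = \int p\ln p$; the first-order term reduces to $\epsilon\int r\ln p$ after cancelling the constant $1$ via $\int r = 0$; and the quadratic term yields $\tfrac{\epsilon^2}{2}\int r^2/p$ directly. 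Combining these produces \eqref{e_ent}.

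The only substantive point is ensuring that $p+\epsilon r$ stays uniformly bounded away from zero relative to $p$, so that Taylor's theorem applies pointwise and the cubic remainder is dominated by the integrable majorant $y\mapsto |r(y)|$; both are immediate from $\sup|r/p|<\infty$. The integrals $\int p\ln p$ and $\int r\ln p$ are understood to be finite, a mild regularity assumption tacit in the statement (and easily verified in the applications to Gaussian $p$ and bounded perturbation $r/p$ that appear later in the paper). No essential difficulty is expected beyond this bookkeeping.
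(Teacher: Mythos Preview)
Your proof is correct and follows essentially the same approach as the paper: a pointwise Taylor expansion of $x\mapsto x\ln x$ around $p(y)$, with the remainder controlled via the uniform bound $\sup|r/p|<\infty$. The only cosmetic difference is that the paper expands to second order with Lagrange remainder and then estimates $\bigl|\tfrac{r^2}{\xi}-\tfrac{r^2}{p}\bigr|$ separately, whereas you expand directly to third order; both routes yield the same $O(\epsilon^3)$ bound with the same ingredients.
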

\begin{proof}
Using the Taylor expansion of the function $t\mapsto t\ln t$ we see that for $\epsilon<\inf_{x\in\mathbb{R}^d}|\frac{p(x)}{r(x)}|$,
\begin{align}
(p+\epsilon r)\ln (p+\epsilon r)=p\ln p+\epsilon r+\epsilon r\ln p+\frac{\epsilon^2r^2}{2\xi}
\label{eq_41}
\end{align}
where the function $\xi(x)$ is between $p(x)$ and $p(x)+\epsilon r(x)$.
For $\epsilon<\frac1{2}\inf_{x\in\mathbb{R}^d}|\frac{p(x)}{r(x)}|$ we have
\begin{align}
\left|\frac{r^2}{\xi}-\frac{r^2}{p}\right|
=\frac{r^2|p-\xi|}{\xi p}\le \frac{2\epsilon r^3}{p^2}.
\end{align}
The claim then follows by integrating \eqref{eq_41}.
\end{proof}
The key result in the construction of counterexample is the following:
\begin{theorem}\label{thm_2}
For any $u>0$, $L>0$, and $T>0$,
let $K>\frac{u}{(1+u)^{1/3}-1}$.
There exist $P_{X_1}^{\epsilon}$ and $P_{X_2}^{\epsilon}$ (indexed by $\epsilon$ in a neighborhood of 0) such that
$h(P_{X_1}^{\epsilon}*\gamma_u*P_{X_2}^{\epsilon})=h(\gamma_{K+u+L})+O(\epsilon^{2T})$ and $h(P_{X_1}^{\epsilon})-(u+1)h(P_{X_1}^{\epsilon}*\gamma_u)
\ge h(\gamma_{K})-(u+1)h(\gamma_{K+u})+A\epsilon^2$ for some $A>0$, as $\varepsilon\to0$.
\end{theorem}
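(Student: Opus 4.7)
The plan is to perturb $P_{X_1}^\epsilon = \gamma_K + \epsilon\gamma_K^{(3)}$ along the third derivative of $\gamma_K$ and cancel the induced three-fold perturbation by taking $P_{X_2}^\epsilon = \sum_{j=0}^{2T-1}(-\epsilon)^j\gamma_L^{(3j)}$, exploiting that differentiation commutes with Gaussian convolution. Applying the preceding Proposition and diagonalizing the resulting quadratic form in the probabilist's Hermite basis $\{H_n(\cdot/\sqrt{K})\}$, the Hessian of $P_{X_1}\mapsto h(P_{X_1})-(1+u)h(P_{X_1}*\gamma_u)$ at $\gamma_K$ is diagonal with $n$-th eigenvalue proportional to $(1+u)(K/(K+u))^n-1$. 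The direction $\gamma_K^{(3)}$ is exactly the pure $n=3$ mode, whose $0$th, $1$st, and $2$nd moments vanish, so the linear $\epsilon$ term in the expansion vanishes automatically, and using $\int(\gamma_K^{(3)})^2/\gamma_K = 3!/K^3$ a direct computation yields
\[
h(P_{X_1}^\epsilon)-(1+u)h(P_{X_1}^\epsilon*\gamma_u)=h(\gamma_K)-(1+u)h(\gamma_{K+u})+3\Bigl[\tfrac{1+u}{(K+u)^3}-\tfrac{1}{K^3}\Bigr]\epsilon^2+O(\epsilon^3).
\]
The bracket is strictly positive iff $K>u/((1+u)^{1/3}-1)$, matching the hypothesis and providing the required $A\epsilon^2$ lower bound.

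For the three-fold entropy, iterating the identity $\gamma_K^{(3)}*\gamma_u*\gamma_L^{(3j)}=\gamma_{K+u+L}^{(3j+3)}$ telescopes the convolution to $\gamma_{K+u+L}-\epsilon^{2T}\gamma_{K+u+L}^{(6T)}$. Since $\gamma_{K+u+L}^{(6T)}$ is a derivative of order $\ge 3$, its moments through order $6T-1$ all vanish; in particular its inner product with the quadratic function $\ln\gamma_{K+u+L}$ is zero, so the first-order $\epsilon^{2T}$ contribution in the entropy expansion vanishes and the next correction is $O(\epsilon^{4T})$. Thus $h(P_{X_1}^\epsilon*\gamma_u*P_{X_2}^\epsilon)=h(\gamma_{K+u+L})+O(\epsilon^{4T})$, which is in particular $h(\gamma_{K+u+L})+O(\epsilon^{2T})$ as claimed.

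The chief obstacle is that neither $\gamma_K+\epsilon\gamma_K^{(3)}$ nor the partial sum $P_{X_2}^\epsilon$ is actually a probability density: the ratios $\gamma_K^{(3)}/\gamma_K$ and $\gamma_L^{(3j)}/\gamma_L$ are unbounded polynomials of degree $3$ and $3j$, so each candidate density becomes negative in the tails for every $\epsilon>0$. Following the device hinted at in Remark~\ref{rem_nonnegative}, I would multiply each Hermite perturbation by a smooth cut-off $\chi_R(x)$ supported on $\{|x|\le R\}$ and add a small compensating polynomial of degree $<3$ (respectively $<3j$) times the reference Gaussian, chosen to restore the vanishing low-order moments. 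Choosing $R=R(\epsilon)$ growing slowly (for instance $R=\epsilon^{-1/4}$, so that $\epsilon R^3 \to 0$) makes both truncated densities non-negative for small $\epsilon$, satisfies the bounded-ratio hypothesis needed for the Proposition, and yields cut-off errors to the Hessian and to the telescoping identity that are superpolynomially small in $1/\epsilon$, hence negligible on both the $\epsilon^2$ and $\epsilon^{2T}$ scales. Organizing this truncation simultaneously for the $2T$ separate derivative terms in $P_{X_2}^\epsilon$ while preserving the strict $A\epsilon^2$ improvement and the $O(\epsilon^{2T})$ Gaussianization is the main bookkeeping step.
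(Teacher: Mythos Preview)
Your overall architecture---perturb $P_{X_1}$ by a third-derivative Hermite mode and choose $P_{X_2}$ as a geometric series of higher derivatives so that the convolution telescopes to $\gamma_{K+u+L}$ plus a single high-order derivative---is exactly the paper's strategy, and your second-order computation of $h(P_{X_1}^\epsilon)-(1+u)h(P_{X_1}^\epsilon*\gamma_u)$ is correct.

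The substantive divergence is in how you handle the positivity/boundedness obstacle. The Proposition you invoke requires $\sup_x|r(x)/p(x)|<\infty$, and as you correctly note, $\gamma_K^{(3)}/\gamma_K$ and $\gamma_L^{(3j)}/\gamma_L$ are unbounded polynomials. You propose a smooth cutoff $\chi_R$ with $R=\epsilon^{-1/4}$ together with low-order moment corrections. This can probably be made rigorous, but it is considerably heavier than what the paper actually does, and you have misread Remark~\ref{rem_nonnegative}: the ``device'' described there is \emph{not} a cutoff. The paper instead takes
\[
P_{X_1}^\epsilon=\gamma_K-\epsilon\,D^3\gamma_{K-\delta},
\qquad
P_{X_2}^\epsilon=\sum_{j=0}^{J}\epsilon^j\,D^{3j}\gamma_{L-j\delta},
\]
for a small fixed $\delta>0$. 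Because $D^3\gamma_{K-\delta}/\gamma_K$ is a degree-$3$ polynomial times $\exp\bigl(-\tfrac{x^2}{2}(\tfrac{1}{K-\delta}-\tfrac{1}{K})\bigr)$, it is globally bounded, so the Proposition applies directly and $P_{X_1}^\epsilon,P_{X_2}^\epsilon$ are honest probability densities for all sufficiently small $\epsilon$ with no truncation whatsoever. The telescoping identity survives exactly, since $D^3\gamma_{K-\delta}*\gamma_u*D^{3j}\gamma_{L-j\delta}=D^{3(j+1)}\gamma_{K+u+L-(j+1)\delta}$. At the end one simply lets $\delta\downarrow0$ (via dominated convergence) to recover the condition $(1+u)/(K+u)^3>1/K^3$.

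In short: your plan is on the right track but trades a one-line variance shift for a multi-step truncation-and-repair argument that you yourself flag as the ``main bookkeeping step.'' Replace the cutoff by the $\delta$-shift and the proof becomes essentially complete.
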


\begin{proof}
Define 
\begin{align}
P_{X_1}^{\epsilon}=&\gamma_{K}-\epsilon D^3\gamma_{K-\delta};
\label{e48}
\\
P_{X_2}^{\epsilon}=&\sum_{j=0}^{J}
\epsilon^{j}D^{3j}\gamma_{L-j\delta},
\label{e49}
\end{align}
where the positive integer $J$ satisfies $J+1\ge T$, and $\delta>0$ satisfies $K-\delta>0$, $L-J\delta>0$.
Moreover $D^3\gamma_{K-\delta}$ means differentiating the Gaussian density 3 times, which yields a finite measure on $\mathbb{R}$.
Thus \eqref{e48}-\eqref{e49} define valid probability measures when $\epsilon$ is sufficiently close to 0.

Then for $Z_2\sim\mathcal{N}(0,u)$ we have 
\begin{align}
P_{X_1}^{\epsilon}*\gamma_u&=
(\gamma_{K}-\epsilon D^3\gamma_{K-\delta})
*\gamma_{u}
\\
&=\gamma_{K+u}-\epsilon D^3\gamma_{K+u-\delta}
\end{align}
and 
\begin{align}
P_{X_1}^{\epsilon}*P_{X_2}^{\epsilon}
&=
(\gamma_{K}-\epsilon D^3\gamma_{K-\delta})
*
(\sum_{j=0}^J
\epsilon^{j}D^{3j}\gamma_{L-j\delta})
\\
&=\sum_{j=0}^J
\epsilon^{j}D^{3j}\gamma_{K+L-j\delta}
-
\sum_{j=0}^J
\epsilon^{j+1}D^{3(j+1)}\gamma_{K+L-(j+1)\delta}
\\
&=\gamma_{K+L}
-\epsilon^{J+1}D^{3(J+1)}\gamma_{K+L-(J+1)\delta}.
\end{align}
Therefore
\begin{align}
P_{X_1}^{\epsilon}*\gamma_u*P_{X_2}^{\epsilon}
=\gamma_{K+u+L}
-\epsilon^{J+1}D^{3(J+1)}\gamma_{K+u+L-(J+1)\delta}.
\end{align}
Using \eqref{e_ent} we have
\begin{align}
h(P_{X_1}^{\epsilon})=h(\gamma_K)-\frac{\epsilon^2}{2}\int\frac{(D^3\gamma_{K-\delta})^2}{\gamma_K}
+o(\epsilon^2);
\end{align}
\begin{align}
h(P_{X_1}^{\epsilon}*\gamma_u)=h(\gamma_{K+u})
-\frac{\epsilon^2}{2}\int\frac{(D^3\gamma_{K+u-\delta})^2}{\gamma_{K+u}}
+o(\epsilon^2);
\end{align}
and 
\begin{align}
h(P_{X_1}^{\epsilon}
*\gamma_u*P_{X_2}^{\epsilon})
=h(\gamma_{K+u+L})-\frac{\epsilon^{2(J+1)}}{2}\int\frac{(D^{3(J+1)}\gamma_{K+u+L-(J+1)\delta})^2}{\gamma_{K+u+L}}
+o(\epsilon^{2(J+1)}).
\end{align}
Clearly the claim of the theorem holds if 
\begin{align}
-\int\frac{(D^3\gamma_{K-\delta})^2}{\gamma_K}
+(1+u)\int\frac{(D^3\gamma_{K+u-\delta})^2}{\gamma_{K+u}}>0.
\label{e54}
\end{align}
By the property of the Hermite polynomial (see e.g.\ \cite{cramer1999mathematical}) we have 
\begin{align}
\int\frac{(D^k\gamma_K)^2}{\gamma_K}
=
\frac{1}{K^k}\int \frac{(D^k\gamma)^2}{\gamma}=\frac{1}{K^k}k!
\label{e_herm}
\end{align}
for any positive integer $k$.
Thus by dominated convergence we see that as $\delta\to0$ the left side of \eqref{e54} converges to
$
-\frac{1}{K^3}k!
+\frac{1+u}{(K+u)^3}k!
$.
Therefore \eqref{e54} holds under $K>\frac{u}{(1+u)^{1/3}-1}$.
\end{proof}
\begin{remark}\label{rem_nonnegative}
Setting $\delta>0$ in \eqref{e48} ensures that it defines a valid probability measure with nonnegative density for small $\epsilon$. 
Note that if $\delta=0$ then the density of the perturbation part (with respect to $\gamma_K$) is a Hermite polynomial of order three.
Previously, Hermite polynomial perturbation has been used in \cite{abbe2012coordinate} for proofs of Gaussian suboptimality.
In order to ensure nonnegative density, \cite{abbe2012coordinate} would add yet another even degree Hermite polynomial (multiplied by $\gamma_K$) to \eqref{e48}, since an even degree Hermite polynomial is bounded from below.
The ratio of the odd and even degree polynomials needs be selected with care, and a similar issue for the perturbation of $X_2$ would add more restrictions.
Here we adopt a simpler trick without adding an even degree polynomial, but choosing a smaller Gaussian variance.
\end{remark}

\begin{remark}\label{rem_lc}
The intuition behind Theorem~\ref{thm_2} is closely related to the Levy-Cramer theorem, which says that the sum of two independent non-Gaussian random variables cannot be precisely Gaussian \cite{cramer1936eigenschaft}.
However, the non-Gaussianness of the sum can be much smaller than that of the individual summands;
Theorem~\ref{thm_2} essentially showed this where the non-Gaussianness is gauged by the Gaussian-regularized relative entropy.
The reverse direction (lower bounding the non-Gaussianness of the sum using non-Gaussianness of the summands) is the problem of stability of the Levy-Cramer theorem, the regularized entropy version of which was considered in \cite{bobkov2016regularized}.
\end{remark}
\begin{corollary}\label{cor1}
Consider any $u>0$ and 
$L>1$ satisfying
$\frac{L+u}{L-1}>\frac{u}{(1+u)^{1/3}-1}$.
There exists some $(P_{X_1},P_{X_2})$ for which 
\begin{align}
&\quad uh(P_{X_1}*\gamma_u*P_{X_2})+h(P_{X_1})
-(u+1)h(P_{X_1}*\gamma_u)
\nonumber\\
&>
\sup_{K>0}\left\{uh(\gamma_{K+u+L})
+h(\gamma_K)-(u+1)h(\gamma_{K+u})\right\}.
\label{e61}
\end{align}
\end{corollary}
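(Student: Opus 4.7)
The plan is to reduce the corollary to a direct application of Theorem~\ref{thm_2}, by first computing explicitly the optimal Gaussian $K$ on the right-hand side of \eqref{e61} and checking that it falls into the regime covered by that theorem.

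First, I would evaluate the Gaussian supremum. Using $h(\gamma_V) = \frac{1}{2}\ln(2\pi e V)$ and setting the derivative in $K$ of $u h(\gamma_{K+u+L}) + h(\gamma_K) - (u+1)h(\gamma_{K+u})$ to zero yields the stationarity condition
\begin{equation*}
\frac{u}{K+u+L} + \frac{1}{K} = \frac{u+1}{K+u}.
\end{equation*}
Clearing denominators and simplifying (a routine algebraic computation, after which all quadratic-in-$K$ terms cancel) shows that, for $L>1$, the unique positive solution is $K^{\ast} = \frac{L+u}{L-1}$, and one checks via the second-order condition or a boundary argument that this is indeed the Gaussian maximizer. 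The hypothesis of the corollary is precisely the inequality $K^{\ast} > \frac{u}{(1+u)^{1/3}-1}$, placing $K^{\ast}$ into the regime for which Theorem~\ref{thm_2} applies.

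Next, I would apply Theorem~\ref{thm_2} at $K = K^{\ast}$ with any exponent $T > 1$ (say $T = 2$). This yields a family $(P_{X_1}^{\epsilon}, P_{X_2}^{\epsilon})$ and a constant $A > 0$ such that
\begin{equation*}
u h(P_{X_1}^{\epsilon} * \gamma_u * P_{X_2}^{\epsilon}) = u h(\gamma_{K^{\ast}+u+L}) + O(\epsilon^{2T})
\end{equation*}
and
\begin{equation*}
h(P_{X_1}^{\epsilon}) - (u+1)h(P_{X_1}^{\epsilon} * \gamma_u) \ge h(\gamma_{K^{\ast}}) - (u+1)h(\gamma_{K^{\ast}+u}) + A\epsilon^2.
\end{equation*}
Summing the two lines and using that $O(\epsilon^{2T}) = o(\epsilon^2)$ shows that the combined objective evaluated at $(P_{X_1}^{\epsilon}, P_{X_2}^{\epsilon})$ exceeds its value at the Gaussian optimum $K^{\ast}$ by at least $\frac{A}{2}\epsilon^2$ for all sufficiently small $\epsilon > 0$, which is precisely \eqref{e61}.

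The main obstacle is more bookkeeping than substantive mathematics: once $K^{\ast}$ is identified explicitly, the corollary is essentially a one-step consequence of Theorem~\ref{thm_2}. One minor subtlety worth flagging is that $P_{X_2}^{\epsilon}$ constructed in \eqref{e49} has the same second moment as $\gamma_L$, so the comparison against a supremum taken over Gaussian $X_2$ of variance $L$ is apples-to-apples; this follows by integration by parts, since $\int x^2 D^k \gamma_V \, dx = 0$ for every $k \ge 3$, so the higher-order perturbation terms in \eqref{e49} do not shift the second moment.
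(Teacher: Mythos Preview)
Your proposal is correct and follows essentially the same approach as the paper: identify the Gaussian maximizer $K^\ast=\frac{L+u}{L-1}$, observe that the hypothesis of the corollary places it in the regime of Theorem~\ref{thm_2}, and apply that theorem. The paper's proof is two sentences to this effect; your version simply fills in the derivative computation and the $\epsilon$-bookkeeping (and your closing remark about the second moment of $P_{X_2}^\epsilon$ is correct but not needed for the corollary as stated).
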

\begin{proof}
We can check that the supremum in \eqref{e61} is achieved at  $K=\frac{L+u}{L-1}$ if $L>1$.
Under the assumption of the corollary the claim follows from Theorem~\ref{thm_2}.
\end{proof}

Corollary~\ref{cor1} provides a counterexample to Conjecture~\ref{conj1} for the case of $N_1=\Sigma_1=0$ and $N_2=u$.
Extension to the $N_1,\Sigma_1>0$ case is possible using a similar continuity argument as Theorem~\ref{thm_continuity}).
Alternatively, we can pick finite but small enough $N_1>0$, define
$
P_{X_1}^{\epsilon}=\gamma_{K-N_1}-\epsilon D^3\gamma_{K-N_1-\delta}$,
and then the argument in Theorem~\ref{thm_2} still carries through with $X_1$ replaced by $X_1+Z_1$.

\section{Gaussian Suboptimality via Wasserstein Flow}\label{sec_horizontal} 
In this section we consider another approach of establishing Gaussian suboptimality by showing that the Hessian under the Wasserstein-2 metric at the stationary point fails to be negative semidefinite.
The choice of metric is in fact immaterial to the stability of a stationary point, as we shall explain.
However, it seems to be an interesting direction for future research to leverage the recent developments on optimization in the Wasserstein space 
\cite{ambrosio2008gradient}\cite{craig2016exponential}\cite{wang2020information}\cite{chewi2020gradient}.
Let us also remark that optimal transportation has previously been used in proving Costa's corner point for the interference channels \cite{polyanskiy2016wasserstein}.

\subsection{Preparations}
It is useful to first recall how Hessian is related to stability at a stationary point in the simpler setting of optimization in the Euclidean space.
Suppose that the goal is to maximize
$
f(x)
$ 
over $x\in\mathbb{R}^d$ subject to the constraint $g(x)\le 0$,
where $f$ and $g$ are both smooth functions on $\mathbb{R}^d$.
Then $x^*\in\mathbb{R}^d$ is a stationary point of the constrained optimization problem
if and only if there exists $\lambda\ge 0$ such that $g(x^*)=0$ and 
\begin{align}
\nabla f(x^*)-\lambda \nabla g(x^*)=0\in\mathbb{R}^d.
\label{e72}
\end{align}
Moreover, a necessary condition for local optimality is that the restricted Hessian satisfies
\begin{align}
\hess|_{\mathcal{C}}(f-\lambda \nabla g)(x^*)\preceq 0\in\mathbb{R}^{(d-1)\times (d-1)}
\label{e73}
\end{align}
where $\lambda$ is as in \eqref{e72} and $\mathcal{C}=\{x\in \mathbb{R}^d\colon (x-x^*)^{\top}\nabla g(x^*)=0\}$ is a codimension one subspace.
Indeed, if \eqref{e73} is not true, then there exists a smooth curve  $(-1,1)\to\mathcal{C}$, $t\mapsto \bar{x}(t)$ satisying $\bar{x}(0)=x^*$ and 
\begin{align}
(f-\lambda g)(\bar{x}(t))
&\ge
(f-\lambda g)(x^*)
+at^2+o(t^2)
\\
&\ge f(x^*)+at^2+o(t^2)
\end{align}
for some $a>0$.
Assuming $\nabla g(x^*)\neq 0$, we can find a smooth curve $(-1,1)\to \{x\colon g(x)=0\}$, $t\mapsto x(t)$ satisfying  $\|x(t)-\bar{x}(t)\|=O(t^2)$.
Then since $\nabla(f-\lambda g)(x^*)=0$ and 
$\|\bar{x}(t)-x^*\|=O(t)$,
we have $\|\nabla(f-\lambda g)(x(t))\|=O(t)$ and
\begin{align}
f(x(t))
&=(f-\lambda g)(x(t))
\\
&\ge 
(f-\lambda g)(\bar{x}(t))+O(t^3)
\\
&\ge f(x^*)+at^2+o(t^2)
\label{e78}
\end{align}
implying that $x^*$ is not a local maximum of the constrained optimization problem.

Let us review the basics about the formal Riemannian structure associated with the optimal transport distance; more background on this topic can be found in, e.g.\ \cite{otto2000generalization}\cite{ambrosio2008gradient}\cite{craig2016exponential}.
Let $\mathcal{P}_2(\mathbb{R}^d)$ be the set of absolutely continuous probability measures with finite second moments.
The Wasserstein-2 distance induces a local inner product structure, which can be viewed as generalization of  the standard (finite dimensional) metric tensor in the classical  Riemannian geometry.
More precisely, at any point $P\in\mathcal{P}_2(\mathbb{R}^d)$, the tangent space is given by 
\begin{align}
\overline{\{
\nabla\phi\colon\phi\in C_c^{\infty}(\mathbb{R}^d)
\}}^{L^2(P)},    \label{e70}
\end{align}
i.e., the $L^2(P)$-closure of the gradients of smooth, compactly supported functions.\footnote{Here we adopt the notations of \cite{clerc2020variational}, where a tangent vector is identified with the velocity field. 
It is worth mentioning that some authors \cite{villani2009optimal} instead identified a tangent vector with the rate of change in the density which is $-{\rm div}(P\nabla \phi)$ by the continuity equation.}
Taking the closure ensures that the tangent space is complete with respect to the inner product defined below.
Given a tangent vector $\nabla \phi$, 
a constant speed geodesic is the curve $t\mapsto P_t$ where $P_t$ is the push-forward of the map $x\mapsto x+t\nabla\phi(x)$.
Note that $\nabla\phi$ as a vector field on $\mathbb{R}^d$ is curl-free, which ensures that such a map is an optimal transport for small enough $t$.
We define the following inner product on the tangent space at $P$:
\begin{align}
\left<\nabla\phi,\nabla\psi\right>
:=\int_{\mathbb{R}^d}\nabla\phi(x)\cdot\nabla\psi(x) dP(x),
\end{align}
which is consistent with the Wasserstein-2 distance since $\|\nabla\phi\|:=\sqrt{\left<\nabla\phi,\nabla\phi\right>}=\frac{d}{dt}W_2(P_t,P)|_{t=0}$.
Furthermore, given a function $f$ on $\mathcal{P}_2(\mathbb{R}^d)$, its gradient and Hessian at $P$ (if exist) satisfy
\begin{align}
f(P_t)=f(P)+t\left<\grad f(P),
\nabla \phi
\right>
+\frac{t^2}{2}\hess f(P)(\nabla \phi,\nabla\phi)+o(t^2)
\label{e81}
\end{align}
for any $\nabla\phi$,
where $\hess f(P)$ is a bilinear form which sends two tangent vectors to a real number.

Under a change of the inner product on the tangent space at $P$, the gradient at $P$ changes linearly.
For example, if $\frac{\delta f}{\delta P}$ denotes the gradient with respect to $L^2(\mathbb{R}^d)$, then the $W_2$-gradient is given by $\grad f(P)=\nabla\frac{\delta f}{\delta P}$.
On the other hand, the Hessian depends on the second order behavior of the geodesic and therefore depends not only on the metric tensor at $P$ but also on the connection (how a tangent vector is parallel transported in a neighborhood).
As such, with a change of metric, the transformation of the Hessian depends on the connection and the gradient and therefore semidefiniteness is not preserved.
However, at a stationary point, semidefiniteness does not depend on the choice of metric.

In the rest of the section we shall focus on the setting of $d=1$, which is enough for constructing a counterexample to Conjecture~\ref{conj1}. 
On $\mathbb{R}$, the vector field $\nabla \phi$ can be though of as a smooth, compactly supported function, which we denote by $U(\cdot)$ or $V(\cdot)$.
The constraint $\int U(x)dx=\int \nabla\phi(x) dx=0$ can be dropped since we will take the closure in $L^2(P)$.
We shall first derive a lemma about the $W_2$-gradient and Hessian of the differential entropy of convolution of measures $h(p*q*r)$, where $p$ and $q$ are viewed as variables whereas $r$ is a fixed probability measure.
The subscripts of $\grad$ and $\hess$ denote the variables which are differentiated in (a formal definition can be given using a formula similar to \eqref{e81}).


\subsection{Hessian Calculation}
\begin{lemma}
Let $r$ be a fixed probability distribution on $\mathbb{R}$, and consider $h(p*q*r)$ as a functional of a pair of probability distributions $(p,q)$.
Denote by $\mu:=p*q*r$, viewed as a probability density function on $\mathbb{R}$.
Let $Z=X+Y+R$, and let $U=U(X)$ and $V=V(Y)$ be two arbitrary smooth, compactly functions of $X$ and $Y$, respectively.
Then we have
\begin{align}
\grad_p h(p*q*r)
&=-\mathbb{E}[\nabla \ln \mu(Z)|X];
\\
\grad_q h(p*q*r)
&=-\mathbb{E}[\nabla \ln \mu(Z)|Y],
\end{align}
and
\begin{align}
\hess_{pp}h(p*q*r)(U,U)
&=-\mathbb{E}[U^2\Delta \ln\mu(Z)]
-\mathbb{E}
\left[
\left(
\frac{\nabla(\mathbb{E}[U|Z]\mu(Z))}{\mu(Z)}
\right)^2
\right];
\\
\hess_{qq}h(p*q*r)(V,V)
&=-\mathbb{E}[V^2\Delta \ln\mu(Z)]
-\mathbb{E}
\left[
\left(
\frac{\nabla(\mathbb{E}[V|Z]\mu(Z))}{\mu(Z)}
\right)^2
\right];
\\
\hess_{pq}h(p*q*r)(U,V)
&=-\mathbb{E}[UV\Delta \ln\mu(Z)]
\nonumber\\
&\quad
-\mathbb{E}
\left[
\frac{\nabla(\mathbb{E}[U|Z]\mu(Z))\nabla(\mathbb{E}[V|Z]\mu(Z))}{\mu^2(Z)}
\right].
\end{align}
\end{lemma}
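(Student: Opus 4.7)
The plan is to exploit the fact that along a Wasserstein-2 geodesic with initial velocity $U$, the random variable $X$ is replaced by $X+tU(X)$. Therefore the perturbed convolution $\mu_{t,s}:=p_t*q_s*r$ is simply the law of $Z_{t,s}:=X+tU(X)+Y+sV(Y)+R$, and the whole computation reduces to Taylor-expanding $\mathbb{E}[\phi(Z_{t,s})]$ in $(t,s)$ for a test function $\phi$ and reading off the corresponding variations of the density $\mu_{t,s}$.

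Concretely, a second-order Taylor expansion gives
\begin{align*}
\mathbb{E}[\phi(Z_{t,s})] &= \mathbb{E}[\phi(Z)] + t\,\mathbb{E}[\phi'(Z)U(X)] + s\,\mathbb{E}[\phi'(Z)V(Y)] \\
&\quad + \tfrac{t^2}{2}\mathbb{E}[\phi''(Z)U^2(X)] + \tfrac{s^2}{2}\mathbb{E}[\phi''(Z)V^2(Y)] \\
&\quad + ts\,\mathbb{E}[\phi''(Z)U(X)V(Y)] + O(t^3+s^3).
\end{align*}
Conditioning on $Z$ in each expectation, then integrating against $\phi$ by parts, translates this into density-level statements: $\partial_t\mu|_{0}=-\nabla(\mathbb{E}[U|Z]\mu)$, $\partial_s\mu|_{0}=-\nabla(\mathbb{E}[V|Z]\mu)$, and similarly $\partial_t^2\mu|_{0}=\nabla^2(\mathbb{E}[U^2|Z]\mu)$, $\partial_s^2\mu|_{0}=\nabla^2(\mathbb{E}[V^2|Z]\mu)$, and the cross term $\partial_t\partial_s\mu|_{0}=\nabla^2(\mathbb{E}[UV|Z]\mu)$.

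Next, I substitute these into the Taylor expansion of $h(\mu_{t,s})=-\int \mu_{t,s}\ln\mu_{t,s}$ using $\ln(\mu+\delta)=\ln\mu+\delta/\mu-\delta^2/(2\mu^2)+O(\delta^3)$ as in \eqref{e_ent}. Because $\int\partial_t\mu|_0=\int\partial_s\mu|_0=0$, the linear-in-$(t,s)$ coefficient collapses to $-\int\partial_t\mu\cdot\ln\mu$ (and its $s$ analogue); a single integration by parts converts this into $-\mathbb{E}[U(X)\nabla\ln\mu(Z)]$, which identifies $\grad_p h = -\mathbb{E}[\nabla\ln\mu(Z)\mid X]$ via matching against the $L^2(p)$ inner product, and similarly for $\grad_q h$. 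For the quadratic part, one gets two contributions: an $(\partial^2\mu)\ln\mu$ piece, which becomes $-\mathbb{E}[UV\,\Delta\ln\mu(Z)]$ after two integrations by parts transferring derivatives onto $\ln\mu$; and an $(\partial\mu)^2/\mu$ piece, which is already in the "Fisher-information-like" form $-\mathbb{E}\bigl[\nabla(\mathbb{E}[U|Z]\mu(Z))\,\nabla(\mathbb{E}[V|Z]\mu(Z))/\mu^2(Z)\bigr]$. Specializing $(U,V)$ to $(U,U)$, $(V,V)$, and $(U,V)$ yields the three Hessian formulas in the lemma.

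The main technical obstacle is justifying the formal manipulations rather than the manipulations themselves. The tangent space \eqref{e70} is defined as the $L^2(p)$-closure of gradients of smooth compactly supported functions, so a priori $U$ need not be smooth or bounded; I would carry out all expansions for $U,V\in C_c^\infty$, where boundary terms in the integrations by parts vanish trivially and the remainder in \eqref{e_ent} is uniformly controlled, and then extend by density together with the standard lower semicontinuity of relative entropy. A second subtlety is confirming that $x\mapsto x+tU(x)$ is a genuine $W_2$-optimal transport for small $|t|$, not just a smooth deformation: since the Jacobian $1+tU'(x)$ is positive for $|t|$ small enough, the map is monotone and therefore the optimal coupling, so the quantity computed is the Riemannian Hessian in the sense of \eqref{e81} rather than a merely directional second derivative.
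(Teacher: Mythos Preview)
Your proposal is correct and follows essentially the same route as the paper: both identify the density derivatives via the test-function/continuity-equation argument (your Taylor expansion of $\mathbb{E}[\phi(Z_{t,s})]$ is exactly the paper's weak-form computation of $\partial_s\mu_{st}$ and $\partial_s(\mathbb{E}[U|Z_{st}]\mu_{st})$), and both then plug these into the second variation of $-\int\mu\ln\mu$ and integrate by parts to obtain the two displayed terms. The only cosmetic difference is that the paper differentiates $h(\mu_{st})$ in $s$ along the whole curve and then evaluates at $s=t=0$, whereas you Taylor-expand at the base point; your added remarks on monotonicity of $x\mapsto x+tU(x)$ and on working first with $U,V\in C_c^\infty$ are sound and slightly more explicit than what the paper records.
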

\begin{proof}
Let $X$, $Y$, $R$, $U$, $V$ be as in the statement of the lemma.
Define for each $s\ge0$, $t\ge0$,
\begin{align}
Z_{st}:=X+sU+Y+tV+R=Z+sU+tV.
\end{align}
Denote by $\mu_{st}$ the distribution of $Z_{st}$.
We have
\begin{align}
\frac{d}{ds}h(\mu_{st})
&=-\frac{d}{ds}\int\mu_{st}\ln\mu_{st}
\\
&=-\int\frac{d\mu_{st}}{ds}\ln\mu_{st}
\\
&=\int\nabla(\mathbb{E}[U|Z_{st}]\mu_{st})\ln\mu_{st}
\\
&=-\int \mathbb{E}[U|Z_{st}]\nabla(\ln\mu_{st})\mu_{st}
\\
&=-\mathbb{E}[U\nabla\ln\mu_{st}(Z_{st})]
\label{e91}
\end{align}
where \eqref{e91} used 
\begin{align}
\frac{d}{ds}\mu_{st}
=-\nabla(\mathbb{E}[U|Z_{st}]\mu_{st})\label{e_78}
\end{align}
which
can be shown by the following method:\footnote{The functional representation approach is of course well-known in analysis; an exploration of this viewpoint for information theory problems can be found in \cite{liu2018information}\cite{liu2019smoothing}\cite{liu2020capacity}\cite{liu2020dispersion}\cite{liu2020second}.}
consider an arbitrary smooth test function $f$; we have 
\begin{align}
\int f\mu_{st}=\mathbb{E}[f(Z_{st})].
\end{align}
Differentiating on both sides,
\begin{align}
\int f\frac{d}{ds}\mu_{st}
&=\mathbb{E}[f'(Z_{st})U]
\\
&=\mathbb{E}[f'(Z_{st})\mathbb{E}[U|Z_{st}]]
\\
&=-\int f(Z_{st})\nabla(\mathbb{E}[U|Z_{st}]\mu_{st})
\end{align}
where the last step used integration by parts. Since $f$ is arbitrary we have confirmed \eqref{e_78}.

Next, we have 
\begin{align}
\frac{d^2}{ds^2}h(\mu_{st})
&=-\int\Delta(\mathbb{E}[U^2|Z_{st}]\mu_{st})\ln\mu_{st}
-\int[\nabla(\mathbb{E}[U|Z_{st}]\mu_{st})]^2\frac1{\mu_{st}}
\\
&=-\int\mathbb{E}[U^2|Z_{st}]\mu_{st}\Delta\ln\mu_{st}
-\int[\nabla(\mathbb{E}[U|Z_{st}]\mu_{st})]^2\frac1{\mu_{st}},
\label{e84}
\end{align}
where the last step used integration by parts, and the first step used 
\begin{align}
\frac{d}{ds}(\mathbb{E}[U|Z_{st}]\mu_{st})
=-\nabla(\mathbb{E}[U^2|Z_{st}]\mu_{st})\label{e85}
\end{align}
which can be shown as follows:
let $f$ be an arbitrary smooth test function; we have 
\begin{align}
\frac{d}{ds}\int\mathbb{E}[U|Z_{st}=z]
f(z)\mu_{st}(z)dz
&=\frac{d}{ds}\mathbb{E}[Uf(Z_{st})]
\\
&=\mathbb{E}[Uf'(Z_{st})U]
\\
&=\int \mathbb{E}[U^2|Z_{st}=z]f'(z)\mu_{st}(z)dz
\end{align}
which implies \eqref{e85} via integration by parts. Now taking $s,t=0$ in \eqref{e84} establishes the formula for $\hess_{pp}h(p*q*r)(U,U)$ claimed in the lemma, and the proofs for the other Hessian components are similar.
\end{proof}
\begin{remark}
If we take $Y,V,R$ to be constants, we recover the following formula for the Hessian of differential entropy
\begin{align}
\hess h(\mu)(U,U)
&=-\int U^2\mu\Delta \ln\mu
-\int[\nabla(U\mu)]^2\frac1{\mu}
\\
&=-\int U^2\mu\Delta\ln\mu
+\int U\mu\nabla\left(\frac{\nabla(U\mu)}{\mu}\right)
\\
&=-\int U^2\mu\Delta\ln\mu
+\int U\mu\nabla\left(\nabla U+U\nabla \ln\mu\right)
\\
&=\int U\mu (\Delta U+\nabla U\nabla\ln \mu)
\\
&=-\int (\nabla U)^2\mu
\end{align}
which is well-known in literature (e.g. \cite[p13]{otto2000generalization}).
\end{remark}
Using the same method we can compute the $W_2$ gradient and the Hessian of the variance functional (details omitted):
\begin{lemma}
Let $p$ be a probability density function on $\mathbb{R}$ with zero mean, and let $U$ be a smooth, compactly supported function on $\mathbb{R}$.
\begin{align}
\grad\var(p)&=2x;
\\
\hess\var(p)(U)&=2\var(U).
\end{align}
\end{lemma}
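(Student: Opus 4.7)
The plan is to mirror the strategy used in the preceding entropy calculation but to exploit the fact that variance admits an explicit expression in terms of moments, making the test-function argument unnecessary. Parameterize the Wasserstein geodesic at $p$ with initial velocity $U$ by the pushforward $p_s := (T_s)_\# p$, where $T_s(x) := x + sU(x)$; for $s$ small, $T_s$ is a diffeomorphism and is optimal (it is the gradient of the convex function $x^2/2 + s\Phi(x)$ where $\Phi'=U$), so this is genuinely the constant-speed geodesic in $\mathcal{P}_2(\mathbb{R})$.

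Letting $X\sim p$ with $\mathbb{E}[X]=0$ and $X_s := T_s(X)\sim p_s$, a direct expansion of first and second moments gives
\begin{align}
\mathbb{E}[X_s] &= s\,\mathbb{E}[U(X)], \\
\mathbb{E}[X_s^2] &= \mathbb{E}[X^2] + 2s\,\mathbb{E}[X\,U(X)] + s^2\,\mathbb{E}[U(X)^2],
\end{align}
and subtracting the square of the former from the latter yields, exactly,
\begin{align}
\var(p_s) = \var(p) + 2s\,\mathbb{E}[X\,U(X)] + s^2\,\var(U(X)).
\end{align}
Matching this against the defining expansion
\begin{align}
\var(p_s) = \var(p) + s\left<\grad\var(p),U\right> + \tfrac{s^2}{2}\hess\var(p)(U,U) + o(s^2),
\end{align}
with the $W_2$-inner product $\left<V,U\right>=\int V(x)U(x)\,p(x)\,dx$, gives $\grad\var(p)(x)=2x$ (indeed $\left<2x,U\right>=2\mathbb{E}[XU(X)]$) and $\hess\var(p)(U,U)=2\var(U)$.

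There is no real obstacle here; the only point worth checking is that the expansion is exact in $s$ (no remainder to estimate), so the asserted formulas hold identically rather than asymptotically. Note also that the zero-mean assumption on $p$ is used only to make the first-order term of $\mathbb{E}[X_s]^2$ vanish, keeping the gradient equal to $2x$ rather than $2(x-\mathbb{E}[X])$; the Hessian formula $\hess\var(p)(U,U)=2\var(U)$ would remain valid even without this normalization.
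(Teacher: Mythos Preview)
Your proof is correct and follows essentially the same approach the paper indicates (``Using the same method\ldots\ details omitted''): parameterize the $W_2$ geodesic by the pushforward $X_s = X + sU(X)$ and differentiate. Since variance is an explicit polynomial in the first two moments, you rightly bypass the continuity-equation/test-function step used for the entropy functional and compute $\var(p_s)$ exactly, obtaining the gradient and Hessian with no remainder term.
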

\begin{theorem}\label{thm3}
Given any $u>0$ and $\lambda\ge0$, define the functionals
\begin{align}
\Psi(p,q)
&=
uh(p*q*\gamma_u)+h(p)-(1+u)h(p*\gamma_u);
\\
\Psi_{\lambda}(p,q)&=\Psi(p,q)-\lambda\var(q).
\end{align}
Suppose that $K>0$ and $L>0$ are such that $(K,L)$ is a stationary point of the function $(K,L)\mapsto \Psi_{\lambda}(\gamma_K,\gamma_L)$.
Then $\hess \Psi_{\lambda}(\gamma_K,\gamma_L)$ restricted on the subspace \begin{align}
\mathcal{C}:=\{(U,V)\colon\int x_2V(x_2)\gamma_L(x_2)dx_2=0\}
\label{e97}
\end{align}
is negative-semidefinite if and only if $K\le \frac{u}{(1+u)^{1/3}-1}$.
In particular, if $K>\frac{u}{(1+u)^{1/3}-1}$ then $(\gamma_K,\gamma_L)$ is not a local maximum of $\Psi(p,q)$ subject to $\var(q)\le L$.
\end{theorem}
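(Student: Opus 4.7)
My plan is to diagonalize the Hessian quadratic form by expanding the Wasserstein tangent vectors $U$ and $V$ in the orthonormal Hermite polynomial bases $\{\phi_n^K\}_{n\ge 0}$ and $\{\phi_m^L\}_{m\ge 0}$ of $L^2(\gamma_K)$ and $L^2(\gamma_L)$, where $\phi_n^\sigma(x)=\tilde H_n(x/\sqrt\sigma)$ with $\tilde H_n$ the probabilist's Hermite polynomial normalized so that $\int\tilde H_n^2\,d\gamma_1=1$. At the Gaussian point $(\gamma_K,\gamma_L)$, every convolution is Gaussian, so $\mu:=p*q*\gamma_u=\gamma_{K+L+u}$ and $\Delta\ln\mu\equiv -1/(K+L+u)$ is constant. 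The two nontrivial ingredients needed to unpack the lemma are the Mehler-type conditional expectation identity $\mathbb{E}[\phi_n^K(X)\mid Z]=(K/(K+L+u))^{n/2}\phi_n^{K+L+u}(Z)$ (with the obvious analogue for $V$), and the raising identity $\nabla(\phi_n^\sigma\gamma_\sigma)/\gamma_\sigma=-\sqrt{(n+1)/\sigma}\,\phi_{n+1}^\sigma$ coming from $[H_n(y)e^{-y^2/2}]'=-H_{n+1}(y)e^{-y^2/2}$.

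Plugging these into the Hessian formulas reveals a block-diagonal structure: after the raising step the cross term $\hess_{pq}h(p*q*\gamma_u)(\phi_n^K,\phi_m^L)$ reduces to an integral of $\phi_{n+1}^{K+L+u}\phi_{m+1}^{K+L+u}$ against $\gamma_{K+L+u}$, which vanishes unless $n=m$. Combining this with $\hess h(p)(U,U)=-\int(\nabla U)^2\,dp$ (giving $-n/K$ for $U=\phi_n^K$) and its analogue for $p*\gamma_u$, each Hermite order $n\ge 1$ contributes a $2\times2$ block $M_n=\begin{pmatrix}a_n&c_n\\c_n&b_n\end{pmatrix}$. The next step is to use stationarity of $(K,L)\mapsto \Psi_\lambda(\gamma_K,\gamma_L)$, which gives $2\lambda=u/(K+L+u)$ and $1/K+u/(K+L+u)=(1+u)/(K+u)$, and which collapses the entries to
\begin{equation*}
a_n=(n+1)\left[K^n\!\left(\tfrac{1+u}{(K+u)^{n+1}}-\tfrac{u}{(K+L+u)^{n+1}}\right)-\tfrac{1}{K}\right],\quad b_n=-\tfrac{u(n+1)L^n}{(K+L+u)^{n+1}},\quad c_n=-\tfrac{u(n+1)(KL)^{n/2}}{(K+L+u)^{n+1}}.
\end{equation*}

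Since $b_n<0$, $M_n\preceq 0$ is equivalent to the pair of conditions $a_n\le 0$ and $a_nb_n-c_n^2\ge 0$. The determinant condition, after dividing by $b_n$, becomes $a_n\le -u(n+1)K^n/(K+L+u)^{n+1}$, and plugging the formula for $a_n$ shows that the $-u$ terms cancel perfectly, leaving the single clean inequality $(1+u)(K/(K+u))^{n+1}\le 1$, equivalently $K\le u/((1+u)^{1/(n+1)}-1)$. Remarkably the same inequality also implies $a_n\le 0$ outright, so both conditions collapse to one. Since the threshold is increasing in $n$, the binding $n\ge 2$ constraint occurs at $n=2$, yielding the announced $K\le u/((1+u)^{1/3}-1)$. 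For $n=1$ the constraint $\mathcal{C}$ forces the $V_1$ component to vanish, leaving only the scalar condition $a_1\le 0$; I would verify this by noting $a_1=4K\,\partial_K^2\Psi(\gamma_K,\gamma_L)$ and checking, via the stationarity relation which forces $L=(K+u)/(K-1)$ and hence $K>1$, that $\partial_K^2\Psi<0$, so this block is automatically satisfied.

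The ``In particular'' clause follows from the general principle recalled at the beginning of the section: at a stationary point where the restricted Hessian fails to be negative semidefinite along some $(U,V)\in\mathcal{C}$, one follows the Wasserstein geodesic emanating from $(\gamma_K,\gamma_L)$ in that direction to get a second-order increase of $\Psi_\lambda$, and then projects back onto $\{\var(q)=L\}$ at $O(t^2)$ cost, preserving the strict increase of $\Psi$. The main obstacle I anticipate is bookkeeping rather than ideas: the Hermite normalization, the Mehler exponent, and the raising formula all have to be consistent so that the two a priori distinct conditions $a_n\le 0$ and $a_nb_n\ge c_n^2$ collapse to the same threshold, and this seemingly miraculous cancellation requires the stationarity identity to be substituted in just the right place.
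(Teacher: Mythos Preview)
Your proposal is correct and follows essentially the same route as the paper: expand $U,V$ in Hermite polynomials, use the Mehler/raising identities to evaluate the Hessian terms from the convolution lemma, invoke stationarity to cancel the leading pieces, and observe that the quadratic form decouples across Hermite orders. Your $2\times2$ block analysis (Schur complement giving $(1+u)(K/(K+u))^{n+1}\le1$) is equivalent to the paper's step of maximizing each $I_\alpha$ over $B_\alpha$ at $B_\alpha=-A_\alpha$; your check that $a_1<0$ via the stationarity relation $L=(K+u)/(K-1)$ is slightly more explicit than the paper, which simply appeals to $\partial_K^2\Psi_\lambda<0$ at the Gaussian maximizer.
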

Similar to Corollary~\ref{cor1}, Theorem~\ref{thm3} provides a counterexample to Conjecture~\ref{conj1} for the case where $N_1=\Sigma_1=0$ and $N_2=u$ (extension to the $N_1,\Sigma_1>0$ case is possible using a similar continuity argument as Theorem~\ref{thm_continuity}).
\begin{proof}[Proof of Theorem~\ref{thm3}]
Let us assume the expansions
\begin{align}
U&=\sum_{\alpha\in\{0,1,2,\dots\}}A_{\alpha}\frac{D^{\alpha}\gamma_K}{\gamma_K},
\\
V&=\sum_{\alpha\in\{0,1,2,\dots\}}B_{\alpha}\frac{D^{\alpha}\gamma_L}{\gamma_L}.
\end{align}
We can now explicitly compute $\hess\Psi_{\lambda}(\gamma_K,\gamma_L)$ using the following facts about Hermite polynomials (see \cite{cramer1999mathematical}):
\begin{align}
\int\left(\frac{D^{\alpha}\gamma_K}{\gamma_K}\right)^2\gamma_K
=\frac{\alpha}{K^{\alpha}},\quad\forall \alpha=0,1,2,\dots
\end{align}
and
\begin{align}
\mathbb{E}\left[\left.
\frac{D^{\alpha}\gamma_K(X_1)}{\gamma_K(X_1)}\right|
\hat{X}_1
\right]
=
\frac{D^{\alpha}\gamma_{K+u}(\hat{X}_1)}{\gamma_{K+u}(\hat{X}_1)},
\end{align}
where $X_1\sim \gamma_K$,  $\hat{X}_1=X_1+Z_2$, and $Z_2\sim \gamma_u$ is independent of $X_1$.
We have
\begin{align}
\hess_{11}\Psi_{\lambda}(\gamma_K,\gamma_L)(U,U)
&=\frac{u}{K+u+L}\sum_{\alpha\ge 0}A_{\alpha}^2\frac{\alpha!}{K^{\alpha}}
-u\sum_{\alpha\ge 0}A_{\alpha}^2\frac{(\alpha+1)!}{(K+u+L)^{\alpha+1}}
\nonumber\\
&\quad +\frac1{K}\sum_{\alpha\ge 0}A_{\alpha}^2\frac{\alpha!}{K^{\alpha}}
-\sum_{\alpha\ge 0}A_{\alpha}^2\frac{(\alpha+1)!}{K^{\alpha+1}}
\nonumber\\
&\quad-\frac{1+u}{K+u}\sum_{\alpha\ge0}A_{\alpha}^2
\frac{\alpha!}{K^{\alpha}}
+(1+u)\sum_{\alpha\ge 0}A_{\alpha}^2\frac{(\alpha+1)!}{(K+u)^{\alpha+1}}
\end{align}
where the subscript of $\hess_{11}$ means that the function is differentiated in the first argument twice.
Since $K$ is a stationary point, the first, third, and fifth summations cancel out,
and the $\alpha=0$ summands in the other three summations cancel out. We are left with
\begin{align}
\hess_{11}\Psi_{\lambda}(\gamma_K,\gamma_L)(U,U)
&=
-u\sum_{\alpha\ge 1}A_{\alpha}^2\frac{(\alpha+1)!}{(K+u+L)^{\alpha+1}}
\nonumber\\
&\quad 
-\sum_{\alpha\ge 1}A_{\alpha}^2\frac{(\alpha+1)!}{K^{\alpha+1}}
\nonumber\\
&\quad
+(1+u)\sum_{\alpha\ge 1}A_{\alpha}^2\frac{(\alpha+1)!}{(K+u)^{\alpha+1}}.
\end{align}
Next, since $B_1=0$ by \eqref{e97}, we have
\begin{align}
\hess_{22}\Psi_{\lambda}(\gamma_K,\gamma_L)(V,V)
&=\frac{u}{K+u+L}\sum_{\alpha\neq 1}B_{\alpha}^2\frac{\alpha!}{L^{\alpha}}
-u\sum_{\alpha\neq 1}B_{\alpha}^2\frac{(\alpha+1)!}{(K+u+L)^{\alpha+1}}
\nonumber\\
&\quad-2\lambda\sum_{\alpha\neq 1}
B_{\alpha}^2\frac{\alpha!}{L^{\alpha}}+2\lambda B_0^2
\\
&=
\frac{u}{K+u+L}\sum_{\alpha\ge2}B_{\alpha}^2\frac{\alpha!}{L^{\alpha}}
-u\sum_{\alpha\ge2}B_{\alpha}^2\frac{(\alpha+1)!}{(K+u+L)^{\alpha+1}}
\nonumber\\
&\quad-2\lambda\sum_{\alpha\ge2}
B_{\alpha}^2\frac{\alpha!}{L^{\alpha}}
\\
&=-u\sum_{\alpha\ge2}B_{\alpha}^2\frac{(\alpha+1)!}{(K+u+L)^{\alpha+1}}
\end{align}
where the last step follows since $\partial_L\Psi_{\lambda}(\gamma_K,\gamma_L)=0$ implies $\lambda=\frac{u}{K+u+L}$.
Finally
\begin{align}
\hess_{12}\Psi_{\lambda}(\gamma_K,\gamma_L)(U,V)
&=\frac{u}{K+u+L}A_0B_0
-u\sum_{\alpha\neq 1}A_{\alpha}B_{\alpha}\frac{(\alpha+1)!}{(K+u+L)^{\alpha+1}}
\\
&=-u\sum_{\alpha\ge 2}A_{\alpha}B_{\alpha}\frac{(\alpha+1)!}{(K+u+L)^{\alpha+1}}.
\end{align}
Now we can write
\begin{align}
\hess\Psi_{\lambda}(\gamma_K,\gamma_L)(U,V)
&=\hess_{11}\Psi_{\lambda}(\gamma_K,\gamma_L)(U,U)+\hess_{22}\Psi_{\lambda}(\gamma_K,\gamma_L)(V,V)
\nonumber\\
&\quad+2\hess_{12}\Psi_{\lambda}(\gamma_K,\gamma_L)(U,V)
\\
&=\sum_{\alpha\ge1}I_{\alpha}
\end{align}
where we defined
\begin{align}
I_1:=-\frac{2uA_1^2}{(K+u+L)^2}
-\frac{2A_1^2}{K^2}
+\frac{2(1+u)A_1^2}{(K+u)^2};
\end{align}
and for $\alpha\ge2$,
\begin{align}
\frac{I_{\alpha}}{(\alpha+1)!}
&:=
-\frac{uA_{\alpha}^2}{(K+u+L)^{\alpha+1}}
-\frac{A_{\alpha}^2}{K^{\alpha+1}}
+\frac{(1+u)A_{\alpha}^2}{(K+u)^{\alpha+1}}
\nonumber\\
&\quad-\frac{uB_{\alpha}^2}{(K+u+L)^{\alpha+1}}-\frac{2uA_{\alpha}B_{\alpha}}{(K+u+L)^{\alpha+1}}.
\end{align}
Note that $I_1\le 0$, which follows since $\partial_K^2\Psi_{\lambda}(\gamma_K,\gamma_L)<0$ at the maximizer $K$.
Moreover given $A_{\alpha}$ we have 
\begin{align}
\frac{I_{\alpha}}{(\alpha+1)!}
\le -\frac{A_{\alpha}^2}{K^{\alpha+1}}
+\frac{(1+u)A_{\alpha}^2}{(K+u)^{\alpha+1}}
\end{align}
with equality achieved only when $B_{\alpha}=-A_{\alpha}$.
Then the first claim of the theorem follows,
noting that $K=\frac{u}{(1+u)^{1/3}-1}$ is the solution to $-\frac1{K^3}+\frac{1+u}{(K+u)^3}=0$.
The case where $\hess\Psi_{\lambda}$ is not negative-semidefinite can be seen by choosing $B_2=-A_2\neq0$ and $A_{\alpha}=B_{\alpha}=0$ for $\alpha\neq 2$.
Once the restricted Hessian is not negative-semidefinite, 
we can show that the stationary point does not achieve local maximum with arguments similar to \eqref{e78}:
we consider $(p_t,q_t)$ where $p_t$ and $q_t$ are Wasserstein geodesics satisfying $p_0=\gamma_K$, $q_0=\gamma_L$, $\frac{d}{dt}\int x^2q_t|_{t=0}=0$ and $\frac{d^2}{dt^2}\Psi_{\lambda}(p_t,q_t)|_{t=0}>0$.
We can assume without loss of generality that the corresponding tangent vectors \eqref{e70} come from gradients of smooth compactly supported functions.
Then let $\bar{q}_t$ be a rescaling of $q_t$ so that $\var(q_t)=L$, $t>0$.
Then the same argument as in \eqref{e78} shows that $(p_t,\bar{q}_t)$ beats the Gaussian stationary point for small $t>0$,
{\it mutatis mutandis}.
\end{proof}

\section{Application: Gaussian Suboptimality Under Constant Power Control}
\label{sec_application}
The counterexamples in previous sections disproved Conjecture~\ref{conj1},
which is a sufficient but not necessary condition for Gaussian optimality for the Han-Kobayashi (HK) region.
In this section we show that  Gaussian signaling is suboptimal for the HK region among protocols with constant power control.
This is a new result, although it was known that Gaussian signaling with constant power control can be strictly improved by Gaussian signaling with variable power control \cite{costa2011noisebergs}.

We first recall some results in \cite{costa2020structure}.
Given two distributions $P_{X_1},P_{X_2}$ on $\mathbb{R}^d$,
define
\begin{align}
\Psi(P_{X_1},P_{X_2})
&:=
uh(X_1+X_2+Z_1+Z_2)+h(X_1+Z_1)
\nonumber\\
&\quad-(u+1)h(X_1+Z_1+Z_2),
\label{e115}
\end{align}
where $Z_j\sim \mathcal{N}(0,N_jI)$, $j=1,2$, $X_1,X_2,Z_1,Z_2$ are independent, and $u\ge 1$.
Given $P_{X_1}$ and $P_{X_2}$, define the concave envelope by 
\begin{align}
 \mathcal{C}_{X_1}[\Psi(P_{X_1},P_{X_2})]:=&
\sup_{P_{X_1U}}\left\{h(X_1+X_2+Z_1+Z_2|U)
\right.
\nonumber\\
+h(X_1+Z_1|U)&\left.-(u+1)h(X_1+Z_1+Z_2|U)
\right\}
\label{e35}
\end{align}
where the supremum is over $P_{X_1U}$ compatible with the given $P_{X_1}$, and $(U,X_1)$ is independent of $(X_2,Z_1,Z_2)$.
Define
\begin{align}
F_{u}(P_{X_1},P_{X_2})&:=
h(X_1+X_2+Z_1+Z_2)-h(Z_1)
\nonumber\\
&+\mathcal{C}_{X_1}[\Psi(P_{X_1},P_{X_2})].
\end{align}

Recall that the $d$-letter Han-Kobayashi (HK) region for the Gaussian Z-channel is given by (see e.g.\ \cite{costa2017gaussian})
\begin{align}
dR_1&\le h(X_1+Z_1|Q)-h(Z_1)
\label{e_37}
\\
dR_2&\le h(X_1+X_2+Z_1+Z_2|U_1,Q)
\nonumber\\
&-h(X_1+Z_1+Z_2|U_1,Q)
\\
d(R_1+R_2)&\le h(X_1+X_2+Z_1+Z_2|Q)
\nonumber\\
&-h(X_1+Z_1+Z_2|U_1,Q)
\nonumber\\
&+h(X_1+Z_1|U_1,Q)-h(Z_1)
\label{e39}
\end{align}
where $Z_1\sim \mathcal{N}(0,I)$, $Z_2\sim \mathcal{N}(0,N_2I)$. $Q$ is the power control random variable, and the joint distribution factors as
$P_QP_{U_1X_1|Q}P_{X_2|Q}P_{Z_1}P_{Z_2}$.
Following \cite{costa2020structure}, we note that for $u\ge 0$, the HK bound can be expressed using the weighted form
\begin{align}
&\quad d(R_1+(1+u)R_2)
\nonumber\\
&\le \sup_{P_{Q_1Q_2}\colon \mathbb{E}[Q_j]\le q_j}\mathbb{E}_{Q_1,Q_2}\left[\sup_{P_{X_j}\colon \mathbb{E}[\|X_j\|^2]\le Q_j}F_{u}(P_{X_1},P_{X_2})\right]
\label{e120}
\end{align}
where the first supremum is over distributions of $(Q_1,Q_2)$ which are random non-negative powers satisfying a given power constraint $\mathbb{E}[Q_j]\le q_j$, $j=1,2$.
For any given $Q_j$, $j=1,2$, the second supremum is over $(P_{X_1},P_{X_2})$ satisfying the indicated power constraints.

We remark that for $u\in(-1,0]$, it is obvious that Gaussian variables are optimal for the weighted sum rates.
Indeed, \eqref{e_37} is optimized by Gaussian $X_1$ under a power constraint.
Gaussian variables are also optimal for \eqref{e39} since $-h(X_1+Z_1+Z_2|U_1,Q)+h(X_1+Z_1|U_1,Q)=-I(X_1+Z_1+Z_2;Z_2|U_1,Q)$
which is maximized by Gaussian variables under a power constraint (by Gaussian saddle point; see e.g. \cite{cover1999elements}).

We now show that under constant power control (i.e.\ $Q_1$ and $Q_2$ are constants) and $u=1$, Gaussian variables may be suboptimal for the weighted sum rate.
\begin{theorem}
There exist $N_1=1$, $N_2>0$, $u=1$, $d=1$, and  some (deterministic) $q_1,q_2>0$ such that Gaussian $P_{X_j}$, $j=1,2$ are not optimal for the following optimization:
\begin{align}
\sup_{P_{X_j}\colon \mathbb{E}[\|X_j\|^2]\le q_j}F_{u}(P_{X_1},P_{X_2}).  
\end{align}
\end{theorem}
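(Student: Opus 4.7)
The plan is to leverage Theorem~\ref{thm_2} by choosing parameters so that the concave envelope inside $F_u$ collapses to $\Psi$ at Gaussian inputs, reducing the comparison of $F_1$ values to the $\Psi$-gain already established in Section~\ref{sec_vertical}. Fix $u = N_1 = N_2 = 1$ and $d = 1$, set $q_2 = N_2 = 1$, and pick $q_1 = 3$ (any value with $q_1 + N_1 > \tfrac{N_2}{(1+N_2)^{1/3}-1} \approx 3.85$ will do).

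First I would establish the envelope collapse $\mathcal{C}_{X_1}[\Psi(\gamma_K, \gamma_{q_2})] = \Psi(\gamma_K, \gamma_{q_2})$ for all $K > 0$. Two ingredients combine: (i) for Gaussian $X_2 = \gamma_{q_2}$, Gaussian $X_1$ at any prescribed variance is optimal for $\Psi(\cdot, \gamma_{q_2})$, which follows via Lagrangian duality from the known statement cited in the introduction that Gaussian is optimal given Gaussian in $\Psi$ \cite{beigi2016some}\cite{costa2020structure}; and (ii) the explicit Gaussian value $K \mapsto \Psi(\gamma_K, \gamma_{q_2}) = \tfrac{1}{2}\log\tfrac{(K+3)(K+1)}{(K+2)^2}$ is concave in $K$, which in the balanced case $q_2 = N_2$ reduces to Jensen's inequality for $x \mapsto 1/x^2$ applied to the midpoint identity $K + N_1 + N_2 = \tfrac{1}{2}[(K+N_1) + (K + q_2 + N_1 + N_2)]$. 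For any $P_{X_1 U}$ with $\gamma_K$-marginal, (i) yields $\Psi(P_{X_1|U}, \gamma_{q_2}) \le \Psi(\gamma_{\mathbb{E}[X_1^2|U]}, \gamma_{q_2})$, and (ii) together with $\mathbb{E}_U[\mathbb{E}[X_1^2|U]] = K$ yields $\mathbb{E}_U[\Psi(\gamma_{\mathbb{E}[X_1^2|U]}, \gamma_{q_2})] \le \Psi(\gamma_K, \gamma_{q_2})$. Consequently $F_1(\gamma_K, \gamma_L)$ has the closed form $2h(\gamma_{K+L+2}) + h(\gamma_{K+1}) - 2h(\gamma_{K+2}) - h(\gamma_1)$, and a direct check that both partial derivatives are positive locates the Gaussian optimum at the corner $(K, L) = (q_1, q_2) = (3, 1)$.

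Next I would invoke Theorem~\ref{thm_2} with the $N_1 > 0$ adaptation from the end of Section~\ref{sec_vertical}: set $P_{X_1}^{\epsilon} := \gamma_{q_1} - \epsilon D^3 \gamma_{q_1-\delta}$ and $P_{X_2}^{\epsilon} := \sum_{j=0}^{J} \epsilon^{j} D^{3j} \gamma_{q_2-j\delta}$ for small $\delta > 0$ and large $J$. Integration by parts shows $\int x^2 D^{3j}\gamma_\sigma\, dx = 0$ for every $j \ge 1$, so the variances equal $q_1$ and $q_2$ exactly and the power constraints hold with equality. Since $q_1 + N_1 = 4 > \tfrac{1}{2^{1/3}-1}$ is precisely the hypothesis of Theorem~\ref{thm_2} (with its parameter $u$ identified with $N_2$), we obtain $\Psi(P_{X_1}^{\epsilon}, P_{X_2}^{\epsilon}) - \Psi(\gamma_{q_1}, \gamma_{q_2}) \ge A\epsilon^2$ for some $A > 0$, and $h(P_{X_1}^{\epsilon} * \gamma_{N_1+N_2} * P_{X_2}^{\epsilon}) - h(\gamma_{q_1+q_2+N_1+N_2}) = O(\epsilon^{2T})$ for any preassigned $T$. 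Combining with the trivial lower bound $\mathcal{C}_{X_1}[\Psi(P_{X_1}^{\epsilon}, P_{X_2}^{\epsilon})] \ge \Psi(P_{X_1}^{\epsilon}, P_{X_2}^{\epsilon})$ on the non-Gaussian side and the envelope collapse on the Gaussian side yields
\[
F_1(P_{X_1}^{\epsilon}, P_{X_2}^{\epsilon}) - F_1(\gamma_{q_1}, \gamma_{q_2}) \ge A\epsilon^2 - O(\epsilon^{2T}),
\]
which is strictly positive for $T \ge 2$ and $\epsilon$ sufficiently small, disproving Gaussian optimality.

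The main obstacle is the envelope-collapse step: a priori $\mathcal{C}_{X_1}[\Psi(\gamma_{q_1}, \gamma_{q_2})]$ could exceed $\Psi(\gamma_{q_1}, \gamma_{q_2})$, in which case the $A\epsilon^2$ gain on $\Psi$ could be absorbed by the envelope on the Gaussian side. The concavity ingredient (ii) is precisely what forces the balanced choice $q_2 = N_2$; other $q_2$ values would require verifying the convexity inequality $\tfrac{2}{(K+N_1+N_2)^2} \le \tfrac{1}{(K+N_1)^2} + \tfrac{1}{(K+q_2+N_1+N_2)^2}$ case by case and possibly further restricting the parameter regime. The remaining steps are a routine application of Theorem~\ref{thm_2} together with the $N_1 > 0$ continuity device already developed in Theorem~\ref{thm_continuity}.
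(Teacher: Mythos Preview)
Your argument is correct and takes a genuinely different route from the paper's own proof. The paper does not attempt to show that the envelope collapses at the Gaussian optimum; instead it uses the non-perturbative construction from Section~\ref{sec2}: it picks $(X_1',X_2)$ with $\Psi(P_{X_1'},P_{X_2})=c>0$ (from Lemma~\ref{lem2}/Theorem~\ref{thm_continuity}), then sets $X_1=X_1'+U$ with $U\sim\mathcal N(0,A)$ for large $A$, so that the auxiliary $U$ in \eqref{e35} directly certifies $\mathcal C_{X_1}[\Psi(P_{X_1},P_{X_2})]\ge c$. On the Gaussian side it only needs the crude bound $\mathcal C_{X_1}[\Psi(\gamma_{K},\gamma_{q_2})]\le 0$ (again Lemma~\ref{lem2}), and the outside entropy term $h(X_1+X_2+Z_1+Z_2)$ is handled by making $A$ large. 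This is quite robust: it works for any $q_2=N_2$ large enough and avoids any structural analysis of the envelope.

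Your approach instead fixes small explicit parameters ($N_2=q_2=1$, $q_1=3$), proves the exact identity $\mathcal C_{X_1}[\Psi(\gamma_K,\gamma_L)]=\Psi(\gamma_K,\gamma_L)$ for all $L\le 1$ via the two-step argument (Lagrangian duality for (i), midpoint convexity of $x\mapsto 1/x^2$ for (ii)), locates the Gaussian corner, and then applies the Hermite perturbation of Theorem~\ref{thm_2} directly. This buys you explicit constants and avoids the auxiliary-$U$ inflation, at the cost of the more delicate envelope-collapse step. Note that your concavity check (ii) in fact holds for every $L\in[0,1]$ (since $-1/(K+L+2)^2$ is increasing in $L$ and you verified the endpoint $L=1$), so your closed form for $F_1(\gamma_K,\gamma_L)$ is valid on the whole box, which is what you need to place the Gaussian optimum at the corner. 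You might make that point explicit, since your write-up emphasizes only the balanced endpoint $L=q_2$.
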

\begin{proof}
Using the argument around \eqref{e25} we see that there exists $N_1=1$, $u=1$ and $N_2>0$ such that 
\begin{align}
\sup_{P_{X_1'},P_{X_2}\colon\mathbb{E}[X_2^2]\le N_2}\Psi(P_{X_1'},P_{X_2})
>0.
\end{align}
Now we consider any $X_1'$ and $X_2$ such that $\Psi(P_{X_1'},P_{X_2})=c>0$ and $\mathbb{E}[X_2^2]\le N_2$.
Let $U\sim \mathcal{N}(0,A)$ independent of $(X_1',X_2,Z_1,Z_2)$ where $A>0$ will be chosen later.
Let $X_1=X_1'+U$.
Then we have
\begin{align}
\mathcal{C}_{X_1}[\Psi(P_{X_1},P_{X_2})]\ge \Psi(P_{X_1'},P_{X_2})=c.
\end{align}
Moreover, by choosing $A$ large enough (while other parameters are kept fixed), we have 
\begin{align}
&h(X_1+X_2+Z_1+Z_2)-h(Z_1)
\\
&\ge\frac1{2}\ln\frac{\var(X_1')+A+N_1+2N_2}{N_1}
-\frac{c}{2}.
\end{align}
For such $A$ and with $q_1:=\var(X_1')+A$, $q_2:=N_2$ we have
\begin{align}
&\sup_{P_{X_j}\colon \mathbb{E}[X_j^2]\le q_j,j=1,2} F_{u}(P_{X_1},P_{X_2})
\nonumber\\
&\ge \frac1{2}\ln\frac{\var(X_1')+A+N_1+2N_2}{N_1}
+\frac{c}{2}.
\label{e_41}
\end{align}
On the other hand, if the supremum in \eqref{e_41} is restricted to Gaussian $X_1,X_2$ with the same variances, then it was shown in \cite{costa2020structure} that Gaussian $(U,X_1)$ is optimal in \eqref{e35}.
By the result of Lemma~\ref{lem2} we have that $\mathcal{C}_{X_1}[\Psi(P_{X_1},P_{X_2})]\le 0$.
Moreover under $\mathbb{E}[X_j^2]\le q_j,j=1,2$ it is also obvious that \begin{align}
&h(X_1+X_2+Z_1+Z_2)-h(Z_1)
\nonumber\\
&\le \frac1{2}\ln\frac{\var(X_1')+A+N_1+2N_2}{N_1}.
\end{align}
The same upper bounds holds for the left side of \eqref{e_41} if $X_1,X_2$ are restricted to be Gaussian, and the claim of the theorem follows by comparing it with \eqref{e_41}.
\end{proof}

\section{Structure of the Gaussian Extremizers}
\label{sec_structure}
Although counterexamples in the previous sections disproved Conjecture~\ref{conj1},
they do not show Gaussian suboptimality for the Han-Kobayashi (HK) region allowing power control, as we explain in this section.
The reason is that in Corollary~\ref{cor1} the counterexample exists only when the Gaussian maximizer $P_{X_1}$ has covariance larger than
$\frac{u}{(1+u)^{1/3}-1}$,
but this section will show that the Gaussian maximizer for the HK region (allowing power control) never falls into this case. 
Let us remark that Conjecture~\ref{conj1} is a necessary but not sufficient condition for Gaussian optimality of the HK region allowing power control, 
and it is not clear if  Conjecture~\ref{conj1} can be easily amended (e.g.\ restricting to the case where the covariance of the Gaussian optimizer satisfies a certain bound) and remains a necessary condition.

We define the following quantities that characterize the HK region \eqref{e120} restricted to Gaussian signaling:
Given $u,N_1>0$ and positive semidefinite matrices $K$ and $L$ of the same dimensions, define
\begin{align}
\psi(K,L)&:=u\ln\det(K+N_1I+uI+L)+\ln \det (K+N_1I)
\nonumber\\
&\quad-(u+1)\ln\det(K+N_1I+uI).
\label{e129}
\end{align}
For positive semidefinite matrices $K$ and $L$ of the same dimensions, define
\begin{align}
\phi(J,L)&:=\sup_{K\preceq J}\psi(K,L)
\end{align}
where $K$ is positive semidefinite and of the same dimensions as $J$.
For $q_1,q_2\ge0$ and a positive integer $d$, define
\begin{align}
f_d(q_1,q_2)&:=\sup_{\tr(J)\le q_1,\,\tr(L)\le q_2}\{\ln\det(J+N_1I+uI+L)+\phi(J,L)\},
\end{align}
where the supremum is over positive semidefinite $d\times d$ matrices $J$ and $L$ satisfying the indicated constraints.
For $q_1,q_2\ge0$ define
\begin{align}
g_d(q_1,q_2)
=\sup_{\mathbb{E}[Q_1]\le q_1,\,\mathbb{E}[Q_2]\le q_2}\mathbb{E}[f_d(Q_1,Q_2)]
\end{align}
where the supremum is over nonnegative (possibly dependent) random variables  $Q_1$ and $Q_2$ satisfying the indicated bounds.
Obviously $f_d(q_1,q_2)\le g_d(q_1,q_2)$, and the inequality may be strict for some $(q_1,q_2)$.

\subsection{One Dimensional Case}
We first consider the case where the dimension $d=1$.
\begin{lemma}\label{lem5}
Suppose that $J,K,L>0$,
and $f_1(J,L)=g_1(J,L)$.
Assume that $K$ achieves the supremum in the definition of $\phi(J,L)$.
Then $K+N_1\le 1+\sqrt{1+u}$.
\end{lemma}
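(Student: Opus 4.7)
The overarching strategy is to interpret $f_1(J,L) = g_1(J,L)$ as an equality with the concave envelope: no strict improvement is possible by replacing the deterministic power pair $(J,L)$ with a random one having the same (or smaller) means. A standard consequence is that the $2\times 2$ Hessian of $f_1$ at $(J,L)$, viewed as a function of its two scalar arguments, must be negative semidefinite (NSD). Indeed, if it had a strictly positive eigenvalue with eigenvector $v$, then a symmetric two-point randomization $(J,L)\pm \epsilon v$ (both in the positive quadrant for small $\epsilon$, since $J,L>0$) would be admissible and would strictly beat $f_1(J,L)$, contradicting $f_1(J,L)=g_1(J,L)$.

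I would then split the analysis according to whether the constraint $K\le J$ is binding at the supremum defining $\phi(J,L)$. In the boundary case $K=J$, $\phi(J,L)=\psi(J,L)$ and $f_1$ has an explicit closed form. A direct computation of its Hessian yields the determinant
\[
\det\hess f_1(J,L)=\frac{u+1}{c^2}\left(\frac{1}{(J+N_1)^2}-\frac{u+1}{(J+N_1+u)^2}\right),\qquad c:=J+N_1+u+L,
\]
which is nonnegative iff $(J+N_1+u)^2\ge (u+1)(J+N_1)^2$, and rationalizing gives precisely $J+N_1\le 1+\sqrt{1+u}$. Since $K=J$ in this case, the bound follows.

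In the interior case $K<J$, the first-order condition for the interior maximizer of $\psi(\cdot,L)$ gives the clean identity $K+N_1=1+u/L$ (the algebra was essentially carried out in the analysis surrounding Case~2 of the scalar optimization). Substituting back, $\phi(J,L)$ becomes a function of $L$ alone, so $f_1(J,L)=\ln c+\phi(L)$. The Hessian simplifies to
\[
\hess f_1(J,L)=\begin{pmatrix} -1/c^2 & -1/c^2 \\ -1/c^2 & -1/c^2+G(L)\end{pmatrix},\quad G(L):=-\frac{u}{(1+L)^2}-\frac{u+1}{(L+u)^2}+\frac{(u+1)^3}{\bigl((u+1)(1+L)-1\bigr)^2},
\]
with determinant $-G(L)/c^2$. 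Thus NSD is equivalent to $G(L)\le 0$. Translating the target bound $K+N_1\le 1+\sqrt{1+u}$ via $K+N_1=1+u/L$ gives $L\ge u/\sqrt{1+u}$, so it suffices to show that $L\le u/\sqrt{1+u}$ implies $G(L)>0$ strictly.

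The main obstacle is this last polynomial inequality. My plan is to substitute $v=u/L$ (so the target range becomes $v\ge s:=\sqrt{1+u}$) and clear denominators, reducing $G(L)>0$ to a polynomial inequality $H(v)>0$, where $H$ has degree $4$ in $v$ with positive leading coefficient $3u+2$. I would prove $H(v)>0$ on $[s,\infty)$ in three manageable steps: (i) compute $H(s)$ and verify it factors as $s^2(s+1)^2(2s^3+2s^2-2s-1)$, which is positive for $s\ge 1$ (i.e., $u\ge 0$) because $2s^3+2s^2-2s-1$ is increasing past $s=1/3$ and equals $1$ at $s=1$; (ii) observe that $H''(v)$ has all nonnegative coefficients in $u$ and $v$, so $H$ is convex on $[0,\infty)$; (iii) check $H'(s)>0$ by splitting the expression into its $\sqrt{1+u}$-multiple part and its rational part, each of which is manifestly a positive polynomial in $u$. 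Combining (ii) and (iii) gives $H'(v)>0$ on $[s,\infty)$, and then (i) extends positivity of $H$ throughout $[s,\infty)$, completing the case. Putting both cases together yields $K+N_1\le 1+\sqrt{1+u}$ whenever $f_1(J,L)=g_1(J,L)$.
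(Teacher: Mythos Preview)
Your overall strategy---deduce negative semidefiniteness of the Hessian of $f_1$ from $f_1(J,L)=g_1(J,L)$ and split according to whether the constraint $K\le J$ binds---matches the paper's in spirit. The paper streamlines the computation by testing concavity only along the direction $(1,-1)$, i.e.\ $J_t=J+t$, $L_t=L-t$; along this direction $J_t+N_1+u+L_t$ is constant, so the $\ln\det$ term in $f_1$ drops out of all second derivatives. In your boundary case your determinant computation is correct and yields exactly the same condition, so that part is fine.

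The genuine gap is in your interior case. The first-order condition for the interior maximizer of $\psi(\cdot,L)$ does \emph{not} give $K+N_1=1+u/L$; solving $\partial_K\psi=0$ (set $a:=K+N_1$) yields $a(L-1)=u+L$, i.e.
\[
K+N_1=\frac{u+L}{L-1}\qquad(\text{requiring }L>1).
\]
With this correction, $\phi(J,L)=(u+1)\ln(u+L)-\ln L-(u+1)\ln(u+1)$ depends on $L$ alone, and your $G(L)$ becomes simply
\[
G(L)=\partial_{LL}\phi=\frac{1}{L^2}-\frac{u+1}{(u+L)^2},
\]
which is $\le 0$ iff $(u+L)^2\le (u+1)L^2$, i.e.\ $L\ge 1+\sqrt{1+u}$; and since $K+N_1=(u+L)/(L-1)$ is decreasing in $L$ with value $1+\sqrt{1+u}$ at $L=1+\sqrt{1+u}$, the bound follows immediately. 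None of your polynomial analysis is needed (and as written it is proving the wrong inequality, since it is based on the wrong $G$).

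Two smaller issues you should also address. First, the interior case only arises when $L>1$; for $L\le 1$ the unconstrained maximizer is at infinity and $K=J$ necessarily. Second, your case split omits the transition case where the unconstrained maximizer equals $J$ exactly; there $f_1$ need not be $C^2$, so the existence of the Hessian is not automatic. The paper handles this (its Case~3) by checking that the one-sided directional derivatives along $(1,-1)$ agree, and then using the one-sided second derivative.
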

\begin{proof}
We will assume, without loss of generality, that $N_1=0$, since otherwise either $K=0$ or we can make the substitution $K\leftarrow K+N_1$ and the rest of the proof will carry through.

From the definition of $\psi$ we can verify that given $L$ and $N_1$,
\begin{align}
\argmax_{K\ge 0}\psi(K,L)
=\left\{
\begin{array}{cc}
     \infty & L\le 1;  \\
     \frac{u+L}{L-1}&
     L>1.
\end{array}
\right.
\end{align}
Therefore $K:=\argmax_{0\le K\le J}\psi(K,L)=J\wedge \frac{u+L}{L-1}$ if $L>1$ and $K=J$ otherwise.
Now consider $J_t:=J+t$, $L_t:=L-t$, where $t\in\mathbb{R}$.
\\
Case~1: $L>1$ and $J>\frac{u+L}{L-1}$. 
In this case $L_t>1$ and $J_t>\frac{u+L_t}{L_t-1}$ for small enough $t$, so that
\begin{align}
0&\ge \partial_t^2f_1(J_t,L_t)|_{t=0}
\\
&=
\partial_t^2\phi(J_t,L_t)|_{t=0}
\\
&=\partial_t^2\psi(\tfrac{u+L_t}{L_t-1},L_t)|_{t=0}
\end{align}
where the first inequality is implied by $f_1(J,L)=g_1(J,L)$.
Since
\begin{align}
\psi\left(\frac{u+L}{L-1},L\right)=
(u+1)\ln(u+L)
-\ln L
-(u+1)\ln(u+1),
\end{align}
by taking the second derivative we see that $L\ge \sqrt{u+1}+1$ and hence $K=\frac{u+L}{L-1}\le \sqrt{u+1}+1$.
\\
Case~2: either $L\le 1$, or $L>1$ and $J<\frac{u+L}{L-1}$.
In this case, for small enough $t$,
\begin{align}
0&\ge \partial_t^2f(J_t,L_t)|_{t=0}
\\
&=
\partial_t^2\phi(J_t,L_t)|_{t=0}
\\
&=\partial_t^2\psi(J_t,L_t)|_{t=0}
\label{e119}
\end{align}
where the first inequality is implied by $f_1(J,L)=g_1(J,L)$.
By computing the second derivative in \eqref{e119} we see that $-\frac1{J^2}+\frac{1+u}{(J+u)^2}\le 0$ and hence $K=J\le 1+\sqrt{u+1}$.
\\
Case~3: $L>1$ and $J=\frac{u+L}{L-1}$.
In this case, let us assume (for proof by contradiction) that $K>1+\sqrt{u+1}$. Then $K=J=\frac{u+L}{L-1}$ implies that  $L<1+\sqrt{u+1}$. Since $\left.\frac{d}{dl}\left(\frac{u+l}{l-1}\right)\right|_{l=L}
<\left.\frac{d}{dl}\left(\frac{u+l}{l-1}\right)\right|_{l=1+\sqrt{u+1}}=-1$, we see that 
$J_t\le \frac{u+L_t}{L_t-1}$ for $t\ge 0$ and $J_t> \frac{u+L_t}{L_t-1}$ for $t<0$ (for $t$ in some neighborhood of 0). We argue that $\partial_tf_1(J_t,L_t)$ exists and is continuous at $t=0$. Indeed, 
\begin{align}
\partial_tf_1(J_t,L_t)|_{t=0^+}
&=
\partial_t\phi(J_t,L_t)|_{t=0^+}
\\
&=\partial_t\psi(J_t,L_t)|_{t=0}
\\
&=\partial_1\psi(K,L)\partial_t J_t|_{t=0}
+\partial_2\psi(K,L)\partial_t L_t|_{t=0}
\\
&=\partial_2\psi(K,L)\partial_t L_t|_{t=0}
\end{align}
where the last step used the fact that $K=\argmax\psi(\cdot,L)$.
Moreover,
\begin{align}
\partial_tf_1(J_t,L_t)|_{t=0^-}
&=
\partial_t\phi(J_t,L_t)|_{t=0^-}
\\
&=\partial_t\left.\psi\left(\frac{L_t+u}{L_t-1},L_t\right)\right|_{t=0}
\\
&=\partial_1\psi(K,L)\partial_t\left. \left(\frac{L_t+u}{L_t-1}\right)\right|_{t=0}
+\partial_2\psi(K,L)\partial_t L_t|_{t=0}
\\
&=\partial_2\psi(K,L)\partial_t L_t|_{t=0}
\end{align}
Therefore 
$\partial_tf_1(J_t,L_t)|_{t=0^+}
=\partial_tf_1(J_t,L_t)|_{t=0^-}$.
The assumption $f_1(J,L)=g_1(J,L)$ implies that $\partial_t^2f_1(J_t,L_t)|_{t=0^+}\le 0$, and by a similar calculation as Case~2, we have $K\le 1+\sqrt{u+1}$, a contradiction. Thus we must have $K\le 1+\sqrt{u+1}$.
\end{proof}

\subsection{Vector Case}
The goal of this subsection is to extend Lemma~\ref{lem5} to the vector case and show that \emph{any} optimal covariance matrix $K$, if nonzero, must have top eigenvalue upper bounded by $1+\sqrt{1+u}-N_1$.
This relies on the Gaussian tensorization property \cite[Theorem 2]{nair2018invariance}, which says that the Gaussian HK region is exhausted by random vectors with diagonal covariance matrices.
However, since Gaussian optimizers are not unique in the vector case (at least rotations preserves optimality), the conclusion of \cite{nair2018invariance} does not exclude the possibility of \emph{some} Gaussian optimizer whose covariance matrices $(J,K,L)$ are not diagonal in a certain common basis (in fact, this is possible).
In order to bound the top eigenvalue nevertheless, 
we shall prove a slightly stronger statement of Proposition~\ref{prop6} below.
The proof ingredients, in particular the inequality parts in Proposition~\ref{prop4}-\ref{prop5} below, have appeared in \cite{nair2018invariance}.
These linear algebra facts are by no means new, but we provide the short proofs here since they play an essential role towards our goal of this section.
The only if part of Proposition~\ref{prop4} was not mentioned in \cite{nair2018invariance}, but follows immediately from the argument we present here.
\begin{definition}\label{defn1}
Let $K$ be a positive semidefinite matrix.
We say a diagonal matrix $\bar{K}$ is a decreasing (resp.\ increasing) alignment of $K$ if $\bar{K}=Q^{\top}KQ$ for some orthogonal $Q$ and the diagonal entries of $\bar{K}$ are decreasing (resp.\ increasing) from top left to bottom right. 
\end{definition}

\begin{proposition}\label{prop3}
Suppose that $K$ and $L$ are positive semidefinite matrices of the same dimensions.
Consider the function
$Q\mapsto \ln\det(K+Q^{\top}LQ)$ on the set of orthogonal matrices;
$Q=I$ is a stationary point if and only if $K$ and $L$ commute.
\end{proposition}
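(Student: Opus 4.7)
The plan is to work in the tangent space to $O(d)$ at $I$, which consists of antisymmetric matrices: every smooth curve $Q(t)$ in $O(d)$ with $Q(0)=I$ satisfies $\dot Q(0)^{\top}=-\dot Q(0)$, and conversely every antisymmetric $A$ is realized as the velocity of $Q(t):=\exp(tA)$. So $Q=I$ is a stationary point iff $\left.\tfrac{d}{dt}\right|_{t=0}\ln\det\bigl(K+Q(t)^{\top}LQ(t)\bigr)=0$ for every antisymmetric $A$.

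I first compute this derivative directly. Using $\dot Q(0)=A$ and $\dot Q(0)^{\top}=-A$, one finds $\left.\tfrac{d}{dt}\right|_{t=0}(Q^{\top}LQ)=LA-AL=[L,A]$. Combining with Jacobi's formula yields
\[
\left.\tfrac{d}{dt}\right|_{t=0}\ln\det\bigl(K+Q^{\top}LQ\bigr)=\tr\bigl((K+L)^{-1}[L,A]\bigr),
\]
where $(K+L)^{-1}$ is well-defined under the implicit assumption that $K+L\succ 0$ (needed for $\ln\det$ to be smooth near $I$; if $K+L$ is singular the proposition is vacuous since $f$ is not differentiable at $I$). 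Rearranging via the cyclic property of the trace, this equals $\tr(MA)$ with
\[
M:=(K+L)^{-1}L-L(K+L)^{-1}.
\]

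Next I observe that because $L$ and $(K+L)^{-1}$ are symmetric, $M^{\top}=L(K+L)^{-1}-(K+L)^{-1}L=-M$, so $M$ is antisymmetric. On the space of antisymmetric matrices the bilinear form $(M,A)\mapsto -\tr(MA)$ coincides with the Frobenius inner product, so $\tr(MA)=0$ for all antisymmetric $A$ if and only if $M=0$. Thus $Q=I$ is a stationary point iff $(K+L)^{-1}L=L(K+L)^{-1}$, which (multiplying both sides by $K+L$ on the left and right) is equivalent to $L(K+L)=(K+L)L$, i.e.\ $KL=LK$.

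I expect no serious obstacle: the arguments are a standard manifold-calculus computation combined with an elementary fact about antisymmetric matrices. The only minor item to watch is verifying that $M$ lands in the antisymmetric subspace, which is precisely what makes the Frobenius-pairing step clean; otherwise one would merely get that the symmetric part of $M$ vanishes. Finally, the equivalence $[K+L,L]=0\iff[K,L]=0$ is trivial once $L$ is noted to commute with itself.
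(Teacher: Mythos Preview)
Your proof is correct and follows essentially the same route as the paper: both compute the directional derivative at $I$ along an antisymmetric tangent $A$ via Jacobi's formula to obtain $\tr((K+L)^{-1}[L,A])$, rewrite this as $\tr(MA)$ with $M=(K+L)^{-1}L-L(K+L)^{-1}$, observe that $M$ is antisymmetric, and conclude $M=0$ (hence $[K,L]=0$) from nondegeneracy of the Frobenius pairing on antisymmetric matrices. Your version is slightly more explicit about the positivity assumption $K+L\succ0$ and the final reduction $[(K+L)^{-1},L]=0\Leftrightarrow[K,L]=0$, but there is no substantive difference.
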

\begin{proof}
Recall that the tangent space of the manifold of orthogonal matrices at $I$ is the set of anti-symmetric matrices. 
Therefore $Q=I$ is a stationary point if and only if for any anti-symmetric $H$,
\begin{align}
\lim_{t\to0}\frac1{t}\ln\det(K+(I+tH)L(I-tH))&=\lim_{t\to0}\frac1{t}\ln\det(K+L+t(HL-LH))
\\
&=\tr((K+L)^{-1}(HL-LH))
\\
&=0.
\end{align}
The last equality is equivalent to 
$\tr((L(K+L)^{-1}-(K+L)^{-1}L)H)=0$.
Since $L(K+L)^{-1}-(K+L)^{-1}L$ is anti-symmetric and since $H$ is arbitrary, this is equivalent to $L$ commuting with $(K+L)^{-1}$ and hence with $K$.
\end{proof}
\begin{proposition}\label{prop4}
Suppose that $K$ and $L$ are positive semidefinite matrices of the same dimensions, and let $\bar{K}$ and $\bar{L}$ be their decreasing and increasing alignments, respectively.
Then $\ln\det(K+L)\le \ln\det (\bar{K}+\bar{L})$, equality holding only if $\bar{K}=Q^{\top}KQ$ and $\bar{L}=Q^{\top}LQ$ for some orthogonal $Q$.
\end{proposition}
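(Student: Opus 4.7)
The plan is to reduce to the case where $K$ is already in decreasing diagonal form and then combine Proposition~\ref{prop3} with a rearrangement argument. By orthogonal invariance of the determinant, writing a spectral decomposition $K = U\bar{K}U^{\top}$, we have $\det(K+L) = \det(\bar{K}+U^{\top}LU)$, so without loss of generality we may assume $K=\bar{K}$ and view the task as maximizing $f(Q) := \ln\det(\bar{K} + Q^{\top}LQ)$ over the compact group $O(d)$.

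Since $f$ is continuous on the compact manifold $O(d)$, its supremum is attained at some $Q^{\star}$. By Proposition~\ref{prop3}, at any stationary point of $f$ the matrix $\bar{K}$ must commute with $Q^{\star\top}LQ^{\star}$. First suppose $\bar{K}$ has pairwise distinct diagonal entries; then commutation forces $Q^{\star\top}LQ^{\star}$ to be diagonal, and its diagonal entries form some permutation $\sigma$ of the eigenvalues $l_1 \ge \cdots \ge l_d$ of $L$. The elementary identity
\begin{equation*}
(k_i+l_{\sigma(j)})(k_j+l_{\sigma(i)}) - (k_i+l_{\sigma(i)})(k_j+l_{\sigma(j)}) = (k_i-k_j)(l_{\sigma(i)}-l_{\sigma(j)})
\end{equation*}
shows that swapping any pair of indices $i<j$ where the two orderings agree (i.e.\ $k_i\ge k_j$ and $l_{\sigma(i)}\ge l_{\sigma(j)}$) only increases the product $\prod_i(k_i+l_{\sigma(i)})$, so the maximum is achieved when $\sigma$ reverses the order, giving $Q^{\star\top}LQ^{\star}=\bar{L}$ and hence the desired inequality.

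For the degenerate case where $\bar{K}$ has repeated eigenvalues, I would perturb $\bar{K}$ by a strictly decreasing diagonal $\epsilon D$ so that $\bar{K}+\epsilon D$ has distinct entries, apply the previous paragraph, and pass $\epsilon\to 0$ by continuity of $\ln\det$. The equality clause in the nondegenerate case is immediate: equality forces $Q^{\star\top}LQ^{\star}=\bar{L}$, exhibiting a common $Q$ that diagonalizes both $K$ and $L$ with the required orderings. When there are repeated eigenvalues, the non-uniqueness of $\bar{K}$ and $\bar{L}$ (arbitrary orthogonal rotations within each eigenspace produce another valid alignment) supplies exactly the flexibility needed to find a single diagonalizing $Q$. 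The main obstacle I anticipate is precisely this degenerate equality case: identifying the limiting $Q$ requires careful block-by-block work, since within each eigenspace of $\bar{K}$ the commutation constraint leaves the corresponding block of $Q^{\star\top}LQ^{\star}$ free, and one must rotate within that block to diagonalize it consistently with the prescribed global decreasing order on $\bar{K}$ and increasing order on $\bar{L}$.
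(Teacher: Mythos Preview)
Your approach is essentially the same as the paper's: optimize over the orthogonal group, invoke Proposition~\ref{prop3} to force commutation at a maximizer, then finish with a scalar rearrangement inequality for $\prod_i(\alpha_i+\beta_i)$. The paper presents the identical chain
\[
\ln\det(K+L)\le \sup_Q\ln\det(K+Q^{\top}LQ)=\sup_{Q:\,[K,Q^{\top}LQ]=0}\ln\det(K+Q^{\top}LQ)\le \ln\det(\bar K+\bar L),
\]
together with the same swap identity you wrote down.

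Two minor points where the paper is cleaner. First, your distinct-eigenvalue case split and $\epsilon$-perturbation are unnecessary: commuting symmetric matrices are \emph{always} simultaneously diagonalizable, so once Proposition~\ref{prop3} gives $[\bar K, Q^{\star\top}LQ^{\star}]=0$ you can pass to a common eigenbasis and apply the scalar rearrangement directly, regardless of multiplicities. Second, your equality argument is slightly tangled: the maximizer $Q^{\star}$ is not the witness you want. The paper instead observes that equality in the first step means $Q=I$ already attains the supremum, whence Proposition~\ref{prop3} yields $[K,L]=0$ directly; simultaneous diagonalization plus the scalar equality clause (a common permutation $\pi$ with $\bar\alpha_i=\alpha_{\pi(i)}$, $\bar\beta_i=\beta_{\pi(i)}$) then furnishes the required common $Q$ in one stroke, with no block-by-block work.
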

\begin{proof}
We first observe the following: 
given vectors $(\alpha_i)_{i=1}^d$ and $(\beta_i)_{i=1}^d$, let $(\bar{\alpha}_i)_{i=1}^d$ and $(\bar{\beta}_i)_{i=1}^d$ be their decreasing and increasing alignments, respectively.
Then
\begin{align}
\prod_{i=1}^d(\alpha_i+\beta_i)
\le 
\prod_{i=1}^d(\bar{\alpha}_i+\bar{\beta}_i),\label{e136}
\end{align}
which is easy to see by considering how the left side of \eqref{e136} changes when exchanging any two coordinates of $(\beta_i)_{i=1}^d$.
Moreover if \eqref{e136} achieves equality then by induction we can show that there exists a permutation $\pi$ such that 
\begin{align}
\bar{\alpha}_i=\alpha_{\pi(i)};
\quad
\bar{\beta}_i=\beta_{\pi(i)}.
\label{e_eq}
\end{align}
Now
\begin{align}
\ln\det(K+L)
&\le \sup_{Q~{\rm orthogonal}}\ln\det(K+Q^{\top}LQ)
\label{e137}
\\
&\le \sup_{Q\textrm{ orthogonal, }K\textrm{ commutes with }Q^{\top}LQ}
\ln\det(K+Q^{\top}LQ)
\label{e_137}
\\
&\le \ln \det(\bar{K}+\bar{L})
\label{e138}
\end{align} 
where \eqref{e_137} follows by Proposition~\ref{prop3}, and 
\eqref{e138} follows from \eqref{e136}.
Moreover if $\ln \det(K+L)=\ln \det(\bar{K}+\bar{L})$
then \eqref{e137} achieves equality, and so $K$ and $L$ commute by Proposition~\ref{prop3}; 
then the observation \eqref{e_eq} implies that $K$ and $L$ can be diagonalized simultaneously into $\bar{K}$ and $\bar{L}$.
\end{proof}

\begin{proposition}\label{prop5}
Suppose that $K$ and $K$ are positive semidefinite matrices of the same dimensions and satisfying $K\preceq K$.
Let $\bar{K}$ and $\bar{K}'$ be their decreasing alignments. Then $\bar{K}'\preceq\bar{K}$.
\end{proposition}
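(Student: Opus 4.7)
The plan is to reduce the claim to the standard eigenvalue monotonicity of Loewner-comparable Hermitian matrices (Weyl's monotonicity theorem), which is a direct consequence of the Courant--Fischer min-max characterization.

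First I would unpack Definition~\ref{defn1}: by construction, both $\bar K$ and $\bar K'$ are diagonal, and their diagonal entries are precisely the eigenvalues $\lambda_1(K) \ge \lambda_2(K) \ge \cdots$ and $\lambda_1(K') \ge \lambda_2(K') \ge \cdots$, respectively, listed in decreasing order. Hence the difference $\bar K - \bar K'$ is itself diagonal, and the desired inequality $\bar K' \preceq \bar K$ is equivalent to the scalar inequalities $\lambda_i(K') \le \lambda_i(K)$ for every index $i$.

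Next I would invoke the Courant--Fischer formula
\begin{equation*}
\lambda_i(M) \;=\; \max_{\substack{V\subseteq\mathbb{R}^d\\ \dim V = i}}\; \min_{\substack{x\in V\\ \|x\|=1}} x^{\top} M x,
\end{equation*}
valid for any symmetric matrix $M$. The hypothesis $K' \preceq K$ means $x^{\top} K' x \le x^{\top} K x$ for all $x\in\mathbb{R}^d$, so for every $i$-dimensional subspace $V$ the inner minimum for $K'$ is at most the inner minimum for $K$. Taking the maximum over $V$ preserves this inequality, giving $\lambda_i(K')\le\lambda_i(K)$ for each $i$.

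Combining the two steps, $\bar K - \bar K'$ is diagonal with $i$-th entry $\lambda_i(K)-\lambda_i(K')\ge 0$, hence positive semidefinite, proving $\bar K' \preceq \bar K$. There is no real obstacle here: the only point meriting care is that the alignments in Definition~\ref{defn1} are decreasing for \emph{both} $K$ and $K'$ (so eigenvalues are paired in the same order), which is precisely what makes the entrywise comparison $\lambda_i(K')\le\lambda_i(K)$ translate into a Loewner inequality. If the two alignments were in opposite orders (as in Proposition~\ref{prop4}), the argument would fail; it is the common ordering that is essential.
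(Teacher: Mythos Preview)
Your proof is correct and follows essentially the same route as the paper: both reduce the claim to eigenvalue monotonicity $\lambda_i(K')\le\lambda_i(K)$ via the Courant--Fischer variational characterization (the paper states the min--max form over subspaces of codimension $i-1$, you use the equivalent max--min form over subspaces of dimension $i$). Your additional remark that the common \emph{decreasing} ordering of both alignments is what converts the scalar inequalities into the Loewner inequality is a useful clarification not made explicit in the paper.
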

\begin{proof}
The claim is immediate from the fact that the $i$-th (in the decreasing order) eigenvalue of $K$ equals $\min_{\mathcal{L}}\max_{x}x^{\top}Kx$, where the min is over all subspaces $\mathcal{L}$ with codimension $i-1$, and the max is over all unit vectors $x\in\mathcal{L}$.
\end{proof}

\begin{proposition}\label{prop6}
Suppose that $(J,L)$ achieves the supremum in the definition of $f_d(q_1,q_2)$ (for some $q_1,q_2>0$), and
$K$ achieves the supremum in the definition of $\phi(J,L)$.
Let $\bar{K}$ and $\bar{J}$ be the decreasing alignments of $K$ and $J$, and $\bar{L}$ be the increasing alignment of $L$.
Then $(\bar{J},\bar{L})$ achieves the supremum in the definition of $f_d(q_1,q_2)$, and $\bar{K}$ achieves the supremum in the definition of $\phi(\bar{J},\bar{L})$.
\end{proposition}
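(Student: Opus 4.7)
The plan is to leverage the alignment inequalities of Propositions~\ref{prop4} and \ref{prop5} together with the rotational invariance of traces and of the log-determinant terms that depend only on $K$, and then to use the assumed optimality of $(J,L)$ and of $K$ to force equality throughout a short chain of inequalities.

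First I would observe that $\bar{J}$ and $\bar{L}$ remain feasible in the definition of $f_d(q_1,q_2)$: traces are invariant under orthogonal conjugation, so $\tr(\bar{J})=\tr(J)\le q_1$ and $\tr(\bar{L})=\tr(L)\le q_2$. Applying Proposition~\ref{prop5} to $K\preceq J$ gives $\bar{K}\preceq \bar{J}$, so $\bar{K}$ is feasible in the definition of $\phi(\bar{J},\bar{L})$.

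Next I would decompose $\psi(\bar{K},\bar{L})$ into its three summands from \eqref{e129}. The terms $\ln\det(\bar{K}+N_1I)$ and $-(u+1)\ln\det(\bar{K}+N_1I+uI)$ depend only on the eigenvalues of $K$, hence coincide with the corresponding terms of $\psi(K,L)$. For the remaining term, note that $\bar{K}+N_1I+uI$ is itself the decreasing alignment of $K+N_1I+uI$, since adding a scalar multiple of $I$ preserves the decreasing ordering of diagonal entries. Proposition~\ref{prop4} then yields
\begin{align}
\ln\det((K+N_1I+uI)+L)\le \ln\det((\bar{K}+N_1I+uI)+\bar{L}),
\end{align}
so $\psi(\bar{K},\bar{L})\ge \psi(K,L)=\phi(J,L)$ and hence $\phi(\bar{J},\bar{L})\ge \phi(J,L)$. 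An identical application of Proposition~\ref{prop4} with $J$ in place of $K$ yields $\ln\det(\bar{J}+N_1I+uI+\bar{L})\ge \ln\det(J+N_1I+uI+L)$.

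Summing the two inequalities produces
\begin{align}
\ln\det(\bar{J}+N_1I+uI+\bar{L})+\phi(\bar{J},\bar{L})\ge \ln\det(J+N_1I+uI+L)+\phi(J,L)=f_d(q_1,q_2),
\end{align}
while feasibility forces the left-hand side to be at most $f_d(q_1,q_2)$, so every inequality above must be an equality. In particular $(\bar{J},\bar{L})$ attains the supremum in $f_d(q_1,q_2)$, and $\phi(\bar{J},\bar{L})=\psi(\bar{K},\bar{L})$, which says that $\bar{K}$ attains the supremum in $\phi(\bar{J},\bar{L})$. The one subtlety in the plan is keeping the three alignment choices compatible across both applications of Proposition~\ref{prop4}; this is automatic because adding a positive scalar matrix preserves the decreasing-order structure required there, so Proposition~\ref{prop4} applies uniformly to the pair $(\bar{J},\bar{L})$ and to the pair $(\bar{K}+N_1I+uI,\bar{L})$.
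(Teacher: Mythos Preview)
Your proof is correct and follows essentially the same route as the paper: you establish the chain $\phi(\bar{J},\bar{L})\ge\psi(\bar{K},\bar{L})\ge\phi(J,L)$ via Propositions~\ref{prop4}--\ref{prop5}, combine it with the log-determinant inequality for $(\bar{J},\bar{L})$, and then use optimality of $(J,L)$ to force all inequalities to collapse to equalities. Your version is slightly more explicit about feasibility and about why adding a scalar multiple of $I$ preserves the alignment structure needed for Proposition~\ref{prop4}, but the logical skeleton matches the paper's proof exactly.
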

\begin{proof}
From Proposition~\ref{prop5} and Proposition~\ref{prop4} we see that
\begin{align}
\phi(\bar{J},\bar{L})\ge\psi(\bar{K},\bar{L})\ge \phi(J,L). 
\label{e141}
\end{align}
Using Proposition~\ref{prop4} again we see that $\ln\det(\bar{J}+N_1I+uI+\bar{L})
+\phi(\bar{J},\bar{L})
\ge 
\ln\det(J+N_1I+uI+L)+\phi(J,L)$ and therefore $(\bar{J},\bar{L})$ must achieve the supremum in the definition of $f_d(q_1,q_2)$.
Next we reverse the signs of inequalities in \eqref{e141}:
Since $(J,L)$ achieves the supremum in the definition of $f_d(q_1,q_2)$, we have
\begin{align}
\ln(J+N_1I+uI+L)+\phi(J,L)
&=f_d(q_1,q_2)
\\
&\ge \ln(\bar{J}+N_1I+uI+\bar{L})
+\phi(\bar{J},\bar{L})
\end{align}
which combined with $\ln(J+N_1I+uI+L)\le \ln(\bar{J}+N_1I+uI+\bar{L})$ shows that $\phi(J,L)\ge\phi(\bar{J},\bar{L})$.
This implies that equalities are achieved in \eqref{e141}, hence $\bar{K}$ achieves the supremum in the definition of $\phi(\bar{J},\bar{L})$.
\end{proof}
One consequence of Proposition~\ref{prop6} is the following tensorization property:
\begin{corollary}
Given any positive integer $d$ and $q_1,q_2\in(0,\infty)$, we have $g_d(dq_1,dq_2)=dg_1(q_1,q_2)$,
\end{corollary}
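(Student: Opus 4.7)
The plan is to combine the alignment results of Proposition~\ref{prop6} with a simple averaging argument to deduce the tensorization identity. The core observation is that, after alignment, $f_d$ decomposes into a sum of $f_1$'s over coordinates, which reduces the $d$-dimensional optimization to a product-distribution optimization in one dimension.

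The first step is to establish the \emph{additive decomposition}
\begin{align*}
f_d(q_1,q_2)=\sup\Bigl\{\sum_{i=1}^d f_1(j_i,\ell_i)\colon j_i,\ell_i\ge 0,\ \sum_i j_i\le q_1,\ \sum_i \ell_i\le q_2\Bigr\}.
\end{align*}
By Proposition~\ref{prop6}, the supremum in $f_d(q_1,q_2)$ is attained (up to $\epsilon$) by simultaneously diagonal matrices $J=\diag(j_i)$ (decreasing) and $L=\diag(\ell_i)$ (increasing), with the optimal $K$ in $\phi(J,L)$ also diagonal in the same basis; Proposition~\ref{prop5} ensures that the constraint $K\preceq J$ becomes $k_i\le j_i$. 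Both $\ln\det(J+N_1I+uI+L)$ and $\psi(K,L)$ split additively over the diagonal entries, so $\phi(J,L)=\sum_i\phi_1(j_i,\ell_i)$. Since the scalar functional $f_1^{\mathrm{sep}}(j,\ell):=\ln(j+N_1+u+\ell)+\phi_1(j,\ell)$ is nondecreasing in each argument (using $\partial_\ell\psi_1\ge 0$ and the obvious monotonicity in $j$), we have $f_1(q_1,q_2)=f_1^{\mathrm{sep}}(q_1,q_2)$, yielding the stated decomposition.

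For the lower bound $g_d(dq_1,dq_2)\ge d\,g_1(q_1,q_2)$, I take any $(Q_1,Q_2)$ feasible for $g_1(q_1,q_2)$ and set $(Q_1^{(d)},Q_2^{(d)}):=(dQ_1,dQ_2)$, so that $\mathbb{E}[Q_j^{(d)}]\le dq_j$. Plugging $(j_i,\ell_i)\equiv(Q_1,Q_2)$ pointwise into the decomposition gives $f_d(dQ_1,dQ_2)\ge d\,f_1(Q_1,Q_2)$; taking expectations and the supremum over $(Q_1,Q_2)$ yields the bound. For the upper bound $g_d(dq_1,dq_2)\le d\,g_1(q_1,q_2)$, I take any feasible $(Q_1^{(d)},Q_2^{(d)})$, and for each realization $(q_1^{(d)},q_2^{(d)})$ select an $\epsilon$-maximizer $(j_i^\ast,\ell_i^\ast)$ in the decomposition. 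Let $I$ be uniform on $\{1,\dots,d\}$ independent of $(Q_1^{(d)},Q_2^{(d)})$, and define $\tilde Q_1:=j_I^\ast$, $\tilde Q_2:=\ell_I^\ast$. Then
\begin{align*}
\mathbb{E}[\tilde Q_j]=\mathbb{E}\Bigl[\tfrac1d\sum_i (\cdot)_i^\ast\Bigr]\le \tfrac1d\mathbb{E}[Q_j^{(d)}]\le q_j,
\end{align*}
so $(\tilde Q_1,\tilde Q_2)$ is feasible for $g_1(q_1,q_2)$, while
\begin{align*}
d\,\mathbb{E}[f_1(\tilde Q_1,\tilde Q_2)]=\mathbb{E}\Bigl[\sum_i f_1(j_i^\ast,\ell_i^\ast)\Bigr]\ge\mathbb{E}[f_d(Q_1^{(d)},Q_2^{(d)})]-\epsilon,
\end{align*}
and sending $\epsilon\to 0$ followed by the supremum gives the upper bound.

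The main obstacle is the measurable selection required in the upper bound: one must choose $(j_i^\ast,\ell_i^\ast)$ as a measurable function of the realization $(q_1^{(d)},q_2^{(d)})$. This is handled by a standard measurable selection theorem (or by discretizing the range of $(Q_1^{(d)},Q_2^{(d)})$ and passing to the limit using continuity of $f_d$ in its arguments). A minor technical point is the approximation when the supremum in Proposition~\ref{prop6} is not attained, which is absorbed into the $\epsilon$; everything else is direct bookkeeping around the alignment results.
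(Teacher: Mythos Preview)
Your proof is correct and follows the route the paper implicitly has in mind: the paper states the Corollary without proof, merely noting it is ``one consequence of Proposition~\ref{prop6}.'' Your argument—using Proposition~\ref{prop6} (with Propositions~\ref{prop4}--\ref{prop5}) to reduce $f_d$ to a sum of $f_1$'s over the diagonal entries, then passing to $g_d$ via the randomized-index averaging trick—is exactly the natural elaboration of that remark. Two minor comments: (i) the $\epsilon$-approximation is not actually needed, since the supremum in the definition of $f_d$ is attained by compactness of the trace-constrained PSD cone and continuity of $\phi$ (Berge's maximum theorem applied to $\phi(J,L)=\sup_{K\preceq J}\psi(K,L)$, noting $N_1>0$); (ii) the measurable selection issue you flag can be bypassed by observing that, by the monotonicity you already established, the $\epsilon$-maximizer can be taken to saturate the budget, i.e.\ $\sum_i j_i^\ast=Q_1^{(d)}$ and $\sum_i \ell_i^\ast=Q_2^{(d)}$, after which the selection is an explicit continuous function of $(Q_1^{(d)},Q_2^{(d)})$ via the diagonal alignment.
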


\begin{remark}
$f_d(dq_1,dq_2)=df_1(q_1,q_2)$ is not true in general.
\end{remark}

\begin{remark}
Proposition~\ref{prop6} goes slightly further than the literature \cite{nair2018invariance}: we not only show that the value of the supremum tensorizes, but also that \emph{all} the maximizers (the matrices achieving the supremum) must tensorize, which will be helpful in our Theorem~\ref{thm4}.
\end{remark}

\begin{theorem}\label{thm4}
Suppose that $(q_1,q_2)$ satisfies $f_d(q_1,q_2)=g_d(q_1,q_2)$,
$(J,L)$ achieves the supremum in the definition of $f_d(q_1,q_2)$, and $K$ achieves the supremum in the definition of $\phi(J,L)$.  If $K$ is nonzero then its eigenvalues are all upper bounded by $1+\sqrt{1+u}-N_1$.
\end{theorem}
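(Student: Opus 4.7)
The plan is to reduce the $d$-dimensional statement to the one-dimensional Lemma~\ref{lem5} applied in each diagonal coordinate. First, invoking Proposition~\ref{prop6}, I would replace $(J,L,K)$ by their aligned counterparts and assume $J=\diag(J_1,\dots,J_d)$ in decreasing order, $L=\diag(L_1,\dots,L_d)$ in increasing order, and $K=\diag(K_1,\dots,K_d)$ in decreasing order, all in a common basis. Under this alignment every quantity entering $f_d$ decouples additively: $\psi(K,L)=\sum_i \psi_1(K_i,L_i)$, $\phi(J,L)=\sum_i\phi_1(J_i,L_i)$, and
\[
f_d(q_1,q_2)=\sum_{i=1}^d f_1(J_i,L_i),
\]
where each $K_i$ is the optimizer of $\psi_1(\,\cdot\,,L_i)$ on $[0,J_i]$. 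Because $K$ is in decreasing alignment, the top eigenvalue of $K$ is $K_1$, and it is paired with the smallest $L_1$ and largest $J_1$.

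The crucial reduction is to upgrade the global hypothesis $f_d(q_1,q_2)=g_d(q_1,q_2)$ to the coordinatewise statement $f_1(J_i,L_i)=g_1(J_i,L_i)$ for every $i$. I would prove this by contrapositive: suppose $g_1(J_{i_0},L_{i_0})>f_1(J_{i_0},L_{i_0})$, realized by some random pair $(Q_1',Q_2')$ with $\mathbb{E}[Q_1']\le J_{i_0}$, $\mathbb{E}[Q_2']\le L_{i_0}$ and $\mathbb{E}[f_1(Q_1',Q_2')]>f_1(J_{i_0},L_{i_0})$. Define the $d$-dimensional random powers
\[
Q_1:=Q_1'+\sum_{j\neq i_0}J_j,\qquad Q_2:=Q_2'+\sum_{j\neq i_0}L_j,
\]
which respect $\mathbb{E}[Q_k]\le q_k$. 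Using the (trivial) super-additive lower bound $f_d(q_1',q_2')\ge \sum_k f_1(q_1'^{(k)},q_2'^{(k)})$ applied to the allocation that places $(Q_1',Q_2')$ on coordinate $i_0$ and keeps the other coordinates at $(J_j,L_j)$, I obtain
\[
\mathbb{E}[f_d(Q_1,Q_2)]\ge \mathbb{E}[f_1(Q_1',Q_2')]+\sum_{j\neq i_0}f_1(J_j,L_j)>f_d(q_1,q_2),
\]
contradicting $f_d(q_1,q_2)=g_d(q_1,q_2)$. Hence no coordinate can witness a strict gap.

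With the 1-D equality $f_1(J_i,L_i)=g_1(J_i,L_i)$ in hand, Lemma~\ref{lem5} applies in each coordinate where $J_i,K_i,L_i>0$ and yields $K_i+N_1\le 1+\sqrt{1+u}$. Since $K$ is in decreasing alignment, this bound on every positive eigenvalue immediately gives the claim for the largest eigenvalue, provided $L_1>0$. The ordering $L_1\le L_2\le\cdots\le L_d$ then gives the same bound for every coordinate at once, so all eigenvalues of $K$ are controlled.

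The main subtlety I expect is the boundary case $L_1=0$: the coordinate with the top $K_1$ is then exactly the one where $L_i=0$, so Lemma~\ref{lem5} (which requires $L>0$) is not directly applicable. In this regime $\psi_1(\cdot,0)$ is strictly increasing, so $K_1=J_1$, and one still has $f_1(J_1,0)=g_1(J_1,0)$ automatically since $f_1(J,0)=\ln(J+N_1)$ is concave. To recover the same inequality $-1/(J_1+N_1)^2+(1+u)/(J_1+N_1+u)^2\le 0$ from Lemma~\ref{lem5}'s Case~2, I would use the one-sided perturbation $t\mapsto(J_1+t,-t)$ along coordinate 1 for $t\le 0$, whose value $F(t)$ is trapped below the concave upper envelope $h^{*}(t):=g_d(q_1+t,q_2-t)$ with equality at $t=0$. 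Matching the one-sided derivatives via the KKT stationarity of $(J,L)$ (and if necessary by coupling with a compensating perturbation of $L_d$ on the other side $t>0$), the remaining second-derivative comparison $(h^{*})''(0)\ge F''(0)$ together with concavity of $h^{*}$ delivers $-1/J_1^2+(1+u)/(J_1+u)^2\le 0$, hence $K_1+N_1=J_1+N_1\le 1+\sqrt{1+u}$. This one-sided refinement is the technical heart of the argument and the only part that does not follow directly from the coordinatewise application of Lemma~\ref{lem5}.
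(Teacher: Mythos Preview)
Your approach is essentially the paper's. Both arguments (i) invoke Proposition~\ref{prop6} to pass to diagonal $\bar J,\bar K,\bar L$, (ii) establish the coordinatewise equality $f_1(\bar J_{ii},\bar L_{ii})=g_1(\bar J_{ii},\bar L_{ii})$, and (iii) apply Lemma~\ref{lem5} in coordinate $1$ (which suffices since $\bar K_{11}$ is the top eigenvalue). For step~(ii) the paper uses the compact sandwich
\[
\sum_i g_1(\bar J_{ii},\bar L_{ii})\;\le\; g_d(q_1,q_2)\;=\;f_d(q_1,q_2)\;\le\;\sum_i f_1(\bar J_{ii},\bar L_{ii})\;\le\;\sum_i g_1(\bar J_{ii},\bar L_{ii}),
\]
forcing equality termwise; your contrapositive construction is exactly the proof of the leftmost inequality here, so the two presentations are interchangeable. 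Your decomposition $f_d(q_1,q_2)=\sum_i f_1(J_i,L_i)$ is also what the paper uses (it is the middle inequality above, turned into an equality by the sandwich). The paper, incidentally, only checks coordinate $1$, which already bounds the largest eigenvalue; you do not need to run Lemma~\ref{lem5} in every coordinate.

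On the boundary case $\bar L_{11}=0$: you are right that Lemma~\ref{lem5} is stated with $L>0$ and its Case~2 perturbation $L_t=L-t$ needs room on both sides. The paper's proof simply invokes Lemma~\ref{lem5} without commenting on this, so your proposal is in fact more careful here than the paper itself. That said, your sketched one-sided fix (coupling with a compensating perturbation of $L_d$ and matching one-sided derivatives via KKT) is not fully worked out; if you want a clean resolution, note that when $\bar L_{11}=0$ one has $\bar K_{11}=\bar J_{11}$ and the relevant second derivative along $(J_1,L_1)\mapsto(J_1+t,L_1-t)$ equals $-\tfrac{1}{(\bar J_{11}+N_1)^2}+\tfrac{1+u}{(\bar J_{11}+N_1+u)^2}$, so the task is only to justify the one-sided inequality $\partial_t^2 f_1\le 0$ at the boundary, which follows from the concavity of $g_d$ after embedding the perturbation inside the $d$-dimensional trace constraints (borrow the needed slack in $L$ from a coordinate with $\bar L_{jj}>0$). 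This is simpler than the two-sided envelope argument you outline.
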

\begin{proof}
Let $\bar{J}$ and $\bar{K}$ be the decreasing alignments of $J$ and $K$, respectively, and let $\bar{L}$ be the increasing alignment of $L$.
We have
\begin{align}
\sum_{i=1}^dg_1(\bar{J}_{ii},\bar{L}_{ii})
&\le g_d(q_1,q_2)
\label{e144}
\\
&=f_d(q_1,q_2)
\label{e145}
\\
&\le\sum_{i=1}^df_1(\bar{J}_{ii},\bar{L}_{ii})
\label{e94}
\\
&\le \sum_{i=1}^dg_1(\bar{J}_{ii},\bar{L}_{ii})
\end{align}
where \eqref{e144} can be seen from the definition of $g_d$;
\eqref{e145} is by the assumption of the theorem;
\eqref{e94} follows since
Proposition~\ref{prop6}
shows that $(\bar{J},\bar{L})$ achieves the supremum in the definition of $f_d(q_1,q_2)$, and $\bar{K}$ achieves the supremum in the definition of $\phi(\bar{J},\bar{L})$.
Then equalities are therefore achieved throughout, and since $f_1\le g_1$ we must have $f_1(\bar{J}_{11},\bar{L}_{11})=g_1(\bar{J}_{11},\bar{L}_{11})$.
Since $\bar{K}$ achieves the supremum in the definition of $\phi(\bar{J},\bar{L})$,
$\bar{K}_{11}$ must achieve the supremum in the definition of $\phi(\bar{J}_{11},\bar{L}_{11})$.
Then by Lemma~\ref{lem5} we have $\bar{K}_{11}+N_1\le 1+\sqrt{1+u}$ unless $\bar{K}_{11}=0$.
\end{proof}

\section{Local Gaussian Optimality}\label{sec_local}
Results in the previous sections show that our counterexamples are not sufficient for establishing Gaussian suboptimality for the Han-Kobayashi (HK) region allowing power control.
Note that the counterexample in Corollary~\ref{cor1} relies on perturbation along a geodesic line in $L^2(\mathbb{R})$ and uses the order-3 Hermite polynomial; 
one might wonder if the result can be improved by perturbing along a more sophisticated ``curve'' and using lower order Hermite polynomials.
This is not possible, as the Hessian calculations already suggest. 
However, since the Hessian is not necessarily ``continuous'' with respect to the same metric it is calculated with, 
semidefiniteness of the Hessian at a single point does not provide a rigorous proof of 
local optimality.

In this section,
we show that in the (interior of the) regime where the Hessian at the Gaussian stationary point is negative definite, 
local maximum is indeed achieved.
The proof is based on showing that the Hessian is negative semidefinite in an $L^{\infty}(\gamma_K)$ neighborhood.

The $L^{\infty}(\gamma_K)$ metric appears to be natural for this setting; the celebrated
Holley-Stroock perturbation principle (see e.g.\ \cite[p1184]{holley}\cite{ledoux2001logarithmic}\cite{otto2000generalization})
provides a method of controlling the best constants in functional inequalities such as the Poincare inequality (also concerning a bound on the spectrum of a self-adjoint operator),
under perturbation in the $L^{\infty}$-norm.
This method is simple yet avoids assumption of bounds on higher derivatives.
Our argument may fail if other metrics are used (see Remark~\ref{rem9}).

Recall the functional $\Psi$ defined in \eqref{e115}, which played a role in the expression of the Han-Kobayashi region.
With an abuse of notation, in this section we shall define the following functional:
\begin{align}
\Psi(P_X,L)&:=\sup_{P_Y\colon \cov(P_Y)\preceq L}uh(P_X*\gamma_{uI}*P_Y)
+h(P_X)-(1+u)h(P_X*\gamma_{uI}),
\end{align}
where $u>0$ and positive integer $d$ are given,
$P_X$ is a distribution on $\mathbb{R}^d$, and $L$ is a $d\times d$ positive semidefinite matrix.
Here we have taken $N_1=0$, 
which is without loss of generality for the purpose proving local optimality, since the general case easily follows by taking $X_1+Z_1$ as the new $X_1$.
Also, we have taken the covariance of $Z_2$ to be $uI$, which is without loss of generality by a scaling argument.

We define the following as a simpler proxy for $\Psi(P_X,L)$:
\begin{align}
\Theta(P_X,L)&:=\frac{u}{2}\ln [(2\pi e)^d\det(\cov(P_X)+uI+L)]
+h(P_X)-(1+u)h(P_X*\gamma_{uI}).
\label{e155}
\end{align}
Note that $\Psi(P_X,L)\le\Theta(P_X,L)$, with equality achieved only if $P_X$ is Gaussian (the only part follows from the Levy-Cramer theorem \cite{cramer1936eigenschaft}).
Define
\begin{align}
\Phi(L)&:=
\sup_K\Theta(\gamma_K,L)
\label{e_145}
\end{align}
where the supremum is over all positive semidefinite matrices $K$ with the same dimensions as $L$.
Observe that
the supremum in \eqref{e_145} can be achieved if and only if the least eigenvalue of $L$ is strictly larger than 1;
in that case the maximizer is 
\begin{align}
K=(L-I)^{-1}(L+uI).
\label{e146}
\end{align}
Moreover, the Hessian at the maximizer $K$ is strictly negative definite:
Indeed, by rotation invariance of the log determinant function, 
we can assume without loss of generality that $L$ agrees with its increasing alignment (Definition~\ref{defn1}), in which case $K$ agrees with its decreasing alignment.
For $\Delta$ of the same dimensions as $K$ and whose Frobenius norm $\|\Delta\|$ is sufficiently small, we have
\begin{align}
\Theta(\gamma_{K+\Delta},L)
-\Theta(\gamma_K,L)
&\le
\Theta(\gamma_{K+\tilde{\Delta}},L)
-\Theta(\gamma_K,L)
\label{e147}
\\
&\le -\sum_{i=1}^da_i\tilde{\Delta}_{ii}^2+o(\|\Delta\|^2)
\\
&\le -\min_{1\le i\le d} a_i\|\Delta\|^2
+o(\|\Delta\|^2)
\label{e149}
\end{align}
where $\tilde{\Delta}$ is a diagonal matrix whose diagonal entries are some permutation of the eigenvalues of $\Delta$, 
\eqref{e147} follows from Proposition~\ref{prop3} ahead,
and $a_i>0$ is a function of $L_{ii}$. 
Thus strict negative definiteness of the Hessian is proved.

The main result of this section is the following:
\begin{theorem}\label{thm5}
Given $u>0$ and a positive semidefinite matrix $L$, suppose that $K$ is a maximizer in \eqref{e_145}.
Suppose that the top eigenvalue of $K$ is strictly smaller than $\frac{u}{(1+u)^{1/3}-1}$.
Then exist $\epsilon>0$ such that for all $P_X$ satisfying $\|P_X-\gamma_K\|_ {L^{\infty}(\gamma_K)}\le \epsilon$ and $\int xdP_X(x)=0$, we have
$\Theta(P_X,L)\le \Phi(L)$,
and consequently $\Psi(P_X,L)\le \Phi(L)$.
\end{theorem}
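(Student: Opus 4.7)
The plan is to expand $\Theta$ around $\gamma_K$ to second order and verify that the resulting quadratic form is strictly negative-definite, uniformly on the $L^{\infty}(\gamma_K)$-ball. Write $r := dP_X/d\gamma_K - 1$, so that $\|r\|_{L^{\infty}(\gamma_K)} \le \epsilon$, $\int r\,d\gamma_K = 0$, and $\int x\,r\,d\gamma_K = 0$. Let $\delta := \cov(P_X) - K = \int xx^\top r\,d\gamma_K$ (so $\|\delta\| = O(\epsilon)$), and set $\tilde r(y) := \mathbb{E}[r(X)\mid X+Z=y]$ for $Z \sim \gamma_{uI}$, so that $P_X * \gamma_{uI} = \gamma_{K+uI}(1+\tilde r)$. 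By Jensen's inequality $\|\tilde r\|_{\infty} \le \|r\|_{\infty}$, which is the key reason for choosing the $L^{\infty}(\gamma_K)$ topology: both relative densities are uniformly bounded in the neighborhood, so their logarithms admit pointwise Taylor remainders of Holley--Stroock type.

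Taylor-expanding $(1+r)\ln(1+r)$ with pointwise cubic remainder $O(\epsilon)\,r^2$ gives $h(P_X) = h(\gamma_K) + \tfrac{1}{2}\tr(K^{-1}\delta) - \tfrac{1}{2}\|r\|_{L^2(\gamma_K)}^2 + O(\epsilon \|r\|_{L^2(\gamma_K)}^2)$, and analogously for $h(P_X * \gamma_{uI})$. Combined with a second-order expansion of $\tfrac{u}{2}\ln\det(M+uI+L)$, the linear-in-$\delta$ terms cancel by the Gaussian stationarity condition defining $K$, leaving
\begin{align*}
\Theta(P_X, L) - \Phi(L) = -\tfrac{u}{4}\tr\bigl(((K{+}uI{+}L)^{-1}\delta)^2\bigr) - \tfrac{1}{2}\|r\|_{L^2(\gamma_K)}^2 + \tfrac{1+u}{2}\|\tilde r\|_{L^2(\gamma_{K+uI})}^2 + O\bigl(\epsilon\|r\|_{L^2(\gamma_K)}^2\bigr).
\end{align*}
Expand $r$ in the orthonormal Hermite basis of $\gamma_K$: $r = \sum_{|\alpha|\ge 2} c_\alpha\tilde H_\alpha^K$ (the $|\alpha|\le 1$ coefficients vanish by the probability and mean constraints). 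Working in the eigenbasis of $K$, the smoothing identity $\mathbb{E}[\tilde H_\alpha^K(X) \mid X+Z=y] = \prod_i(K_{ii}/(K_{ii}+u))^{\alpha_i/2}\tilde H_\alpha^{K+uI}(y)$ shows that the $|\alpha|=2$ components of $r$ are linearly equivalent to $\delta$, and that combining their contribution with the $-\tfrac{u}{4}\tr(\cdot)$ piece recovers exactly the negative-definite Hessian $\tfrac{1}{2}F''(K)[\delta,\delta] \le -c_1\|\delta\|^2$ of the Gaussian surrogate at its maximum $K$ (established in the preamble to the theorem). The $|\alpha|\ge 3$ contribution equals $\tfrac{1}{2}\sum_{|\alpha|\ge 3} c_\alpha^2\bigl[(1+u)\prod_i(K_{ii}/(K_{ii}+u))^{\alpha_i} - 1\bigr]$, and the eigenvalue hypothesis $\lambda_{\max}(K) < u/((1+u)^{1/3}-1)$ forces each bracket below some $-c_2 < 0$, yielding an upper bound of $-\tfrac{c_2}{2}\|r_{(\ge 3)}\|_{L^2(\gamma_K)}^2$.

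Adding these pieces and absorbing the $O(\epsilon)$ remainder (using $\|r\|_{L^2}^2 \le C(\|\delta\|^2 + \|r_{(\ge 3)}\|_{L^2}^2)$), one obtains $\Theta(P_X, L) - \Phi(L) \le -\tfrac{c_1}{2}\|\delta\|^2 - \tfrac{c_2}{4}\|r_{(\ge 3)}\|_{L^2(\gamma_K)}^2 \le 0$ for $\epsilon$ small enough, which proves the bound on $\Theta$; the inequality for $\Psi$ follows from $\Psi\le\Theta$ pointwise, a consequence of the Gaussian maximum entropy principle noted after \eqref{e155}. The main obstacle is the bookkeeping that separates the $|\alpha|=2$ Hermite components (in bijection with $\delta$, handled by the Gaussian surrogate Hessian) from the $|\alpha|\ge 3$ components (controlled by the eigenvalue hypothesis), so that the two negative contributions combine rather than cancel. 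The $L^{\infty}(\gamma_K)$ topology is essential precisely because it is preserved under convolution with a Gaussian via Jensen's inequality and because it gives pointwise control on $\ln(1+r)$ and $\ln(1+\tilde r)$; any weaker norm---for example $L^{\infty}$ with respect to Lebesgue measure---would fail to tame the cubic tails, consistently with the remark in the introduction in connection with \cite{eldan2020stability}.
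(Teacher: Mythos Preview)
Your proposal is correct and follows essentially the same approach as the paper: both use the $L^{\infty}(\gamma_K)$ control (preserved under Gaussian convolution via Jensen) to bound the second-order remainders, expand the perturbation in the Hermite basis, and split the resulting quadratic form into the $|\alpha|=2$ block (which is exactly the negative-definite Hessian of the Gaussian surrogate at $K$) and the $|\alpha|\ge 3$ block (made strictly negative by the eigenvalue hypothesis $\lambda_{\max}(K)<u/((1+u)^{1/3}-1)$). The only cosmetic difference is that the paper parametrizes by the linear path $p_t=\gamma_K+tr$, $t\in[0,1]$, and shows $\tfrac{d^2}{dt^2}\Theta(p_t,L)\le 0$ uniformly in $t$ (the $t$-dependence entering only through factors $(1\pm t\epsilon)^{-1}$), whereas you expand $\Theta(P_X,L)-\Phi(L)$ directly with an $O(\epsilon\|r\|_{L^2}^2)$ remainder; these are equivalent formulations of the same argument.
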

The result of Theorem~\ref{thm5} can be extended (with the same $\frac{u}{(1+u)^{1/3}-1}$ bound on the top eigenvalue) to the case where the supremum in \eqref{e_145} is restricted to $K$ satisfying $K\preceq J$, where $J$ is a positive semidefinite matrix that commutes with $L$; we omit the details of the analysis.
Together with Theorem~\ref{thm4},
Theorem~\ref{thm5} shows that if Gaussian distributions $(P_{UX_1},P_{X_2})$ is a Gaussian stationary point for the supremum in \eqref{e120}, 
then the expression to the right of the supremum in \eqref{e35} cannot be improved by local (in the sense described by the Theorem~\ref{thm5}) perturbation.
\begin{remark}\label{rem9}
The Theorem may fail if $L^{\infty}(\gamma_K)$ is replaced by other metrics. 
For example, take $d=1$ and consider the Gaussian mixture $P_X^{\epsilon}=(1-\epsilon)\gamma_K+\epsilon\mathcal{N}(0,\epsilon^{-2}K)$.
Then $\lim_{\epsilon\downarrow0}\|P_X^{\epsilon}-\gamma_K\|_{L^{\infty}(\mathbb{R})}=0$ but
$\lim_{\epsilon\downarrow0}\|P_X^{\epsilon}-\gamma_K\|_{L^{\infty}(\gamma_K)}=\infty$,
where $L^{\infty}(\mathbb{R})$ denotes the $L^{\infty}$-norm with respect to the Lebesgue measure.
Using the chain rule of entropy we can show that $|h(P_X^{\epsilon})-h(P_X)|=O(\epsilon\ln \frac1{\epsilon})$ and $|h(P_X^{\epsilon}*\gamma_u)-h(P_X*\gamma_u)|=O(\epsilon\ln \frac1{\epsilon})$, but $\var(P_X^{\epsilon})=\var(P_X)+\Theta(\epsilon^{-1})$.
Therefore $\Theta(P_X^{\epsilon},L)> \Phi(L)$ for small $\epsilon$.
We remark that a similar counterexample has been previously proposed for the stability of the log-Sobolev inequality \cite{eldan2020stability}.
In fact, \eqref{e155} is closely related to the stability of log-Sobolev inequality, since given $P_X$,
\begin{align}
\lim_{u\to\infty}\frac1{u}[h(P_X)-(1+u)h(P_X*\gamma_{uI})]
&=-\frac1{2}J(P_X)-h(P_X) 
\end{align}
where $J(P_X)$ denotes the Fisher information (see e.g.\ \cite{cover1999elements}).
\end{remark}

\begin{proof}[Proof of Theorem~\ref{thm5}]
Pick an arbitrary $r$ such that $\|r\|_{L^{\infty}(\gamma_K)}\le \epsilon$ ($\epsilon<1$ to be chosen later), $\int r=0$, and $\int xr=0$.
Define $p_t:=\gamma_K+tr$ and $q_t:=p_t*\gamma_{uI}$, where $t\in[0,1]$. 
By assumption $K$ is the maximizer in \eqref{e146} and hence a stationary point,  calculations show that  
$\frac{d}{dt}\Theta(p_t,L)|_{t=0}=0$ (even though $p_t$ is not necessarily Gaussian).
Thus to prove the theorem it remains show the negativity of the second derivative for $t\in[0,1]$.
We have
\begin{align}
\frac{d}{dt}\cov(p_t)
&=\frac{d}{dt}\int xx^{\top}p_t
\\
&=\int xx^{\top}r
\end{align}
and 
$
\frac{d^2}{dt^2}\cov(p_t)
=-2\int xr\cdot\int x^{\top}r
$.
Therefore
\begin{align}
\frac{d}{dt}\ln\det(\cov(p_t)+uI+L)
=\tr\left(
(\cov(p_t)+uI+L)^{-1}\frac{d}{dt}\cov(p_t)
\right)
\end{align}
and
\begin{align}
&\quad\frac{d^2}{dt^2}\ln\det(\cov(p_t)+uI+L)
\nonumber
\\
&=-\tr\left((\cov(p_t)+uI+L)^{-1}
\frac{d}{dt}\cov(p_t)
(\cov(p_t)+uI+L)^{-1}
\frac{d}{dt}\cov(p_t)
\right)
\\
&\le -\frac1{(1+t\epsilon)^2}\tr\left(
(K+uI+L)^{-1}\int xx^{\top}r
(K+uI+L)^{-1}\int xx^{\top}r
\right)
\label{e154}
\end{align}
where we used the fact that 
$\cov(p_t)=\int xx^{\top}p_t\le \int xx^{\top}(1+t\epsilon)\gamma_K\le (1+t\epsilon)K$.
Moreover, from \eqref{e_ent} we obtain
\begin{align}
\frac{d^2}{dt^2}h(p_t)&=-\int\frac{r^2}{p_t}
\\
&\le -\frac1{1+t\epsilon}\int\frac{r^2}{\gamma_K}
\label{e156}
\end{align}
and similarly
\begin{align}
&\quad\frac{d^2}{dt^2}h(q_t)
\nonumber\\
&=-\int\frac{(r*\gamma_{uI})^2}{p_t*\gamma_{uI}}
\\
&\ge
-\frac1{1-t\epsilon}\int\frac{(r*\gamma_{uI})^2}{\gamma_{K+uI}}
\label{e158}
\end{align}
where we used $(1-t\epsilon)\gamma_K(x)\le p_t(x)\le (1+t\epsilon)\gamma_K(x)$,
for all $x\in\mathbb{R}^d$.

In the rest of the proof assume without loss of generality that $K$ is diagonal with $k_1,\dots,k_d$ being the diagonal entries.
Consider the expansion $r=\sum_{\alpha\in\{0,1,\dots\}^d}\lambda_{\alpha}D^{\alpha}\gamma_K$, and plug it into the right sides of \eqref{e154},\eqref{e156},\eqref{e158}.
It is easy to see using \eqref{e_herm} that we have
\begin{align}
\frac{d^2}{dt^2}\Theta(p_t,L)\le \sum_{\alpha}I_{\alpha}
-\frac1{2(1+t\epsilon)^2}\tr\left(
(K+L)^{-1}\Lambda
(K+L)^{-1}\Lambda
\right),
\end{align}
where the notations above are explained as follows:
$\Lambda:=[(1+\delta_{ij})\lambda_{e_i+e_j}]_{i,j}$, with $e_i\in\{0,1,\dots\}^d$ being the vector with only the $i$-th coordinate equal to 1 and the other $d-1$ coordinates equal to 0;
moreover for each multi-index $\alpha=(\alpha_1,\dots,\alpha_d)\in\{0,1,\dots\}^d$ we defined 
\begin{align}
I_{\alpha}
=
-\frac1{1+t\epsilon}\lambda_{\alpha}^2\prod_{i=1}^d\frac{\alpha_i!}{k_i^{\alpha_i}}
+\frac{1+u}{1-t\epsilon}
\lambda_{\alpha}^2\prod_{i=1}^d\frac{\alpha_i!}{(k_i+u)^{\alpha_i}}.
\end{align}
For each multi-index $\alpha$ (viewed as a vector in $\mathbb{R}^d$) we define its norm $|\alpha|$ as the sum of its coordinates.
Then since $\int r=0$ and $\int xr=0$,
for $|\alpha|\le 1$ we must have $\lambda_{\alpha}=0$ and hence $I_{\alpha}=0$.
Now we claim that if  $\epsilon$ is chosen small enough (the choice depending only on $u,K,L$ and not on $(\lambda_{\alpha})$), we have
\begin{align}
\sum_{\alpha\colon |\alpha|=2}I_{\alpha}
-\frac1{2(1+t\epsilon)^2}\tr\left(
(K+L)^{-1}\Lambda
(K+L)^{-1}\Lambda
\right)\le 0
\label{e161}
\end{align}
and 
\begin{align}
I_\alpha\le 0, \quad\forall 
\alpha\colon|\alpha|\ge 3.
\label{e162}
\end{align}
Clearly these claims would establish $\frac{d^2}{dt^2}\Theta(p_t,L)\le 0$ for $t\in[0,1]$ and hence the conclusion of the theorem.

To see \eqref{e161}, it suffices to show that
\begin{align}
\frac{(1+\epsilon)^2}{1-\epsilon}
\le 
\inf_{(\lambda_{\alpha})}\frac{\sum_{|\alpha|=2}\lambda_{\alpha}^2\prod_{i=1}^d\frac{\alpha_i!}{k_i^{\alpha_i}}+\frac1{2}\tr((K+L)^{-1}\Lambda(K+L)^{-1}\Lambda)}{(1+u)\sum_{|\alpha|=2}\lambda_{\alpha}^2\prod_{i=1}^d\frac{\alpha_i!}{(k_i+u)^{\alpha_i}}}
\label{e163}
\end{align}
where the infimum is over all $(\lambda_{\alpha})_{\alpha\colon|\alpha|=2}$ such that the denominator in \eqref{e163} is positive.
Since both the numerator and the denominator in \eqref{e163} are 2-homogeneous in $(\lambda_{\alpha})_{\alpha\colon|\alpha|=2}$ (note that $\Lambda$ is also a function of $(\lambda_{\alpha})_{\alpha\colon|\alpha|=2}$), the infimum can be restricted to a compact set, and hence it can be achieved by some $(\lambda_{\alpha})_{\alpha\colon|\alpha|=2}$.
By simple calculations we can show that the strict negativity of the Hessian in the Gaussian optimization problem \eqref{e163}
is in fact equivalent to the right side of the infimum in \eqref{e163} being strictly larger than 1 for each $(\lambda_{\alpha})$ (to see this, consider a perturbation of $K$ in the direction of $\Lambda$ and compute the second derivatives). 
Therefore \eqref{e163} and hence \eqref{e161} is true for all $\epsilon\le \epsilon_1$ where $\epsilon_1>0$ is some constant depending only on $u, K$, and $L$.

To see \eqref{e162}, it suffices to show that 
\begin{align}
\frac{1-\epsilon}{1+\epsilon}
\ge (1+u)\prod_{i=1}^d\left(\frac{k_i}{k_i+u}\right)^{\alpha_i},\quad\forall \alpha\colon|\alpha|=3.
\label{e164}
\end{align}
The assumption of $k_i<\frac{u}{(1+u)^{1/3}-1}$ implies that the right side of \eqref{e164} is strictly less than 1, and hence \eqref{e164} holds for all  $\epsilon\le \epsilon_2$ where $\epsilon_2$ depends only on $u,K,L$. The claim of the theorem then follows by taking $\epsilon=\min\{\epsilon_1,\epsilon_2\}$.
\end{proof}

\section{A Geometric Inequality Analogous to Lemma~\ref{lem2}}\label{sec_geometric}
In this section 
we discuss an analogous convex geometric inequality with a non-isotropic extremal, which may be of independent interest.
The similarities between entropic and geometric inequalities are well-known (e.g.\ \cite{dembo1991information}\cite{aras2021sharp})
and sometimes can be employed to prove new results in network information theory \cite{liu2022minoration}.
Intuitively, differential entropy can be seen as analogous to the logarithmic volume of a set,
and the sum of random variables is analogous to the Minkowski sum of sets.
Therefore, we may consider an analogue of the inequality in Conjecture~\ref{conj1} by replacing the entropy of independent sum with the log volume of the Minkowski sum.

Recall that the meanwidth of a convex body $C\subseteq\mathbb{R}^d$ is defined by 
\begin{align}
W(C):=\mathbb{E}[\sup_{u\in C}\left<u,X\right>-\inf_{u\in C}\left<u,X\right>]
\end{align}
where $X$ has a uniform distribution on the unit sphere in $\mathbb{R}^d$.
We will replace the power constraint in the entropic inequality by a meanwidth constraint for a convex body.
\begin{theorem}\label{thm2}
Let $B$ be the ball of radius $1/2$ in $\mathbb{R}^d$ (so that its meanwidth equals 1).
The maximum of 
\begin{align}
\frac{\vol^{1/2}(K+B+L)\vol^{1/2}(K)}{\vol(K+B)}
\label{e1}
\end{align}
over all convex body $K$ and all convex body $L$ satisfying the meanwidth bound $W(L)\le 1$ is strictly larger than 1. 
On the other hand, if we restrict $L=B$, then the supremum of the same quantity equals 1.
\end{theorem}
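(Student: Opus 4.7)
For the restricted case $L = B$ (the easy assertion), I would invoke the Brunn--Minkowski inequality directly: concavity of $t \mapsto \vol(K+tB)^{1/d}$ gives $\vol(K+B)^{1/d} \ge \tfrac{1}{2}\vol(K)^{1/d} + \tfrac{1}{2}\vol(K+2B)^{1/d}$, and AM--GM then yields $\vol(K+B)^2 \ge \vol(K)\vol(K+2B)$, so the ratio is at most $1$. Taking $K = R\tilde{B}^d$ (with $\tilde{B}^d$ the unit ball) and letting $R \to \infty$ makes the ratio tend to $1$, so the supremum equals $1$.

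For the main assertion, the key intuition is that equality in Brunn--Minkowski is attained only when $K$ and $B$ are homothetic balls, so breaking the spherical symmetry of $K$ in a direction transverse to $L$ should create slack. I would work in $d \ge 2$ (the $d = 1$ case is easily seen to fail since all bodies are intervals) and take $K = [-M,M]e_1 + R\tilde{B}^d$ (a $d$-dimensional stadium elongated along $e_1$) and $L = [-\ell/2, \ell/2]e_2$ (a segment perpendicular to the long axis of $K$), choosing $\ell := dV_d/(2V_{d-1})$ so that $W(L) = 1$; here $V_j$ denotes the volume of the unit ball in $\mathbb{R}^j$, and I am using $W(L) = \ell\,\mathbb{E}|X_1|$ together with the classical identity $\mathbb{E}|X_1| = 2V_{d-1}/(dV_d)$ for $X$ uniform on $S^{d-1}$. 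Setting $r := R+1/2$, a tube formula applied to the 2D rectangle $C := [-M,M]\times[-\ell/2,\ell/2]$ embedded in $\mathbb{R}^d$, Minkowski-summed with $r\tilde{B}^d$ (derived by slicing $\mathbb{R}^d = \mathbb{R}^2 \times \mathbb{R}^{d-2}$ and integrating $\vol_2(C + s\tilde{B}^2) = 2M\ell + 2s(2M+\ell) + \pi s^2$ over $s = \sqrt{r^2 - |y|^2}$), produces
\begin{align*}
\vol(K) &= V_d R^d + 2M V_{d-1} R^{d-1}, \\
\vol(K+B) &= V_d r^d + 2M V_{d-1} r^{d-1}, \\
\vol(K+B+L) &= V_d r^d + (2M+\ell)V_{d-1} r^{d-1} + 2M\ell V_{d-2} r^{d-2}.
\end{align*}

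Sending $M \to \infty$ with $R$ fixed, the ratio $\vol(K+B+L)\vol(K)/\vol(K+B)^2$ tends to $(R/r)^{d-1} + \ell V_{d-2} R^{d-1}/(V_{d-1}r^d)$; expanding at large $R$ reduces the strict exceedance of $1$ to the inequality $\ell V_{d-2}/V_{d-1} > (d-1)/2$, equivalently $dV_d V_{d-2} > (d-1)V_{d-1}^2$. Using $V_j = \pi^{j/2}/\Gamma(j/2+1)$ one obtains the identity $dV_d V_{d-2}/V_{d-1}^2 = 2[\Gamma((d+1)/2)/\Gamma(d/2)]^2$, and Gautschi's inequality $\Gamma(x+1/2) \ge \sqrt{x}\,\Gamma(x)$ at $x = d/2$ gives $dV_d V_{d-2}/V_{d-1}^2 \ge d > d-1$, finishing the argument upon taking $R$ and then $M$ large enough. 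The main obstacle is the tube/Steiner formula for a codimension-2 rectangle in $\mathbb{R}^d$; once the three volumes are correctly expressed, the remaining steps are asymptotic bookkeeping and a classical gamma-function estimate.
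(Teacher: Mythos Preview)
Your construction differs substantially from the paper's. The paper works only in $d=2$, taking $K$ to be a large square and $L$ a scaled copy of that square rotated by $\pi/4$ (with meanwidth normalized to $1$); the computation is then elementary plane geometry and reduces to the numerical fact $\pi\sqrt{2}>4$. Your stadium-plus-perpendicular-segment construction handles all $d\ge 2$ uniformly via a tube formula, which is more general (the theorem is stated in $\mathbb{R}^d$) and makes transparent why a non-round $L$ can outperform the ball when $K$ is elongated. The restricted case $L=B$ is handled essentially the same way in both proofs via Brunn--Minkowski.

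There is, however, a slip at the very end: Gautschi's inequality goes the other way, namely $\Gamma(x+\tfrac12)<\sqrt{x}\,\Gamma(x)$ for $x>0$ (e.g.\ at $x=1$ one has $\sqrt\pi/2<1$), so your claimed bound $dV_dV_{d-2}/V_{d-1}^2\ge d$ is false. The inequality you actually need, $dV_dV_{d-2}>(d-1)V_{d-1}^2$, is nonetheless true, and a clean fix is simply to apply Gautschi with the correct substitution: set $y=(d-1)/2$, so that $\Gamma((d+1)/2)/\Gamma(d/2)=\Gamma(y+1)/\Gamma(y+\tfrac12)$ and the target becomes $\bigl[\Gamma(y+1)/\Gamma(y+\tfrac12)\bigr]^2>y$; this is exactly the Gautschi--Wendel bound $\Gamma(y+1)/\Gamma(y+\tfrac12)>\sqrt{y}$ in its correct direction. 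With this repair your argument is complete.
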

\begin{proof}
We first prove the claim for $L=B$. Note that by the Brunn-Minkowski inequality ($\vol(\alpha A+\beta B)\ge \vol^{\alpha}(A)\vol^{\beta}(B)$), we see that \eqref{e1} is bounded above by 1.
Moreover, \eqref{e1} approaches $1$ if we take $K=tK$ where $K$ is any convex body and $t\to\infty$.

Next, we show that the supremum can be strictly larger than 1 without the restriction that $L=B$.
Let us consider the case of $k=2$ for simplicity. (See Remark~\ref{rem1} for more general settings.)
Let $K$ be the unit cube (similar idea actually works for any convex body $K$).
Let $L$ be $\frac{\pi}{4}K$ rotated by $\pi/4$.
Note that the meanwidths
\begin{align}
W(L)=W(B)=1.
\end{align}
However, for $\epsilon$ small we have
\begin{align}
\vol(K+\epsilon L)&=1+4\epsilon\cdot\frac{\pi\sqrt{2}}{4}+O(\epsilon);
\\
\vol(K+\epsilon B)&=1+4\epsilon +O(\epsilon^2).
\end{align}
Therefore $\vol(K+\epsilon L)>\vol(K+\epsilon B)$ for small enough $\epsilon>0$.
Now let $K=tK$ and let $t\to\infty$. We have
\begin{align}
\frac{\vol(K+B)}{\vol(K)}&=1+4t^{-1}+O(t^{-2});
\\
\frac{\vol(K+B+L)}{\vol(K+B)}&=1+\pi\sqrt{2}t^{-1}+O(t^{-2}).
\end{align}
Therefore \eqref{e1} equals
\begin{align}
1+    \frac{\pi\sqrt{2}-4}{2}t^{-1}+O(t^{-2})
\end{align}
which is positive for $t$ large enough.
\end{proof}
\begin{remark}\label{rem1}
More generally, consider the supremum of 
\begin{align}
\frac{\vol^u(K+B+L)\vol(K)}{\vol^{1+u}(K+B)}
\end{align}
where $u>0$, and suppose that the convex bodies are in $\mathbb{R}^d$.
In the above argument we can consider $K$ to be a cube of growing size and let $L$ be the cross-polytope with diameter of order $\sqrt{\frac{d}{\ln d}}$ (so that the mean-width is order of a constant).
Then
\begin{align}
\frac{\vol(K+B)}{\vol(K)}&=1+\Theta(dt^{-1});
\\
\frac{\vol(K+B+L)}{\vol(K+B)}&\approx
\frac{\vol(K+L)}{\vol(K)}
\\
&=1+\Theta(dt^{-1}\sqrt{\frac{d}{\ln d}}).
\end{align}
Therefore the supremum is positive if $dt^{-1}<<udt^{-1}\sqrt{\frac{d}{\ln d}}$,
or equivalently $u>>\sqrt{\frac{\ln d}{d}}$.
On the other hand, the supremum over ball $L$ is 0 iff $u\le 1$. 
Therefore in the regime of 
$u$ between $1$ and $\omega(\sqrt{\frac{\ln d}{d}})$ the supremum is not achieved by balls.
\end{remark}

\section{Discussion}
\label{sec_discussion}
It seems that simple amendments of Conjecture~\ref{conj1}, such as taking the concave envelope in $X_2$ or restricting to parameters in the stability regime, no longer fulfill its original purpose of implying Gaussian optimality for the HK region.
However, it is still possible to imply Gaussian optimality for some parts of the region.
For example, the special case of Conjecture~\ref{conj1} in \cite{beigi2016some} concerning Costa's corner point still appears to be a valid approach towards the slope of the corner point.
Here we slightly generalize the observation therein to a larger part of the rate region.
\begin{theorem}\label{thm8}
Suppose that $N_1,N_2,u,q_1,q_2\in(0,\infty)$ are such that the maximum of
\begin{align}
u h(X_1+X_2+Z_1+Z_2)
+h(X_1+Z_1)-
(1+u) h(X_1+Z_1+Z_2)
\end{align}
over independent one dimensional random variables $X_1,X_2\in\mathbb{R}^d$,
$\mathbb{E}[\|X_i\|_2^2]\le dq_i$, $i=1,2$ is achieved by isotropic Gaussian distributions (possibly degenerate with zero covariance) for any positive integer $d$. 
Then the HK bound \eqref{e120} with Gaussian inputs is tight for the weighted sum rate $R_1+(1+u)R_2$.
\end{theorem}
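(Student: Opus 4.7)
The plan is to combine the tightness of the multi-letter Han-Kobayashi bound \cite{han1981new} with the assumed isotropic Gaussian optimality of $\Psi$ in every dimension. By multi-letter tightness, the weighted sum-rate capacity equals $\sup_{d \ge 1} \frac{1}{d}\sup F_u(P_{X_1},P_{X_2})$, where the inner supremum is over $d$-dimensional $(X_1,X_2)$ with $\mathbb{E}[\|X_j\|^2] \le dq_j$. The strategy is to show, for each $d$, that this value is bounded above by $d$ times the single-letter Gaussian HK value at constant powers $(q_1, q_2)$; combined with achievability of that value by the single-letter Gaussian HK scheme, this yields the theorem.

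Expanding $F_u = h(X_1 + X_2 + Z_1 + Z_2) - h(Z_1) + \mathcal{C}_{X_1}[\Psi]$, I would bound the outer entropy term using coordinate sub-additivity of differential entropy, the Gaussian maximum entropy principle per coordinate, and Jensen's inequality on $\log$, giving $h(X_1+X_2+Z_1+Z_2) - h(Z_1) \le d\bigl[h(\gamma_{q_1+q_2+N_1+N_2}) - h(\gamma_{N_1})\bigr]$. The key step is to bound the concave envelope as $\mathcal{C}_{X_1}[\Psi] \le d \max_{K_1 \le q_1,\,K_2 \le q_2}\Psi(\gamma_{K_1},\gamma_{K_2})$. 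Writing $\mathcal{C}_{X_1}[\Psi] = \sup_U \mathbb{E}_U[\Psi(P_{X_1|U},P_{X_2})]$ and assuming first that $U$ is discrete with rational probabilities $p_u$ (continuous $U$ handled by approximation), I would perform a blockwise tensorization: for a large positive integer $n$, form the $nd$-dimensional vector $\tilde{X}_1$ by concatenating $np_u$ independent copies drawn from $P_{X_1|U=u}$ for each $u$, and $\tilde{X}_2$ as $n$ i.i.d.\ copies of $X_2$. By additivity of $\Psi$ over independent blocks, $\Psi(P_{\tilde{X}_1},P_{\tilde{X}_2}) = n \mathbb{E}_U[\Psi(P_{X_1|U},P_{X_2})]$, and the power constraints become $\mathbb{E}[\|\tilde{X}_j\|^2] \le nd\,q_j$, matching the same per-letter budget $(q_1, q_2)$. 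Invoking the hypothesis in dimension $nd$ then gives $\Psi(P_{\tilde{X}_1},P_{\tilde{X}_2}) \le nd\max_{K_1,K_2}\Psi(\gamma_{K_1},\gamma_{K_2})$, and division by $n$ yields the required envelope bound.

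Combining the two bounds, $\frac{1}{d} F_u \le h(\gamma_{q_1+q_2+N_1+N_2}) - h(\gamma_{N_1}) + \max_{K_1 \le q_1,\,K_2 \le q_2}\Psi(\gamma_{K_1},\gamma_{K_2})$. A direct differentiation shows $\Psi(\gamma_{K_1},\gamma_{K_2})$ is strictly increasing in $K_2$ (the only $K_2$-dependence enters through $u\, h(\gamma_{K_1+K_2+N_1+N_2})$), so the maximum is attained at $K_2 = q_2$ and the upper bound collapses to the single-letter Gaussian HK value $F_u(\gamma_{q_1},\gamma_{q_2})$ at constant powers. Since this value is achievable with Gaussian HK inputs, the weighted sum-rate capacity equals the Gaussian HK bound, proving the theorem. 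The main obstacle will be the tensorization step for the concave envelope: carefully justifying the blockwise embedding, handling continuous-valued $U$ via discretization with controlled approximation of the entropy terms, and confirming that the hypothesis transfers legitimately from dimension $d$ to dimension $nd$ with the same $(q_1,q_2)$ per-letter budget.
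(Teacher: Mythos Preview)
Your approach is correct but takes a noticeably different route from the paper. The paper does \emph{not} start from the multi-letter HK achievable bound and never touches the concave envelope $\mathcal{C}_{X_1}[\Psi]$. Instead, it begins from the basic Fano converse
\[
R_1+(1+u)R_2\le \lim_{d\to\infty}\frac1d\sup\{I(X_1;Y_1)+(1+u)I(X_2;Y_2)\},
\]
expands $I(X_1;Y_1)+(1+u)I(X_2;Y_2)=h(Y_2)+\bigl[uh(Y_2)-(1+u)h(Y_2|X_2)+h(Y_1)\bigr]-h(Z_1)$, recognizes the bracket as $\Psi(P_{X_1},P_{X_2})$ (since $h(Y_2|X_2)=h(X_1+Z_1+Z_2)$), and applies the isotropic hypothesis directly in dimension $d$. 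This sidesteps entirely the auxiliary $U$, the envelope, and your blockwise tensorization step.

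Your route reaches the same endpoint, but the detour through $F_u$ and $\mathcal{C}_{X_1}$ costs you real work: you must argue that in the limit $d\to\infty$ the power-control variable $Q$ and the envelope become redundant, then embed the $U$-mixture into $\mathbb{R}^{nd}$ to invoke the hypothesis, with the attendant rational-weight discretization and approximation of continuous $U$. All of that is sound (your additivity and power-budget bookkeeping are correct), but it is machinery the paper avoids by choosing the right starting inequality. The payoff of the paper's approach is brevity and no approximation arguments; the payoff of yours is that it stays entirely within the HK framework rather than switching to the Fano outer bound, which may feel more self-contained if one is already working with $F_u$.
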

\begin{proof}
The proof is essentially the same as \cite[Lemma~4]{beigi2016some};
we can split the optimization problem into two, which achieve the supremum simultaneously:
\begin{align}
&\quad R_1+(1+u)R_2
\nonumber\\
&\le 
\lim_{d\to \infty}\frac1{d}
\sup
\left\{I(X_1;Y_1)+(1+u)I(X_2;Y_2)
\right\}
\label{e200}
\\
&\le \lim_{d\to \infty}\frac1{d}
\left\{\sup h(Y_2)
+\sup[uh(Y_2)-(1+u)h(Y_2|X_2)+h(Y_1)]-h(Y_1|X_1)\right\}
\label{e201}
\\
&\le \frac1{2}\ln\frac{q_1+q_2+N_1+N_2}{N_1}+\frac1{2}\sup_{K\le q_1}\psi(K,q_2)
\label{e202}
\end{align}
where the suprema are over independent $X_1,X_2\in\mathbb{R}^d$ satisfying $\mathbb{E}[\|X_i\|_2^2]\le dq_i$, $i=1,2$; $Y_1$ and $Y_2$ are defined in \eqref{e_y1}-\eqref{e_y2};
$\psi(K,q_2)$ was defined in \eqref{e129} (with dimension one);
\eqref{e200} follows by Fano's inequality;
\eqref{e202} follows by the assumption of Theorem~\ref{thm8}.
Note that \eqref{e202} is upper bounded by the HK bound \eqref{e120} (in dimension one with Gaussian inputs), therefore the claim of the theorem follows.
\end{proof}
The set of $(N_1,N_2,u,q_1,q_2)$ satisfying the assumption in Theorem~\ref{thm8} always falls in the setting of Lemma~\ref{lem5}, 
and the latter always falls in the regime of stable Gaussian stationary points.
The set is not empty, as the sufficient condition in \cite{gohari2021information} shows.
A more precise characterization of this set remains an open question.
In particular, a concrete new conjecture can be formulated as follows:
\begin{conjecture}\label{conj2}
The set of $(N_1,N_2,u,q_1,q_2)$ satisfying the isotropic condition in Theorem~\ref{thm8} equals the set of $(N_1,N_2,u,q_1,q_2)$ for which 1-letter HKGS without power control matches 1-letter HKGS with power control.
\end{conjecture}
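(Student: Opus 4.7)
The plan is to prove Conjecture~\ref{conj2} by establishing both inclusions of the two sets. Let $\mathcal{A}$ denote the parameters where the isotropic Gaussian is optimal in the non-Gaussian $d$-dimensional entropic expression of Theorem~\ref{thm8} (for every $d$), and $\mathcal{B}$ the parameters where $f_1(q_1,q_2)=g_1(q_1,q_2)$.

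The inclusion $\mathcal{A}\subseteq\mathcal{B}$ is essentially immediate from Theorem~\ref{thm8}: if isotropic Gaussians achieve the non-Gaussian supremum, then the weighted-sum-rate upper bound constructed in \eqref{e200}--\eqref{e202} is attained by isotropic Gaussian signaling without any power control, so the HK bound, which is sandwiched between $f_1$ and $g_1$ by construction, must coincide with both; combined with the trivial $g_1\ge f_1$ this yields $f_1=g_1$.

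The reverse inclusion $\mathcal{B}\subseteq\mathcal{A}$ is the main substance. Assume $f_1(q_1,q_2)=g_1(q_1,q_2)$. First I would invoke Theorem~\ref{thm4} (or its scalar case Lemma~\ref{lem5}) to deduce that any Gaussian maximizer $K$ in $\phi(J,L)$ satisfies the eigenvalue bound $K+N_1 I\preceq (1+\sqrt{1+u})I$. Since $1+\sqrt{1+u}<\frac{u}{(1+u)^{1/3}-1}$ for every $u>0$, this places us strictly inside the stability regime of Theorem~\ref{thm5}, which then shows that the isotropic Gaussian stationary point is a local maximizer under the $L^\infty(\gamma_K)$ topology. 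The remaining step is to promote this local optimality to \emph{global} optimality over all distributions of $(X_1,X_2)$, so that the isotropic Gaussian realizes the non-Gaussian supremum in Theorem~\ref{thm8} for every $d$, lifting to the vector case via the tensorization provided by Proposition~\ref{prop6}.

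The hard part will be this local-to-global step. Because the counterexamples of Sections~\ref{sec2}--\ref{sec_horizontal} demonstrate sharp failures of Gaussian optimality just outside the stability boundary, any successful argument must be essentially tight to that boundary. Two routes look promising: (i) a quantitative Levy-Cramer stability estimate (cf.\ Remark~\ref{rem_lc} and \cite{bobkov2016regularized}) lower-bounding the entropy deficit $h(\gamma_K)-h(P_{X_1})+(1+u)[h(P_{X_1}*\gamma_u)-h(\gamma_{K+u})]$ in terms of a Gaussian-regularized non-Gaussianness of $P_{X_1}$, sharp enough to dominate the convex gain term $u[h(P_{X_1}*\gamma_u*P_{X_2})-h(\gamma_{K+u+L})]$ throughout the stable regime; or (ii) a homotopy argument, continuously deforming parameters from a sub-regime where global Gaussian optimality is already known (e.g.\ the large-$u$ regime of \cite{gohari2021information} corresponding to Costa's corner point) across the interior of $\mathcal{B}$, while tracking the Gaussian maximizer via the $L^\infty(\gamma_K)$-local-optimality of Theorem~\ref{thm5} along the path. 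Either route requires estimates beyond the Hessian-level analysis of this paper, and the Wasserstein gradient-flow viewpoint of Section~\ref{sec_horizontal} may supply the missing displacement-convexity handle needed to globalize the bound.
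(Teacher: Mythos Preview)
This statement is presented in the paper as an open \emph{conjecture}, not as a theorem; the paper does not prove it and explicitly remarks that establishing it ``would prove the conjecture about the slope at Costa's point'' in \cite{beigi2016some}, i.e.\ would settle another open problem. There is therefore no ``paper's own proof'' to compare your proposal against.

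Your handling of the inclusion $\mathcal{A}\subseteq\mathcal{B}$ is fine and matches what the paper states in one line (``The first mentioned set in Conjecture~\ref{conj2} is contained in the second, by the proof of Theorem~\ref{thm8}''). The genuine gap is in the reverse inclusion $\mathcal{B}\subseteq\mathcal{A}$. Your outline correctly uses Lemma~\ref{lem5}/Theorem~\ref{thm4} to place the Gaussian maximizer strictly inside the stability regime, and then Theorem~\ref{thm5} to obtain \emph{local} optimality in the $L^{\infty}(\gamma_K)$ topology. But the step you label ``promote this local optimality to global optimality'' is precisely the content of the conjecture, and neither of your two proposed routes is an argument: route~(i) asks for a quantitative L\'evy--Cram\'er stability estimate that would have to be tight at the boundary $K=\tfrac{u}{(1+u)^{1/3}-1}$ (no such estimate is known, and the constructions of Sections~\ref{sec2}--\ref{sec_horizontal} show how delicate this is), while route~(ii) presupposes that local optimality can be continued along a homotopy without the global maximizer escaping the $L^{\infty}(\gamma_K)$-neighborhood, which is exactly what is unproved. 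You yourself concede that ``either route requires estimates beyond the Hessian-level analysis of this paper.'' As written, then, the proposal is a restatement of why the conjecture is plausible together with two research directions, not a proof.
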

In Conjecture~\ref{conj2}, 1-letter HKGS means the Han-Kobayashi bound \eqref{e120} in dimension one with Gaussian signaling, and with/without power control means $(Q_1,Q_2)$ in \eqref{e120} is random/constant.
The first mentioned set in Conjecture~\ref{conj2} is contained in the second, by the proof of Theorem~\ref{thm8}. 
Proving Conjecture~\ref{conj2}
would prove the conjecture about the slope at Costa's point in \cite{beigi2016some}. 
On the other hand, disproving Conjecture~\ref{conj2} would disprove the optimality of HKGS with power control for the whole capacity region.

\section{Acknowledgement}
The author would like to thank Professors Ramon van Handel,
Chandra Nair, and Max Costa for inspiring discussions and comments on the manuscript,
which were helpful in improving the presentations and the formulation of Conjecture~\ref{conj2}.
This work was supported by the start-up grant at the Department of Statistics, University of Illinois at Urbana-Champaign.

\bibliographystyle{plainurl}
\bibliography{ref}
\end{document}